\pgfplotsset{compat=1.18}
\newtheorem{theorem}{Theorem}
\newtheorem{lemma}[theorem]{Lemma}
\newtheorem{corollary}[theorem]{Corollary}
\newtheorem{proposition}[theorem]{Proposition}
\newtheorem{definition}[theorem]{Definition}
\newtheorem{introtheorem}{Theorem} 
\newtheorem{introlemma}[introtheorem]{Lemma}
\newtheorem*{introquestion}{Open Question}
\newtheorem{introproposition}[introtheorem]{Proposition}
\theoremstyle{definition}
\newtheorem*{definition*}{Definition}
\newtheorem*{problem*}{Problem}
\newtheorem*{assumption*}{Assumption}
\newtheorem{remark}[theorem]{Remark}
\newtheorem*{warning*}{Warning}
\newcommand{\ip}[2]{\langle #1|#2\rangle}
\newcommand{\braket}[2]{\langle #1|#2\rangle}
\newcommand{\ket}[1]{|#1\rangle}
\newcommand{\bra}[1]{\langle#1|}
\newcommand{\norm}[1]{\lVert #1\rVert}
\newcommand{\oo}{\infty}
\newcommand{\ox}{\otimes}
\newcommand{\mc}{\mathcal}
\newcommand{\eps}{\varepsilon}
\newcommand{\III}{{\mathrm{III}}}
\newcommand{\II}{{\mathrm{II}}}
\newcommand{\up}[1]{^{(#1)}}
\DeclareMathOperator{\tr}{Tr}
\DeclareMathOperator{\Tr}{Tr} 
\renewcommand{\tilde}{\widetilde}
\renewcommand{\hat}{\widehat}
\newcommand{\hide}[1]{}
\def\A{{\mc A}}
\def\B{{\mc B}}
\def\CC{{\mathbb C}}
\def\H{{\mc H}}
\def\K{{\mathcal K}}
\def\M{{\mc M}}
\def\N{{\mc N}}
\def\U{{\mc U}}
\def\RR{{\mathbb R}}
\def\NN{{\mathbb N}}
\renewcommand\P{\mc P}
\newcommand{\placeholder}[0]{{\,\cdot\,}}
\title{Multipartite Embezzlement of Entanglement}
\author{Lauritz van Luijk\,\orcidlink{0000-0003-3153-549X}, Alexander Stottmeister\,\orcidlink{0000-0002-0145-0877}, Henrik Wilming\,\orcidlink{0000-0002-0306-7679}}
\affiliation{Leibniz Universit\"at Hannover, Institut f\"ur Theoretische Physik,  \\ Appelstraße 2, 30167 Hannover, Germany}
\date{July 28th, 2025}
\begin{document}

\maketitle

\begin{abstract}
Embezzlement of entanglement refers to the task of extracting entanglement from an entanglement resource via local operations and without communication while perturbing the resource arbitrarily little. Recently, the existence of embezzling states of bipartite systems of type III von Neumann algebras was shown.
However, both the multipartite case and the precise relation between embezzling states and the notion of embezzling families, as originally defined by van Dam and Hayden, were left open. Here, we show that finite-dimensional approximations of multipartite embezzling states form multipartite embezzling families.
In contrast, not every embezzling family converges to an embezzling state.
We identify an additional consistency condition that ensures that an embezzling family converges to an embezzling state.
This criterion distinguishes the embezzling family of van Dam and Hayden from that of Leung, Toner, and Watrous.
The latter generalizes to the multipartite setting.
By taking a limit, we obtain a multipartite system of commuting type III$_1$ factors on which every state is an embezzling state.
We discuss our results in the context of quantum field theory and quantum many-body physics.
As open problems, we ask whether vacua of relativistic quantum fields in more than two spacetime dimensions are multipartite embezzling states and whether multipartite embezzlement allows for an operator-algebraic characterization.
\end{abstract}
\tableofcontents
\null

\section{Introduction}

Entanglement is the property of quantum states that is impossible to create with local operations.
However, it turns out that entanglement can be \emph{embezzled} using local operations: In \cite{van_dam2003universal}, van Dam and Hayden discovered that there exist families of finite dimensional, pure, bipartite entangled states $\Omega_n$ (on systems $A,B$ and with Schmidt rank $n$) shared between Alice and Bob such that any finite-dimensional, pure, bipartite, entangled state $\Psi_{A'B'}\in\CC^d\ox\CC^d$ may be extracted from them while perturbing the original state arbitrarily little:
\begin{align}
    u_{AA'}u_{BB'} \,\Omega_{n,AB}\ox \ket 1_{A'}\ket 1_{B'} \approx_\eps \Omega_{n,AB} \ox \Psi_{A'B'}
\end{align}
with $\eps\to 0$ as $n\to \oo$. Here, $u_{AA'},u_{BB'}$ are suitably local unitary operators acting on the indicated subsystems. 
In other words, Alice and Bob can transform the product state $\ket 1_{A'}\ket 1_{B'}$ into an arbitrary entangled state, while perturbing the \emph{embezzlement resource} $\Omega_{n,AB}$ arbitrarily little. 
Equivalently, for any pair of bipartite entangled pure states $\Psi_{A'B'},\Phi_{A'B'}\in\CC^d\otimes\CC^d$ and every $\eps>0$ there exist local unitaries $u_{AA'},u_{BB'}$ and a sufficiently large $n$ such that
\begin{align}
    u_{AA'}u_{BB'} \,\Omega_{n,AB}\ox \Phi_{A'B'} \approx_\eps \Omega_{n,AB} \ox \Psi_{A'B'}.
\end{align}
In a sense, the embezzlement resource state acts like a catalyst, see \cite{datta_catalysis_2023,lipka-bartosik_catalysis_2024} for reviews on catalysis in quantum information theory.  

Embezzlement has found numerous applications in quantum information theory \cite{Bennett_2014,berta_quantum_2011,Leung2019,coladangelo_two-player_2020,Leung2013coherent,regev_quantum_2013,dinur_parallel_2015,cleve_perfect_2017,coladangelo_two-player_2020}.
General properties of embezzlement families such as the one of van Dam and Hayden have been studied in detail in \cite{leung_characteristics_2014,zanoni_complete_2024}.

A family of states that can be used for embezzlement of entanglement that is quite distinct from that used by van Dam and Hayden has been constructed by Leung, Toner and Watrous (LTW) \cite{Leung2013coherent}. The LTW embezzling family is also known as `universal construction' of embezzling families, as variants of it allow to construct embezzling families for large classes of resource theories, see \cite{lipka-bartosik_catalysis_2024} for a comprehensive discussion. 
However, it is less efficient than the van Dam-Hayden (vDH) family because the required Hilbert space dimension grows much quicker as the error decreases. 
A first natural question arises: 
\begin{enumerate}
    \item Is there a fundamental difference between the vDH family and the LTW family?
\end{enumerate}

It was recently shown that in bipartite systems with infinitely many degrees of freedom, modeled by two commuting von Neumann algebras $\M_A$ and $\M_B=\M_A'$\footnote{If $\M$ is a von Neumann algebra on $\H$, its commutant $\M'$ is defined as $\M' = \{x\in \B(\H)\ :\ [x,a]=0,\ \forall a\in \M\}.$} acting on a joint Hilbert space $\H$, it can happen that \emph{every single density matrix} $\rho$ on the bipartite Hilbert space $\H$ is embezzling \cite{long_paper,short_paper}: For every $d\in\NN$, every two pure bipartite quantum state $\Psi,\Phi \in \CC^d\otimes \CC^d$, and every $\eps>0$, there exist local unitaries $u_A\in \M_A\otimes M_d(\CC)$ and $u_B\in \M_B\otimes  M_d(\CC)$ such that
\begin{align}
    \norm{u_A u_B(\rho \otimes |\Phi\rangle\!\langle \Phi|)u_A^* u_B^* - \rho\otimes |\Psi\rangle\!\langle\Psi|}_1 < \eps.
\end{align}
We call any state $\rho$ on $\H$ for which the above holds an \emph{embezzling state}. Bipartite quantum systems on which all density matrices are embezzling states are called \emph{universal embezzlers}. Whether embezzling states exist is determined by the type classification of the von Neumann algebra $\M_A$ (and $\M_B$) \cite{long_paper,short_paper}. In particular, only type III von Neumann algebras can host embezzling states, and bipartite systems are universal embezzlers if and only if $\M_A$ is type III$_1$. 
A natural second question arises:
\begin{enumerate}[resume] 
\item What is the relation between embezzling states and embezzling families?
\end{enumerate}
In particular, it is natural to wonder whether embezzling states can, in any meaningful sense, arise as limits of embezzling families.

Finally, since the LTW family naturally generalizes to the multipartite setting, a third question immediately suggests itself:
\begin{enumerate}[resume]
    \item Are there multipartite quantum systems with infinitely many degrees of freedom which host \emph{multipartite embezzling states}?
\end{enumerate}
The purpose of this paper is to answer these questions and to clarify the relation between embezzling families and embezzling states.

\paragraph{Acknowledgments.} We thank Matthias Christandl and Reinhard F.\ Werner for interesting discussions. AS and LvL have been funded by a Stay Inspired Grant of the MWK Lower Saxony (Grant ID: 15-76251-2-Stay-9/22-16583/2022). The publication of this article was funded by the Open Access Fund of the Leibniz Universität Hannover.

\paragraph{Declarations.}
The authors have no conflicts of interest to declare.
No data was generated or processed in this work.

\paragraph{Notation and conventions.}
Throughout, we work in the setting of von Neumann algebras on separable Hilbert spaces. A von Neumann algebra $\M$ on a Hilbert space $\H$ is a weakly closed, unital *-algebra of bounded operators on $\H$. We typically suppress the Hilbert space. $\M$ is a factor if it has a trivial center: $\M \cap \M' = \CC 1$. 
We also adopt the convention that all tensor products between von Neumann algebras refer to the spatial tensor product, denoted by $\otimes$.
In the following, all operator algebras are assumed to be unital, and inclusions of operator algebras $\M\subset\N$ are unital unless specified otherwise.
We denote the scalar product of Hilbert space vectors $\Psi,\Phi\in\H$ as $\langle\Psi|\Phi\rangle$ and denote by $\{\ket{j}\}_{j=1}^{d}$ the canonical basis vectors on $\CC^d$.
For an operator $a$ on $\H$ and a vectors $\Psi,\Phi\in\H$ we sometimes write $\bra\Psi a\ket\Phi$ for $\braket{\Psi}{a\Phi}$.

\section{Overview of results and discussion}

In this section, we survey and discuss our main results. 
Throughout this paper, we model quantum systems using von Neumann algebras: Observables are described by the self-adjoint part of a von Neumann algebra $\M$. Quantum states are given by normal states on $\M$, i.e., ultraweakly continuous, positive, linear functionals $\omega:\M\to\CC$ with $\omega(1)=1$.
The expectation value of an observable $a$ in a state $\omega$ is given by $\omega(a)$.

We begin by setting up a proper framework to describe $N$-partite systems with infinitely many degrees of freedom: For $x\in[N]:=\{1,\ldots,N\}$, we consider collections of $N$ commuting von Neumann algebras $\M_x$ generating a von Neumann algebra $\M_{[N]} = \vee_x \M_x$.\footnote{$\M\vee\N$ denotes the von Neumann algebra generated by $\M$ and $\N$.}
This framework encompasses diverse situations, such as many-body systems in the thermodynamic limit that are partitioned into multiple infinite parts or relativistic quantum field theories with spacetime partitioned into several causally closed parts. 

A finite-dimensional $N$-partite quantum system described by the Hilbert space $(\CC^d)^{\otimes N}$ is simply given by $\M_{[N]} = \otimes_{x=1}^N M_d(\CC)$, i.e., each subsystem is described by one copy of $M_d(\CC)\subset \M_{[N]}$. 
Now consider a general $N$-partite system $(\M_x)_{x\in[N]}$ and a normal state $\omega$ on $\M_{[N]}$. We say that $\omega$ is a multipartite embezzling state if for every two pure states $\psi$ and $\phi$ on the $N$-partite system on $(\CC^d)^{\otimes N}$ and every $\eps>0$ there exist unitaries $u_x \in \M_x\ox M_d(\CC)$ such that
\begin{align}
    \norm{u(\omega\ox \phi)u^* - \omega\ox \psi} < \eps,\qquad u = \prod_x u_x.
\end{align}
Whenever the local von Neumann algebras $\M_x$ are generated by a sequence of increasing von Neumann algebras $\M^{(n)}_x$ (e.g., finite subchains of increasing length of an infinite spin-chain), we obtain a sequence of multipartite systems $(\M^{(n)}_x)_{x\in[N]}$ and associated marginal states $\omega^{(n)}$. 
Our first main result is that in such a situation, every embezzling state induces embezzling families:
\begin{introtheorem}[informal, see cp.~\cref{thm:state-to-family}]\label{introthm:state-to-family} 
    If $\omega$ is an embezzling state as above, the family of multipartite states $(\omega^{(n)})_n$ on $(\M_x^{(n)})_{x\in[N]}$ is a \emph{consistent embezzling family}: For all $d\in\NN$ and $\eps>0$ there exists $n=n(d,\eps)$, such that for every two pure states $\psi$ and $\phi$ on the $N$-partite system on $(\CC^d)^{\otimes N}$ there exist unitaries $u_x \in \M_x^{(n)}\ox M_d(\CC)$ such that
    \begin{align}
       \norm{u(\omega^{(m)}\ox \phi)u^* - \omega^{(m)}\ox \psi} < \eps,\qquad u = \prod_x u_x,\quad m\geq n. 
    \end{align}
\end{introtheorem}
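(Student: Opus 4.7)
The plan is to descend from the embezzling property of $\omega$ on the full algebras $\M_x$ down to the finite stages $\M_x^{(n)}$ via Kaplansky density, and then extract a single $n$ valid uniformly in $(\psi,\phi)$ by a compactness argument on the finite-dimensional pure-state space. Concretely, we realize $\omega$ as a vector state $\omega_\Omega$ for some $\Omega \in \H$ via the GNS construction, so that $\omega \otimes \phi$ is implemented by $\Omega \otimes \ket\phi$ in $\H \otimes (\CC^d)^{\otimes N}$; with this, norm convergence of the state transformations $v(\omega\otimes\phi)v^* \to u(\omega\otimes\phi)u^*$ reduces to norm convergence $v(\Omega\otimes\ket\phi) \to u(\Omega\otimes\ket\phi)$ in the underlying Hilbert space, which is exactly what strong convergence of unitaries delivers.

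Fix $d\in\NN$ and $\eps>0$, and let $(\psi,\phi)$ be any pair of pure states on $(\CC^d)^{\otimes N}$. The embezzling property of $\omega$ yields local unitaries $u_x \in \M_x \otimes M_d(\CC)$ with $\|u(\omega\otimes\phi)u^* - \omega\otimes\psi\| < \eps/2$ for $u = \prod_x u_x$ (the factors commuting by $[\M_x,\M_y]=0$ for $x\neq y$ and the tensor structure on the ancillas). Since each $\M_x$ is the weak-operator closure of the increasing union $\bigcup_n \M_x^{(n)}$, Kaplansky's density theorem applied to $\M_x \otimes M_d(\CC)$ produces a unitary $v_x \in \M_x^{(n_x)} \otimes M_d(\CC)$ for some $n_x$ approximating $u_x$ in strong operator topology on any prescribed finite set of vectors. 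Taking this finite set to contain the vectors appearing in the telescoping expansion of $\prod_x v_x - \prod_x u_x$ applied to $\Omega \otimes \ket\phi$, the triangle inequality gives $\|v(\Omega\otimes\ket\phi) - u(\Omega\otimes\ket\phi)\|$ arbitrarily small, and hence $\|v(\omega\otimes\phi)v^* - \omega\otimes\psi\| < \eps$ once $n_0 := \max_x n_x$ is large enough.

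For uniformity in $(\psi,\phi)$, the set $\P_d$ of pairs of pure density matrices on $(\CC^d)^{\otimes N}$ is compact in trace norm, and for every fixed unitary $v \in \M_{[N]}\otimes M_d(\CC)^{\otimes N}$ the map $(\psi,\phi)\mapsto \|v(\omega\otimes\phi)v^* - \omega\otimes\psi\|$ is continuous. The open sets $\U_v := \{(\psi,\phi)\in\P_d : \|v(\omega\otimes\phi)v^* - \omega\otimes\psi\| < \eps\}$, as $v$ ranges over products $\prod_x v_x$ with $v_x \in \M_x^{(n)}\otimes M_d(\CC)$ for some $n$, cover $\P_d$ by the previous paragraph. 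A finite subcover $\U_{v^{(1)}},\ldots,\U_{v^{(k)}}$ suffices; taking $n = n(d,\eps)$ large enough that each $v^{(i)}_x$ lies in $\M_x^{(n)}\otimes M_d(\CC)$ yields the required stage, and for each $(\psi,\phi)\in\P_d$ some $v^{(i)}$ realizes the $\eps$-embezzlement.

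For the compatibility with $m \geq n$, the inclusion $\M_x^{(n)}\subseteq\M_x^{(m)}$ keeps each $v^{(i)}$ inside $\M_{[N]}^{(m)}\otimes M_d(\CC)^{\otimes N}$, and the restriction of the functional $v^{(i)}(\omega\otimes\phi)v^{(i)*} - \omega\otimes\psi$ to that subalgebra equals $v^{(i)}(\omega^{(m)}\otimes\phi)v^{(i)*} - \omega^{(m)}\otimes\psi$ (because $v^{(i)}$ commutes with the restriction); since functional norms are monotone non-increasing under restriction to subalgebras, the $\eps$-bound descends to every $m\geq n$. The main obstacle is the uniformity step: the embezzling property supplies unitaries only pair-by-pair, so extracting a single $n$ requires exploiting both compactness of $\P_d$ (which is where the finite dimensionality of the ancillas enters) and continuity of the state transformation in $(\psi,\phi)$. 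The Kaplansky approximation itself is routine, but it is important to fix the vector realization of $\omega$ at the outset in order to convert strong convergence of unitaries into norm convergence of the induced states.
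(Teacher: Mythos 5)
Your argument is correct and is essentially the paper's own route (the paper defers to the bipartite proof of \cite[Thm.~35]{van_luijk_critical_2024}): approximate the local embezzling unitaries by unitaries from the finite stages --- your Kaplansky step is precisely \cref{lem:unitary_closure}, which supplies unitary (not merely contractive) strong${}^*$ approximants --- then get a single $n(d,\eps)$ by compactness of the pairs of pure states, and pass to all $m\ge n$ by monotonicity of functional norms under restriction to $\M^{(m)}_{[N]}\ox M_d(\CC)^{\ox N}$. No substantive difference from the paper's (omitted) proof.
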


Importantly, this result states that an embezzling state induces an embezzling family in a particularly strong sense: The unitaries necessary to achieve a given error $\eps$ and dimension $d$ for the target states only require a minimal `size' of the resource and remain valid if the latter is enlarged. It is this property that we refer to as \emph{(quasi-)consistency} (see \cref{def:consistent} for the distinction).

A direct corollary of \cref{introthm:state-to-family} is as follows: Consider a bipartite embezzling ground state of a many-body system in the thermodynamic limit or the vacuum of a relativistic quantum field theory (these exist, see, for example, \cite{long_paper,van_luijk_critical_2024}). 
A priori, it would seem that to embezzle a state, it may be necessary for both agents to act on the entirety of their infinite subsystems. The theorem implies that, for any fixed precision, the agents, in fact, only have to act on a finite subsystem. Moreover, we can even show that the agents don't have to access the infinite-volume ground state or the exact vacuum to exploit the embezzlement phenomenon: A sufficiently good approximation, e.g., via finite-size ground states \cite{van_luijk_critical_2024} or scaling limits \cite{morinelli2021scaling_limits_wavelets, osborne2023cft_lattice_fermions}, can serve as an embezzling state for a given error $\eps$ and dimensionality $d$ of target states (see \cref{prop:mbzfam_quasi}).
\begin{figure}
\centering
\includegraphics{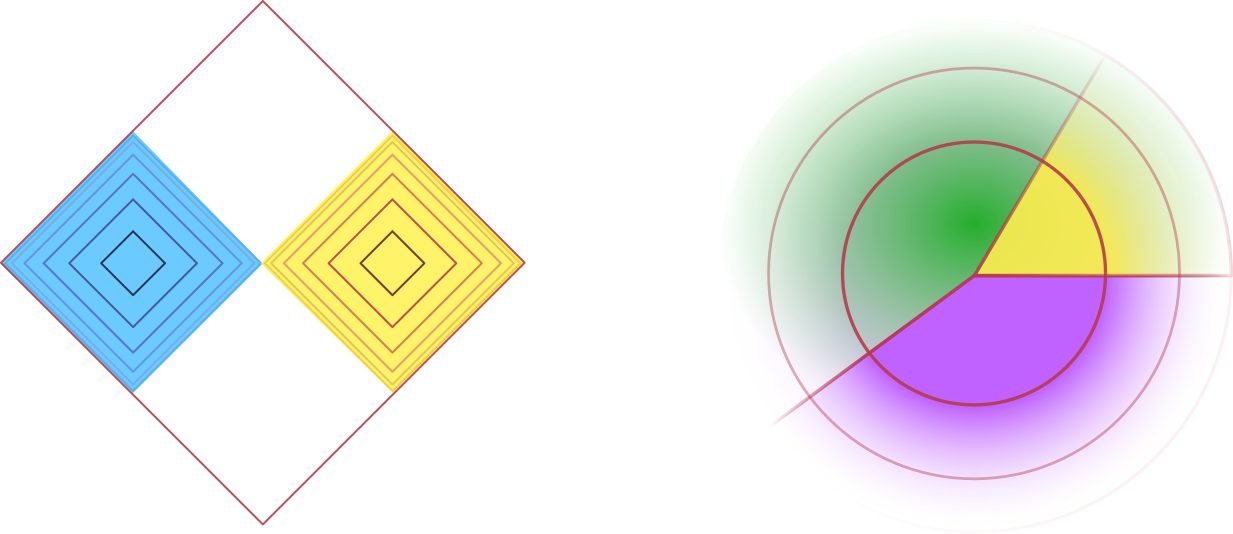}
\caption{Application of \cref{introthm:state-to-family}. \emph{Left:} The vacuum sector of a relativistic quantum field theory on Minkowski space, depicted by a Penrose diagram, is considered as bipartite system with respect to the left and right Rindler wedges $\mathcal W_L$ and $\mathcal W_R$. For each wedge $\mathcal W_i$, we can choose an increasing sequence $\mathcal C\up{n}_i\subset \mathcal W_i$ of double cones that asymptotically exhausts the wedge. Since the vacuum state $\Omega$ is a bipartite embezzler, its restrictions $\omega\up n$ to the subsystems $\mathcal C\up{n}_L\cup \mathcal C\up{n}_R$ are a bipartite embezzling family. \emph{Right:} A piece-of-cake subdivision of the plane into cones (colors). If the ground state of a many-body system on a 2D lattice is a tripartite embezzler with respect to the subdivision, one obtains a tripartite embezzling family by intersecting each cone with balls of increasing radius. The same would be true for quantum field theories. Our main open question is whether such multipartite embezzling states arise in many-body physics or quantum field theory.}
\label{fig:illustration}
\end{figure}

We can also ask and answer the converse question: Suppose we have a consistent embezzling family on $\omega^{(n)}$ on multipartite systems $(\M^{(n)}_x)_{x\in[N]}$, meaning that $\omega^{(m)} = \omega^{(n)}\restriction \M^{(m)}_{[N]}$ if $m\leq n$. Does this imply that a multipartite embezzling state exists as a limiting object?

\begin{introtheorem}[informal, cp.~\cref{thm:consistent-to-state}]\label{introthm:family-to-state} Let $\omega^{(n)}$ be a consistent embezzling family on an increasing sequence of $N$-partite systems $(\M^{(n)}_x)_{x\in[N]}$. Then there exists an $N$-partite system $(\M_x)_{x\in[N]}$ with an embezzling state $\omega$ such that $\M^{(n)}_x \subset \M$ and $\omega \restriction \M^{(n)}_{[N]} = \omega^{(n)}$. 
\end{introtheorem}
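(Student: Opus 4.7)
The plan is to build $\omega$ and $(\M_x)$ as an inductive limit via the GNS construction, and then to verify embezzlement by reducing the predual-norm bound on the limit system to the finite-$n$ bounds supplied by the consistency of the family. First I would form the C*-inductive limit $\A_{[N]} := \overline{\bigcup_n \M^{(n)}_{[N]}}^{\norm{\cdot}}$ together with its subalgebras $\A_x := \overline{\bigcup_n \M^{(n)}_x}^{\norm{\cdot}}$. By consistency, the $\omega^{(n)}$ glue to a single positive unital contractive functional on the pre-C*-algebra $\bigcup_n \M^{(n)}_{[N]}$ and extend uniquely to a state $\omega$ on $\A_{[N]}$. Let $(\pi_\omega, \H_\omega, \Omega)$ be its GNS triple and define $\M_x := \pi_\omega(\A_x)''$ and $\M_{[N]} := \bigvee_x \M_x$.

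The required commutation $\M_x \subset \M_y'$ for $x\neq y$ follows because $\pi_\omega(\M^{(n)}_x)$ commutes elementwise with $\pi_\omega(\M^{(n)}_y)$ for every $n$, and the double commutant preserves commutation. The vector state $\langle \Omega|\cdot\,\Omega\rangle$ is a normal extension of $\omega$ to $\M_{[N]}$ whose restriction to $\M^{(n)}_{[N]}$ (identified via $\pi_\omega$) equals $\omega^{(n)}$.

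For embezzlement, fix $d$, $\eps > 0$, and pure states $\phi, \psi$ on $(\CC^d)^{\otimes N}$. By the consistent embezzling family property applied with tolerance $\eps/2$, there exist $n$ and unitaries $u_x \in \M^{(n)}_x \otimes M_d(\CC) \subset \M_x \otimes M_d(\CC)$ such that, with $u := \prod_x u_x$,
\[
\norm{u(\omega^{(m)}\otimes\phi)u^* - \omega^{(m)}\otimes\psi} < \eps/2 \quad \text{for all } m \geq n.
\]
Set $\chi := u(\omega\otimes\phi)u^* - \omega\otimes\psi \in (\M_{[N]}\otimes M_d(\CC))_*$. Since $\bigcup_{m \geq n}\M^{(m)}_{[N]}\otimes M_d(\CC)$ is a strongly dense *-subalgebra of $\M_{[N]}\otimes M_d(\CC)$, Kaplansky's density theorem gives that the predual norm of $\chi$ is the supremum of $\lvert\chi(a)\rvert$ over $a$ in the norm-unit ball of this subalgebra. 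For such $a \in \M^{(m)}_{[N]}\otimes M_d(\CC)$ ($m\geq n$), the element $u^*au$ remains in the same subalgebra (because $u$ lives there), and $\omega$ restricts to $\omega^{(m)}$ on it. Therefore $\chi(a) = \bigl(u(\omega^{(m)}\otimes\phi)u^* - \omega^{(m)}\otimes\psi\bigr)(a)$, whose modulus is at most $\eps/2$. Taking the supremum yields $\norm{\chi}\leq \eps/2 <\eps$.

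The main obstacle is the Kaplansky-density step: one must verify, in the concrete representation on $\H_\omega\otimes\CC^d$, that $\bigcup_m (\M^{(m)}_{[N]}\otimes M_d(\CC))$ is strongly dense, and that the predual norm of a normal functional can be read off from its values on the norm-unit ball of this SOT-dense *-subalgebra. Everything else is essentially bookkeeping with inductive limits of C*-algebras, the GNS construction, and the restriction identities $\omega\!\restriction\!\M^{(n)}_{[N]} = \omega^{(n)}$.
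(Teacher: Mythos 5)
Your proposal follows essentially the same route as the paper's proof of \cref{thm:consistent-to-state}: a GNS construction on the union algebra $\bigcup_n\M^{(n)}_{[N]}$ with the glued state, local von Neumann algebras obtained as double commutants of the images, and the embezzlement estimate transferred to the limit by computing the predual norm of $u(\omega\ox\phi)u^*-\omega\ox\psi$ on the dense union, exactly exploiting (quasi-)consistency. The only substantive difference is at the density step: where you invoke Kaplansky density, the paper restricts to self-adjoint contractions and uses a norm-preserving lifting lemma (\cref{lem:norm_lift}) together with a normality argument for $\pi\restriction\M^{(n)}_{[N]}$ (\cref{lem:local_normal}) --- the latter is the one point your sketch leaves implicit, since it is needed to identify each $\M^{(n)}_{[N]}$ with a von Neumann subalgebra of the limit and to know that the unit ball of the image is the image of the unit ball.
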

In fact the resulting multipartite system $(\M_x)_{x\in[N]}$ with embezzling state $\omega$ is unique up to isomorphism.
Theorems \ref{introthm:state-to-family} and \ref{introthm:family-to-state} clarify the relation between embezzling states and embezzling families and provide an answer to question 2.
The assumption of consistency is key to this result: In \cref{sec:vdh} we show how to interpret the vDH family as an embezzling family on spin chains of increasing lengths. 
However, it is not consistent. Indeed, we prove:

\begin{introproposition}[informal, see~\cref{prop:vDHmbz}]\label{introthm:vdh}
    The van Dam-Hayden embezzling family on a spin-1/2 chain converges to a pure state $\Omega$ on a bipartite system $(\M_1,\M_2)$, which supports no embezzling states.
\end{introproposition}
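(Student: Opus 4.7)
My approach realizes the vDH family as a (non-consistent) sequence of states on an increasing chain of bipartite spin systems, extracts a weak-$*$ limit, and identifies the type of the resulting bipartite von Neumann algebras so as to apply the classification of embezzlement from \cite{long_paper, short_paper}.

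First, one encodes the basis label $j\in\{1,\ldots,n\}$ of the vDH vector $\ket{\mu_n}=\frac{1}{\sqrt{H_n}}\sum_{j=1}^n j^{-1/2}\ket{j}_A\ket{j}_B$ into an orthonormal family of spin-1/2 configurations on a chain (for instance by a kink encoding $\ket{j}\mapsto \ket{1^{j-1}0\cdots 0}$ or by binary digits). This yields pure states $\omega_n$ on matrix algebras $\A_n\otimes\B_n$ with natural unital inclusions $\A_n\otimes\B_n\hookrightarrow\A_{n+1}\otimes\B_{n+1}$ into the quasi-local $C^*$-algebra $\A\otimes\B=\overline{\bigcup_n \A_n\otimes\B_n}$. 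Because the normalization $1/\sqrt{H_n}$ depends on $n$, the family $(\omega_n)$ is \emph{not} consistent in the sense of \cref{def:consistent}. One nevertheless extends each $\omega_n$ to $\A\otimes\B$ by tensoring with a fixed reference state (for example, the tracial state) on the complement, and verifies weak-$*$ convergence $\omega_n\to\omega$ by direct computation: for $a\otimes b\in\A_L\otimes\B_L$ the value $\omega_n(a\otimes b)$ is an explicit finite sum in the Schmidt weights $(jH_n)^{-1/2}$ that admits a clean $n\to\infty$ limit. The GNS data $(\H_\omega,\pi_\omega,\Omega)$ then provides the bipartite system $\M_1=\pi_\omega(\A)''$, $\M_2=\pi_\omega(\B)''$ with cyclic vector $\Omega$, whose induced vector state is the desired pure state on $\M_1\vee\M_2$.

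The heart of the argument, and the main obstacle, is to identify the type of $\M_1$ and rule out type $\III$. The expectation is that, since the Schmidt coefficients $(jH_n)^{-1/2}$ flatten uniformly toward zero as $n\to\infty$, local correlations between Alice and Bob collapse in the limit and $\omega$ behaves like a product of tracial-type states, making $\M_1$ semifinite. A natural way to make this precise is to produce a faithful normal (possibly semifinite) trace on $\M_1$ as a weak-$*$ limit of suitably rescaled tracial functionals on the finite-dimensional corners $\A_n$, giving Connes's invariant $S(\M_1)=\{1\}$ and excluding type $\III$. This step is delicate because each $\omega_n$ individually has modular spectrum $\{k/j\}_{j,k\leq n}$ which is dense in $(0,\infty)$, so one must show that this density fails to persist at the level of the algebra $\M_1$; the key point is that the modular spectrum of a single state is a property of the state, while ruling out type $\III$ is a property of the algebra and can be established by exhibiting a single faithful trace-like state. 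Once semifiniteness is in hand, the proposition follows immediately from \cite{long_paper, short_paper}: an embezzling state on $(\M_1,\M_2)$ requires $\M_1$ to be of type $\III$.
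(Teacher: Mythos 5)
Your architecture matches the paper's: realize the vDH marginals on an increasing spin chain, take a weak-$*$ limit, identify the type of the limit algebra, and invoke the no-go theorem from \cite{long_paper}. But the decisive step is exactly the one you leave as an ``expectation'': you never actually identify the limit state or establish semifiniteness. The paper's proof of \cref{prop:vDHmbz} is a short concrete computation that closes this gap. Along the subsequence of dimensions $2^n$, with $\M\up n=\bigotimes_{k=1}^n M_2(\CC)$ and the \emph{diagonal} embedding $a_m\mapsto\mathrm{diag}(a_m,\dots,a_m)$, one writes
$\varphi\up n|_{\M\up m}(a_m)=h_{2^n}^{-1}\sum_{l=1}^{2^{n-m}}\sum_{k=1}^{2^m}\frac{1}{(l-1)2^m+k}(a_m)_{kk}$
and sandwiches this between $h_{2^n}^{-1}h_{2^{n-m}}\cdot 2^{-m}\tr_{2^m}(a_m)$ and $h_{2^n}^{-1}(2^m+h_{2^{n-m}})\cdot 2^{-m}\tr_{2^m}(a_m)$; since $h_{2^{n-m}}/h_{2^n}\to1$, the restrictions converge to the normalized trace. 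So the weak-$*$ limit \emph{is} the tracial state, the GNS representation is the hyperfinite type $\II_1$ factor by construction, and there is no need for your more roundabout plan of extracting a trace as a limit of rescaled tracial functionals, nor any delicacy about modular spectra of the individual $\omega\up n$.

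A second, related issue: the limit genuinely depends on the encoding of $j\in\{1,\dots,n\}$ into spin configurations, and you do not commit to one. With your kink encoding $\ket j\mapsto\ket{1^{j-1}0\cdots0}$, the weight carried by labels $j>m$ is $1-h_m/h_n\to1$, so the restriction to the first $m$ sites converges to the pure product state $\ketbra{1^m}{1^m}$ and the limit is a type $\I$ (product) situation --- the conclusion ``no embezzling states'' would still hold, but the statement about convergence to the specific state $\Omega$ on a $\II_1$ bipartite system would not. Only the least-significant-bit binary encoding (equivalently, the diagonal embedding used in the paper) produces the trace. To turn your proposal into a proof you must fix this choice and carry out the estimate above; as written, the semifiniteness claim on which everything rests is asserted, not proved.
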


As mentioned in the introduction, the LTW embezzling family can be generalized to the multipartite setting. In fact, it is a consistent embezzling family. We use these properties in combination with \cref{introthm:family-to-state} to show:
\begin{introtheorem}[informal,  see~\cref{thm:ltw_type}]\label{introthm:ltw}
For any $N\in\NN$ with $N\geq 2$, the LTW construction induces a multipartite system $(\M_x)_{x\in[N]}$ on a Hilbert space $\H$ with the following properties:
\begin{enumerate}
\item The multipartite system is irreducible, $\bigvee_x \M_x = \B(\H)$, and fulfills \emph{Haag duality}:
\begin{align}
    \bigg(\bigvee_{x\in I} \M_x\bigg)' = \bigvee_{y\notin I }\M_y,\qquad I \subset[N].
\end{align}
\item All density operators $\rho$ on $\H$ are multipartite embezzling states.

\item All factors $\M_x$ have type $\III_1$. 
\end{enumerate}
\end{introtheorem}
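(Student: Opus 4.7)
The plan is to construct the $N$-partite system in question as the limit of a multipartite LTW embezzling family via \cref{introthm:family-to-state}, and then to extract properties (1)--(3) from the limit structure together with the bipartite classification of \cite{long_paper,short_paper}. The underlying finite-dimensional states are the $N$-partite LTW vectors on $(\CC^n)^{\ox N}$, concretely the GHZ-type vector $\ket{\mu_n} = H_n^{-1/2}\sum_{k=1}^n k^{-1/2}\ket{k}^{\ox N}$ with $H_n = \sum_{k=1}^n 1/k$ and the $x$-th tensor factor assigned to party $x$. A direct $N$-partite extension of LTW's monotone-rearrangement argument, using that every one-versus-rest Schmidt decomposition carries the same harmonic weights, shows that each $\ket{\mu_n}$ is a multipartite embezzler. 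Once the inclusions $\M_x^{(n)}\subset \M_x^{(n+1)}$ are arranged so that $(\omega^{(n)})_n$ is consistent in the sense of \cref{introthm:family-to-state}, that theorem supplies an $N$-partite system $(\M_x)_{x\in[N]}$ on a Hilbert space $\H$ with an embezzling state $\omega$ extending the family.

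For part (1), irreducibility $\bigvee_x \M_x = \B(\H)$ is inherited from the purity of each $\omega^{(n)}$ on the full matrix algebra $\bigvee_x\M_x^{(n)} = \B((\CC^n)^{\ox N})$, so the GNS representation of the limit is irreducible. Haag duality is automatic at each finite stage; to promote it to the limit, I would use that $\Omega$ is cyclic and separating for each $\M_x$ (from factoriality, which is part (3), together with purity of $\omega$) and approximate any $a\in \M_x'$ in the strong operator topology by normal conditional expectations onto $\bigvee_{y\neq x}\M_y^{(n)}$ built from the diagonal LTW projections. The permutation symmetry of $\ket{\mu_n}$ across its $N$ tensor factors ensures that these conditional expectations have the correct image.

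Parts (3) and (2) go together. Type $\III_1$ of each $\M_x$ follows from the no-go half of the classification of \cite{long_paper,short_paper}: the pair $(\M_x,\,\bigvee_{y\neq x}\M_y)$ is a bipartite system hosting the embezzling state $\omega$, and embezzlement is excluded in types $\I$, $\II$, and $\III_\lambda$ with $\lambda<1$. Multipartite universality (part (2)) then follows in two steps. First, the positive half of the same classification upgrades, on each bipartite cut, the existence of a single embezzling state to universality of bipartite embezzlement across that cut. Second, a short argument exploiting the freedom of local unitaries in the $M_d(\CC)$ ancillae combines bipartite embezzlement across each of the $N$ cuts into full multipartite embezzlement for arbitrary target dimensions and errors.

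The hardest step is likely Haag duality, since commutants behave poorly under inductive limits and the conditional expectation used above requires sharp control over the modular data of $\omega$ with respect to each $\M_x$. If the direct approximation argument resists, an alternative route is to identify $(\M_x)_{x\in[N]}$ with a canonical operator-algebraic structure (e.g., a multipartite variant of a half-sided modular inclusion) for which Haag duality is already known; the symmetric role of the $N$ parties in the LTW construction suggests such a description is available.
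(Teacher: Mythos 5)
There is a genuine gap, and it starts at the very first step: the family you propose to feed into \cref{introthm:family-to-state} is not the LTW family. The vector $\ket{\mu_n}=H_n^{-1/2}\sum_k k^{-1/2}\ket{k}^{\ox N}$ with harmonic weights is the (GHZ-type generalization of the) van Dam--Hayden state, and the monotone-rearrangement argument is van Dam and Hayden's, not LTW's. This is fatal for your plan because \cref{introthm:family-to-state} requires a \emph{consistent} family, and the paper proves (\cref{prop:vDHmbz}) that precisely this harmonic-weight family is not consistent: its restrictions converge weak* to the tracial state, so the limit object is the hyperfinite type $\II_1$ factor, which hosts no embezzling states at all. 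The actual LTW construction (\cref{sec:LTW}) is the tensor product $\Omega^{(d,n)}=\bigotimes_{i\ne j}\Omega_n(\Psi_i^{(d,n)},\Psi_j^{(d,n)})$ over $\eps$-covers of unit spheres, enumerated and tensored into an incomplete infinite tensor product; consistency there is automatic because $\Omega^{(n+1)}=\Omega^{(n)}\ox\Omega^{(d_{n+1},n_{n+1})}$, so each stage is literally a marginal of the next. Your sentence "once the inclusions are arranged so that $(\omega^{(n)})_n$ is consistent" hides the entire difficulty, and for your chosen states no such arrangement exists.

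Two further steps would also fail as written. First, your deduction of type $\III_1$ from the mere existence of an embezzling state across the cut $(\M_x,\bigvee_{y\ne x}\M_y)$ is incorrect: by the classification in \cite{long_paper}, existence of an embezzling state excludes types $\I$, $\II$, and $\III_0$ but is compatible with $\III_\lambda$ for $\lambda\in(0,1)$; only \emph{universality} is equivalent to $\III_1$. The paper instead exploits the ITPFI splitting $\omega_x=\omega_x^{(n)}\ox\omega_x^{>n}$ for every $n$ (together with \cite[Thm.~88]{long_paper}), and in \cref{sec:simpler-ltw} even computes the Araki--Woods asymptotic ratio set $r_\infty(\M)=\RR_+$ directly. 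Second, your claim that bipartite embezzlement across each of the $N$ cuts can be "combined" into multipartite embezzlement is asserted without an argument and is not something the paper does or that is known to hold; the paper obtains part (2) directly, by approximating an arbitrary density operator $\rho$ by $\rho^{(n)}\ox\ketbra{\Omega^{>n}}{\Omega^{>n}}$ and using that the tail $\Omega^{>n}$ still contains all the LTW catalysts needed for genuinely $N$-partite embezzling. Finally, Haag duality does not require any modular-data or conditional-expectation analysis: it holds at each finite factorial stage and passes to the infinite tensor product by Takesaki's commutation theorem (\cref{lem:haag}).
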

The theorem answers the third question affirmatively: Not only do there exist multipartite embezzling states, but even multipartite systems where \emph{all states} are embezzling.
Such a system will be called a \emph{universal multipartite embezzler}.

\cref{introthm:vdh} and \cref{introthm:ltw} together show that the van Dam-Hayden family and the Leung-Toner-Watrous family are indeed fundamentally different: One converges to a universal embezzler while the other does not even converge to an embezzling state.

The construction of the LTW embezzling family is, in a sense, highly inefficient as it involves tensor products over $\eps$-covers of unit spheres in arbitrarily high dimensional Hilbert spaces. In \cref{sec:vdh}, we show that, in the bipartite setting, one can construct a simpler consistent bipartite embezzling family that only involves product states and maximally entangled states of arbitrary dimensions in its construction. It relies on the following result:

\begin{introlemma}[informal, cp.~\cref{lem:mbz_traces}] A bipartite embezzling family that can embezzle maximally entangled states of arbitrary dimension can embezzle arbitrary pure bipartite states of arbitrary dimension.
\end{introlemma}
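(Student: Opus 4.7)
The plan is to chain two embezzling operations through a common maximally entangled intermediate state, using the reversibility of local unitaries. If $\Omega_n$ can embezzle $\ket{\Phi_D}$ from an arbitrary pure source, then by taking adjoints it can also transform $\ket{\Phi_D}$ into any pure target, letting $\ket{\Phi_D}$ act as a ``hub'' through which any source-to-target embezzling can be routed. This is a direct generalization of the equivalence between the two formulations of embezzling already stated in the introduction (with $\ket 1\ket 1$ replaced by a maximally entangled intermediate).

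Concretely, I would fix pure states $\phi$ and $\psi$, embed both into $\CC^D\otimes\CC^D$ by padding with $\ket{0}$'s, and fix $\eps>0$. The hypothesis applied with source $\phi$ and target $\ket{\Phi_D}$ produces some $n_1$ and local unitaries $u_A,u_B$ with
\[
    \bigl\|u_Au_B\,(\Omega_{n_1}\ox \kettbra{\phi})\,u_A^*u_B^* - \Omega_{n_1}\ox \kettbra{\Phi_D}\bigr\|_1 < \eps/2,
\]
and, analogously, some $n_2$ with local unitaries $v_A,v_B$ realising the same estimate for source $\psi$. Since the embezzling error tends to zero as $n\to\infty$, both estimates persist (with possibly updated unitaries, which I rename) at any fixed $n\geq \max(n_1,n_2)$. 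Conjugating the second estimate by $v_A^*v_B^*$ preserves the trace norm, so combining with the first via the triangle inequality, the products $w_A := v_A^*u_A$ and $w_B := v_B^*u_B$, which are again local unitaries, satisfy
\[
    \bigl\|w_Aw_B\,(\Omega_n\ox\kettbra{\phi})\,w_A^*w_B^* - \Omega_n\ox\kettbra{\psi}\bigr\|_1 < \eps.
\]
This is exactly what is required for embezzling of $\psi$ from source $\phi$, and since $\phi,\psi$ were arbitrary, the family embezzles arbitrary pure bipartite states.

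The main technical point I expect to spend care on is arranging both instances of the hypothesis to be achieved at a common $n$ and on a common ancilla dimension. The ancilla issue is handled by embedding $\phi,\psi$ into a joint $\CC^D\otimes\CC^D$ at the outset, so that all unitaries lie in $\M_{A/B}\ox M_D(\CC)$. The index issue uses the asymptotic clause in the definition of an embezzling family ($\eps\to 0$ as $n\to\infty$), which guarantees that each of the two single-source-to-$\ket{\Phi_D}$ tasks is achievable for all sufficiently large $n$. Beyond these bookkeeping points, no manipulation of Schmidt coefficients or Schmidt-rank matching is needed: reversibility of local unitaries makes the maximally entangled state a universal hub, which is what makes the lemma possible despite the apparent rigidity imposed by conservation of Schmidt rank under local unitaries.
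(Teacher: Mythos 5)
There is a genuine gap, and it sits exactly where the paper warns the difficulty lies. Your hypothesis is that the family can embezzle maximally entangled states $\Phi_D$ \emph{from the fiducial product state} $\ket1\ket1$ (this is condition (2) of \cref{lem:mbz_traces}: $u(\omega\up n\ox\bra1\placeholder\ket1)u^*\approx \omega\up n\ox\tfrac1D\tr_D$). But your first step applies "the hypothesis with source $\phi$ and target $\ket{\Phi_D}$" for an \emph{arbitrary} pure source $\phi$. That is not licensed: nothing in the hypothesis lets you transform $\Omega_n\ox\phi$ into $\Omega_n\ox\Phi_D$. Since local unitaries preserve the Schmidt spectrum exactly, such a transformation requires the spectrum of $\omega\up n\ox\rho_\phi$ to be close to that of $\omega\up n\ox\tfrac1D 1_D$ for arbitrary $\rho_\phi$ — and establishing precisely this from the flat-spectrum hypothesis is the entire content of the lemma. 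If you instead try to route through the fiducial state, $\phi\to\ket{11}\to\Phi_D$, the first arrow is (by reversibility) equivalent to embezzling $\phi$ from $\ket{11}$, i.e., to the conclusion you are trying to prove. Your closing remark that "no manipulation of Schmidt coefficients is needed: reversibility makes the maximally entangled state a universal hub" is the LOCC intuition that the paper explicitly disclaims right after the lemma: under local unitaries alone, $\Phi_D$ is \emph{not} a universal hub, and indeed the ability to embezzle Bell states (or $\Phi_D$ for any fixed $D$) is insufficient (\cref{rem:traces}).

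The reversibility-and-triangle-inequality step you use is sound and does appear in the paper's proof, but only with the fiducial state $\bra1\placeholder\ket1$ as the hub — the one state from which the hypothesis provides outgoing arrows — reducing the problem to embezzling an arbitrary mixed marginal $\rho_{\{\lambda_j\}}$ from $\bra1\placeholder\ket1$. The missing core of the argument is then: approximate the Schmidt coefficients by rationals $\lambda_j= p_j/q_j$, write $\omega\up n\ox\rho_{\{\lambda_j\}}$ as a block-diagonal sum of rescaled copies of $\omega\up n$, use the hypothesis to embezzle maximally mixed states of the various dimensions $q_1\cdots p_j\cdots q_d$ and $q_1\cdots q_d$ (this is where "arbitrary dimension" is used essentially — your argument only ever invokes a single dimension $D$), assemble the resulting isometries into a contraction $s_n=v_n^*w_n$, and finally upgrade the contraction to a unitary via the polar-decomposition estimate of \cref{lem:contractions}. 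None of this can be bypassed by the hub construction.
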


We emphasize that since we consider embezzlement relative to local operations and not relative to local operations and classical communication (LOCC), the above Lemma does not follow from the fact that any pure finite-dimensional entangled state can be obtained from a sufficiently high-dimensional maximally entangled state by LOCC.  
In particular, it is, in fact, not sufficient if the embezzling family allows to embezzle Bell states (see \cref{rem:traces}). 

For bipartite entanglement, \cite{long_paper} establishes precise quantitative criteria on an operator algebraic level for the existence of embezzling states on bipartite quantum systems. In particular, type $\III_{1}$ von Neumann algebras are singled out as the local von Neumann algebras of universal embezzlers. In contrast, in bipartite systems composed of commuting factors in Haag duality, all pure states are embezzling states \emph{relative to LOCC} if and only if the factors have type $\III$, irrespective of the subtype \cite{van_luijk_pure_2024}.
\begin{introquestion}
    What are the necessary and sufficient conditions for a multipartite system $(\M_1,\ldots,\M_N)$ to host multipartite embezzling states? Can we obtain a quantitative theory of multipartite embezzlement?
\end{introquestion}
As we will see below, in contrast to the bipartite setting, the existence of multipartite embezzling states cannot be decided solely by looking at the type of the local von Neumann algebras, even if each of them is a factor in standard representation. 
Rather, it depends on how each factor $\M_x$ is embedded in $\B(\H)$ relative to the remaining factors $(\M_y)_{y\neq x}$.
Hence, multipartite embezzlement is  connected to the theory of subfactor inclusions \cite{kawahigashi_subfactor_2005,longo_nets_1995,Jones_Sunder_1997} and we expect that its study will lead to fruitful results at the intersection of multipartite entanglement theory and the theory of operator algebras.

It follows from results in \cite{long_paper} that the vacuum sector of relativistic quantum field theories are bipartite embezzling states if we partition spacetime into two causally closed spacetime regions. 
In fact, they are universal bipartite embezzlers.
Generalizing a result of Matsui \cite{matsui_split_2001,keyl2006}, we have shown in \cite{van_luijk_critical_2024} that a large class of one-dimensional critical many-body systems provides bipartite, universal embezzlers in their ground state sectors. 
If we work in one spatial dimension and assume that the subsystems of the parties are simply connected, neither the vacua of relativistic QFTs on 1+1 dimensional Minkowski space fulfilling the \emph{split property} \cite{buchholz_product_1974,buchholz_causal_1986,buchholz_noethers_1986,werner_local_1987} nor ground states of spin chains  provide $N$-partite embezzling states with $N>2$.
This follows because we show that a multipartite embezzling state induces a bipartite embezzling state on any pair of subsystems. 
However, in the case of QFT, the split property implies that two properly spacelike separated regions cannot host bipartite embezzling states (see \cref{cor:nogo}).
Similarly, in a spin chain, $N-2$ parties only have access to a finite-dimensional system, which again precludes embezzlement because bipartite systems cannot host embezzling states if at least one of the subsystems is finite-dimensional  \cite{long_paper}.
In view of the existence of multipartite embezzling states, this leads to the following open questions:

\begin{introquestion} 
    Are the vacua of relativistic quantum field theories in $D>1$ spatial dimensions multipartite embezzling states? Are there examples of local Hamiltonians on a $D$-dimensional lattice of spins whose ground state are a multipartite embezzling state?
\end{introquestion}

\section{Multipartite embezzling families and states}

\subsection{Multipartite systems with infinitely many degrees of freedom.}

\begin{definition}[Multipartite systems]\label{def:N partite}
    Let $N$ be an integer.
    An \emph{$N$-partite system} is a collection $(\M_{1},\ldots,\M_{N})$ of $N$ pairwise commuting von Neumann subalgebras $\M_{x}\subset \M_{[N]}$ of a von Neumann algebra $\M_{[N]}$ such that $\M_{[N]}=\vee_{x=1}^N \M_x$.
    
    An $N$-partite system is \emph{factorial} if each $\M_x$, and hence $\M_{[N]}$, is a factor.
    If $\M_{[N]}$, and hence each subalgebra $\M_x$, acts on a Hilbert space $\H$, we say that $(\M_x)_{x=1}^N$ is an \emph{$N$-partite system on $\H$}.
    An $N$-partite system on $\H$ is \emph{irreducible} if $\M_{[N]}=\B(\H)$.
    For $I \subset [N]$, we write $\M_I = \bigvee_{x\in I} \M_x$.
    An $N$-partite system on $\H$ satisfies \emph{Haag duality} if
    \begin{equation}\label{eq:rHD}
        \M_I' = \M_{I^c}
    \end{equation}
    for all $I\subset [N]$.
    An $N$-partite system $(\M_x)_{x\in[\N]}$ is a \emph{subsystem} of an $N$-partite system $(\N_x)_{x\in[N]}$ if $\M_x\subset \N_x$ for all $x\in[N]$.
\end{definition}

\begin{definition}[Multipartite states]
    A \emph{state} on a multipartite system $(\M_1,\ldots,\M_N)$ is a normal state $\omega$ on $\M_{[N]}$.
    For an irreducible multipartite system on a Hilbert space $\H$, we identify density operators $\rho$ on $\H$ with the induced states $\omega = \tr \rho(\placeholder)$ and pure states with unit vectors $\Omega\in\H$ (up to phase) via $\omega = \bra \Omega\placeholder \ket\Omega$.
    If $\omega$ is a state on a multipartite system, and $I\subset [N]$ is a collection of parties, we write $\omega_I$ for the reduced state $\omega_I :=\omega \upharpoonright \M_I$ on $\M_I\subset\M_{[N]}$.
\end{definition}

For each $x$, the von Neumann algebra $\M_x$ contains the observables that are accessible to the party $x$, and $\M_{[N]}$ (the `$N$-partite algebra') is the observable algebra of the joint system consisting of all $N$ parties.
A multipartite system is an $N$-partite system for some integer $N\ge2$.
A factorial finite-dimensional $N$-partite system with local Hilbert space dimensions $d_x$ is given by $(M_{d_x}(\CC))_{x\in[N]}$ with 
\begin{align}
    \M_{[N]} = \bigotimes_{x=1}^N M_{d_x}(\CC). 
\end{align}
We will also simply refer to it as the \emph{$N$-partite system on $\otimes_{x=1}^N\CC^{d_x}$}.

In the case of a multipartite system on a Hilbert space $\H$, the factor $\M_{[N]}$ can be recovered from the subfactors $\M_x$ in their action on $\H$ via $\M_{[N]}= \bigvee_{x=1}^N \M_x$.

If $(\M_1,\ldots,\M_N)$ is an $N$-partite system, the joint observable algebra describing a collection $I\subset [N]$ of parties, is given by the von Neumann algebra
\begin{equation}
    \M_I := \bigvee_{x\in I}\M_x.
\end{equation}
This is consistent with $\M_{[N]}$ being generated by the subalgebras $\M_x$, $x\in [N]$.
In the case that $I=\emptyset$, we set $\M_\emptyset=\CC1$.

\begin{lemma}
    Let $(\M_x)_{x\in[N]}$ be an $N$-partite system on $\H$.
    If $(\M_x)_{x\in[N]}$ is irreducible, then it is factorial.
    If it is factorial and satisfies Haag duality, then it is irreducible.
\end{lemma}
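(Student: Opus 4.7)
My plan is to handle the two implications separately, both by chasing commutants through the defining relations of an $N$-partite system. Neither direction needs any of the heavier machinery introduced later; they are essentially immediate consequences of the definitions, and the only subtlety is keeping track of the boundary cases $I=\emptyset$ and $I=[N]$.

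For the first implication, assume $\M_{[N]} = \B(\H)$ and fix $x\in[N]$. I want to show $Z(\M_x) = \M_x\cap\M_x' = \CC 1$, which also gives that $\M_{[N]} = \B(\H)$ is a factor. The key observation is that pairwise commutativity of the local algebras means $\M_y \subset \M_x'$ for every $y\neq x$, and hence $\M_{[N]\setminus\{x\}} \subset \M_x'$. Now take any $z\in Z(\M_x)$. Since $z\in\M_x$, it commutes with every $\M_y$ for $y\neq x$; since $z\in\M_x'$, it commutes with $\M_x$. Thus $z$ commutes with the generating set $\bigcup_{y\in[N]} \M_y$ of $\M_{[N]}$, so $z\in \M_{[N]}' = \B(\H)' = \CC 1$.

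For the second implication, assume factoriality and Haag duality and apply \eqref{eq:rHD} to $I=[N]$:
\begin{equation*}
    \M_{[N]}' \;=\; \M_\emptyset \;=\; \CC 1.
\end{equation*}
Since $\M_{[N]}$ is by definition a von Neumann algebra (being the join of the $\M_x$), it equals its own double commutant, so
\begin{equation*}
    \M_{[N]} \;=\; \M_{[N]}'' \;=\; (\CC 1)' \;=\; \B(\H),
\end{equation*}
which is exactly irreducibility. (The factoriality hypothesis is not even needed for this direction; Haag duality at the single index $I=[N]$ already forces it.)

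There is no real obstacle: the whole statement is a bookkeeping exercise around the commutant operation. The only point worth double-checking is that the convention $\M_\emptyset = \CC 1$ and the quantification ``for all $I\subset[N]$'' in \cref{def:N partite} genuinely permit plugging in $I=[N]$ in the second part; having confirmed this from the definition above, the proof is essentially two lines for each direction.
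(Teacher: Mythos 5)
Your first implication is essentially the paper's own argument: take a central element of some $\M_x$, observe that it commutes with every generator of $\M_{[N]}=\B(\H)$, and conclude it lies in $\B(\H)'=\CC1$. For the second implication you take a genuinely different route. The paper uses factoriality of $\M_x$ to write $\B(\H)=\M_x\vee\M_x'$ and then invokes Haag duality only at the singleton $I=\{x\}$ to identify $\M_x'=\M_{\{x\}^c}$, giving $\B(\H)=\M_x\vee\M_{\{x\}^c}=\M_{[N]}$. You instead plug $I=[N]$ into \eqref{eq:rHD} and use the convention $\M_\emptyset=\CC1$ to get $\M_{[N]}'=\CC1$ directly. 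Your argument is valid under the literal reading of \cref{def:N partite} (the quantifier is over all $I\subset[N]$ and the convention $\M_\emptyset=\CC1$ is fixed before the lemma), and you are right that it makes the factoriality hypothesis superfluous. What the paper's proof buys in exchange is robustness: it only uses Haag duality at proper nonempty subsets, which is the nontrivial content of the condition, so the lemma survives even if one restricts the duality requirement to such $I$ -- and the fact that the authors carry factoriality as a hypothesis suggests that is the reading they have in mind. Your observation that the boundary case trivializes this direction is correct but worth flagging as definition-dependent rather than as a strengthening of the lemma.
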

\begin{proof}
    Assume that $(\M_x)_{x\in[N]}$ is irreducible and let $a \in \M_y$ be a central element for some $y$.
    Since $\M_{x}$ commutes with $\M_{y}$ for all $y\ne x$, we get that $a$ is also central in $\vee_x \M_x = \M_{[N]} = \B(\H)$, which implies that $x$ is proportional to the identity.
    If Haag duality and factoriality hold, then $\B(\H) = \M_x \vee \M_x' = \M_x \vee \M_{\{x\}^c} = \vee_y \M_y = \M_{[N]}$.
\end{proof}

\begin{definition}\label{def:split}
    A bipartite system $(\M_1,\M_2)$ is \emph{split} (or has the split property) if the bipartite algebra is isomorphic to the tensor product: $\M_1\vee \M_2 \cong \M_1\ox \M_2$ (with the isomorphism that sends products $ab$ to the elementary tensors $a\ox b$ for all $(a,b)\in\M_1\times\M_2$).
\end{definition}

If a bipartite system $(\M_1,\M_2)$ on a Hilbert space $\H$ is split then we can find  a tensor product decomposition $\H = \H_1\ox \H_2$ and von Neumann algebras $\N_j\cong \M_j$ on $\H_j$, $j=1,2$, such that\footnote{More precisely, there is a unitary $u:\H \to \H_1\ox\H_2$ such that $u^*\M_1u = \N_1\ox1$ and $u^*\M_2u=1\ox\N_2$.}
\begin{equation}\label{eq:vna_split_tp}
    (\M_1,\M_2) = (\N_1\ox1,1\ox \N_2).
\end{equation}
If $(\M_1,\M_2)$ is a bipartite system, then a state $\omega$ on the bipartite algebra $\M_1\vee\M_2$ is called a \emph{product state} if $\omega(ab)=\omega(a)\omega(b)$ for all $(a,b)\in\M_1\times\M_2$.

\begin{lemma}[{\cite[Remark]{dantoni_interpolation_1983}}]\label{lem:product-state}
    A bipartite system $(\M_1,\M_2)$ is split if and only if there exists a normal product state on $\M_1\vee \M_2$ with central support $1$ if and only if there is a faithful normal product state. 
\end{lemma}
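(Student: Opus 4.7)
The plan is to establish the chain of implications (1) split $\Rightarrow$ (3) existence of a faithful normal product state $\Rightarrow$ (2) existence of a normal product state with central support $1$ $\Rightarrow$ (1) split. The first two links are routine. For (1) $\Rightarrow$ (3): under the splitting isomorphism $\M_1\vee\M_2\cong\M_1\ox\M_2$, I pick any faithful normal states $\omega_j$ on $\M_j$ (which exist because the $\M_j$ have separable preduals) and pull back $\omega_1\ox\omega_2$ to obtain a faithful normal product state. For (3) $\Rightarrow$ (2): faithfulness forces $\omega(z)>0$ for every nonzero central projection $z\in\M_1\vee\M_2$, so the central support is $1$.

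The substantive step is (2) $\Rightarrow$ (1). The idea is to construct a spatial unitary linking the GNS representation of $\omega$ with the tensor product of the GNS representations of its marginals $\omega_j := \omega\upharpoonright\M_j$. Denoting the respective GNS triples by $(\H_\omega,\pi_\omega,\Omega_\omega)$ and $(\H_j,\pi_j,\Omega_j)$, I define
\begin{equation}
   U\pi_\omega(ab)\Omega_\omega := \pi_1(a)\Omega_1\ox\pi_2(b)\Omega_2,\qquad a\in\M_1,\ b\in\M_2.
\end{equation}
Using that $\M_1$ and $\M_2$ commute and that $\omega$ is a product state, a direct inner-product calculation gives $\omega((ab)^*(a'b')) = \omega_1(a^*a')\omega_2(b^*b')$, which matches the inner product on $\H_1\ox\H_2$; hence $U$ is well-defined and isometric on the linear span of such vectors. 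Density of this span in $\H_\omega$ follows because the linear span $\M_1\cdot\M_2$ is a $*$-subalgebra of $\M_1\vee\M_2$ (by commutation) which is ultraweakly dense by Kaplansky's theorem, while density of the image in $\H_1\ox\H_2$ follows from cyclicity of $\Omega_j$ for $\pi_j(\M_j)$. Thus $U$ extends to a unitary, and the intertwining relations $U\pi_\omega(\M_1)U^* = \pi_1(\M_1)\ox 1$ and $U\pi_\omega(\M_2)U^* = 1\ox\pi_2(\M_2)$ follow by direct computation on the dense span, giving $U\pi_\omega(\M_1\vee\M_2)U^* = \pi_1(\M_1)\ox\pi_2(\M_2)$.

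To promote this spatial identification to an algebraic isomorphism $\M_1\vee\M_2\cong\M_1\ox\M_2$ that sends $ab\mapsto a\ox b$, each of $\pi_\omega$, $\pi_1$, and $\pi_2$ must be a normal $*$-isomorphism onto its image. For $\pi_\omega$ this is exactly the central-support-$1$ hypothesis. For $\pi_j$: a central projection $z_j\in \M_j$ is automatically central in $\M_1\vee\M_2$ since $\M_1$ and $\M_2$ commute, so $\omega_j(z_j)=\omega(z_j)=0$ forces $z_j=0$ by the central support assumption; hence $\omega_j$ has central support $1$ on $\M_j$ and $\pi_j$ is an isomorphism. Composing produces the desired identification. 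I expect the only delicate point to be this transfer of the central support property from $\omega$ to its marginals $\omega_j$: without it, one would only recover a tensor product of corners $\M_1z_1\ox \M_2z_2$ instead of the honest $\M_1\ox\M_2$. Everything else is standard GNS bookkeeping.
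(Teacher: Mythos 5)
Your argument is correct. The paper does not prove this lemma at all --- it simply cites the remark in D'Antoni's interpolation paper --- so there is nothing internal to compare against; what you have written is the standard self-contained proof of that cited fact. The cycle split $\Rightarrow$ faithful $\Rightarrow$ central support $1$ $\Rightarrow$ split is the right organization, the GNS/Kaplansky bookkeeping in the last implication is sound, and you correctly identified and resolved the one genuinely delicate point: that the central support condition must be pushed down to the marginals $\omega_j$ (via the observation that $Z(\M_j)\subset Z(\M_1\vee\M_2)$) so that $\pi_1$ and $\pi_2$, not just $\pi_\omega$, are faithful. The only implicit hypothesis you lean on --- existence of faithful normal states on the $\M_j$ --- is covered by the paper's standing assumption of separable Hilbert spaces, hence $\sigma$-finiteness.
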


\begin{lemma}\label{lem:split}
    Let $(\M_1,\M_2,\M_3)$ be a tripartite system on $\H$ fulfilling Haag duality.
    If the bipartite system $(\M_1,\M_3)$ is split, then there exists a tensor product decomposition $\H = \H_L\ox \H_R$ and von Neumann algebras $\N_1\cong \M_1$ on $\H_L$, $\N_3\cong \M_3$ on $\H_R$ such that 
    \begin{equation}
        (\M_1,\M_2,\M_3) = (\N_1\ox 1,\,\N_1'\ox \N_3',\, 1\ox \N_3).
    \end{equation}
\end{lemma}
\begin{proof}
     We apply \eqref{eq:vna_split_tp}, and use Haag duality to find $\M_2 = (\M_1\vee \M_3)' = (\N_1\ox \N_3)' = \N_1' \ox \N_3'$.
\end{proof}

In the bipartite case (i.e., $N=2$), the above definition of a bipartite system is exactly the von Neumann algebraic version of the $C^*$-algebraic definition of bipartite systems in \cite{van_luijk_schmidt_2023} (see also \cite{summers1990on_independence}).
If $\A_1,\A_2\subset \A$ is a $C^*$-algebraic bipartite system and if $\omega$ is a bipartite (pure) state in the sense of \cite{van_luijk_schmidt_2023}, then the GNS representation yields an (irreducible) bipartite system $(\M_1,\M_2)$ on the GNS Hilbert space $\H$ \cite[Prop.~16]{van_luijk_schmidt_2023}.

\begin{remark}[Bipartite system and inclusions of von Neumann algebras]
    Bipartite systems $(\M_1,\allowbreak \M_2)$ on a Hilbert space $\H$ are in one-to-one correspondence with inclusions, $\N\subset \M$, of von Neumann algebras via $\M_1 =\N$, $\M_2=\M'$.
    This correspondence is such that the bipartite system is irreducible/split if and only if the inclusion is irreducible/split.\footnote{An inclusion $\M\subset\N$ of von Neumann algebras (or $W^*$-inclusion) on $\H$ is irreducible if the relative commutant is trivial $\M'\cap \N=\CC1$, split if there exists a type I factor $\P$ such that $\M\subset\P\subset \N$ and trivial if $\M=\N$ \cite{doplicher1984split}.}
    The inclusion is trivial ($\N=\M$) if and only if Haag duality holds $\M_1=\M_2'$.
\end{remark}

We now consider tensor products of multipartite systems.
Let $(\M\up k_1,\ldots,\M\up k_N)$, $k\in K$, be a collection of $N$-partite systems.
In case of finitely many multipartite systems $\# K<\oo$, we can simply form the von Neumann tensor product $\M_x=\bigotimes_{k\in K}\M\up k_x$ of the von Neumann algebras corresponding to each of the parties to obtain an $N$-partite system $(\M_1,\ldots,\M_N)$ with $\M_{[N]} = \bigotimes_{k\in K} \M_{[N]}\up k$.
In the case of infinitely many $N$-partite systems, we have to be careful.

For each $k\in K$, let $\omega\up k$ be a normal state on $\M_{[N]}\up k$ with faithful GNS representation (i.e., the central support of $\omega$ is $1$).
Relative to such a state, we now form the incomplete infinite tensor product of the $N$-partite algebras $\M_{[N]}\up k$ to obtain a von Neumann algebra
\begin{equation}
    \M_{[N]} = \bigotimes_{k\in K} (\M_{[N]}\up{k};\omega\up k).
\end{equation}
We now define for each $x$ a subalgebra 
\begin{equation}
    \M_x = \bigvee_{k\in K} \M_x\up k \subset \M_{[N]}
\end{equation}
to obtain an $N$-partite system $(\M_1,\ldots,\M_N)$. We will denote this $N$-partite system as
\begin{equation}
    (\M_1,\ldots,\M_N) = \bigotimes_{k\in K} (\M_1,\ldots, \M_N ; \omega\up k).
\end{equation}

\begin{lemma}\label{lem:haag}
    Consider each of the $N$-partite systems $(\M_1\up k,\ldots,\M_2\up k)$ on the GNS Hilbert space $(\H\up k,\Omega\up k)$ of $\omega\up k$ and consider $(\M_1,\ldots,\M_N)$ in its action on $\H=\otimes_k (\H\up k,\Omega\up k)$.
    If Haag duality holds for each $k$, then it also holds for $(\M_1,\ldots,\M_N)$.
\end{lemma}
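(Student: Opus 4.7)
The plan is to lift Haag duality slot-by-slot by means of the commutation theorem for the Araki--Woods infinite tensor product. The inclusion $\M_{I^c} \subseteq \M_I'$ is automatic from the pairwise commutation of the generating subalgebras $\M_x$, so only the reverse inclusion requires argument.

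First I would observe that joins commute with the ITP structure: using $\M_x = \bigvee_k \M_x\up k$ and associativity,
\[
\M_I = \bigvee_{x\in I}\M_x = \bigvee_{x\in I}\bigvee_k \M_x\up k = \bigvee_k \M_I\up k,\qquad \M_I\up k := \bigvee_{x\in I}\M_x\up k.
\]
The preceding lemma applied at each slot shows that slot-wise Haag duality forces slot-wise irreducibility, so $\M_{[N]}\up k = \B(\H\up k)$, and Haag duality in $\B(\H\up k)$ yields $(\M_I\up k)'_{\B(\H\up k)} = \M_{I^c}\up k$. The key remaining step is the ITP commutation identity: for any family of von Neumann subalgebras $\N\up k \subset \B(\H\up k)$,
\[
\Big(\bigvee_k \N\up k\Big)' = \bigvee_k (\N\up k)'_{\B(\H\up k)}\quad\text{in } \B(\H),
\]
where each local commutant on the right is embedded into $\B(\H)$ via its action on slot $k$. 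Applying this with $\N\up k = \M_I\up k$ and combining with slot-wise Haag duality gives
\[
\M_I' = \bigvee_k (\M_I\up k)'_{\B(\H\up k)} = \bigvee_k \M_{I^c}\up k = \M_{I^c}.
\]

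The main obstacle is justifying the ITP commutation identity when $K$ is infinite. For any finite $K_0 \subset K$, the classical tensor commutation theorem $(\M\ox\N)' = \M'\ox\N'$ applied to the factorization $\B(\H) = \B(\H_{K_0})\ox\B(\H_{K\setminus K_0})$ produces
\[
\Big(\bigvee_{k\in K_0}\N\up k\Big)' = \Big(\bigvee_{k\in K_0}(\N\up k)'_{\B(\H\up k)}\Big) \vee \B(\H_{K\setminus K_0}).
\]
Taking the intersection over all finite $K_0$ turns the left-hand side into $(\bigvee_k \N\up k)'$. To identify the resulting intersection with $\bigvee_k(\N\up k)'_{\B(\H\up k)}$, I would slice against the ITP vacuum $\Omega_{K\setminus K_0} := \bigotimes_{k\notin K_0}\Omega\up k$ to obtain normal conditional expectations $E_{K_0}\colon \B(\H)\to \B(\H_{K_0})$; the density of "eventually vacuum" vectors in $\H$ then forces $E_{K_0}(x)\to x$ weakly for any $x$ in the intersection, while $E_{K_0}(x)$ already lies in $\bigvee_k(\N\up k)'_{\B(\H\up k)}$ by the finite-$K_0$ identity, so weak closure places $x$ there. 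A subtlety worth flagging is that $\Omega\up k$ is cyclic but typically not separating for $\B(\H\up k)$ once $\dim\H\up k > 1$, so the "standard form" shortcut via Tomita--Takesaki is unavailable; the conditional-expectation exhaustion is the natural substitute, since it uses only the ITP vacuum data and not the existence of cyclic-and-separating vectors.
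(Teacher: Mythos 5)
Your argument is correct and follows essentially the same route as the paper: reduce to the per-slot identity $(\M_I\up k)'=\M_{I^c}\up k$ and lift it through the commutation theorem for the incomplete infinite tensor product, $\big(\bigvee_k \N\up k\big)'=\bigvee_k (\N\up k)'$. The only real difference is that the paper simply cites this commutation theorem (\cite[Thm.~XIV.1.9]{takesaki3}) whereas you supply a proof of it via the finite commutation theorem plus slice maps against the ITP reference vectors, which is sound; your detour through slot-wise irreducibility is unnecessary (and its justification via the preceding lemma would need factoriality), since Haag duality at slot $k$ already asserts $(\M_I\up k)'_{\B(\H\up k)}=\M_{I^c}\up k$ by definition.
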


\begin{proof}
    Note that it suffices to show the bipartite case since, for each $I \subset [N]$, the bipartite systems $(\M_I\up k,\M_{I^c}\up k)$ satisfy Haag duality.
    The claim follows from \cite[Thm. XIV.1.9]{takesaki3}.
\end{proof}

The most important application of the Lemma arises if each $(\M\up{k}_x)_{x\in[N]}$ is a factorial, finite-dimensional quantum system, which ensures Haag duality.

\subsection{Multipartite embezzling states}

In the following we define $N$-partite embezzling states and  families. We then discuss the relations between the two concepts.

\begin{definition}[Multipartite embezzling states]\label{def:mbz-state}
    Let $(\M_x)_{x\in[N]}$ be an $N$-partite system.
    A multipartite state $\omega$ is \emph{($N$-partite)} embezzling if for all $d\in\NN$, every two pure states $\psi$ and $\phi$ on the $N$-partite system on $(\CC^d)^{\otimes N}$ and every $\eps>0$ there exist unitaries $u_x\in \M_x \ox M_d(\CC)$ such that
\begin{equation}\label{eq:mixed state mbz}
    \norm{u (\omega\ox \phi)u^* - \omega \ox \psi} < \eps,\quad u=\prod_x u_x.
\end{equation}
For an $N$-partite system $(\M_1,\ldots,\M_N)$ on $\H$ we say that $\Omega\in\H$ is embezzling if $\bra\Omega\placeholder\ket\Omega$ is embezzling.
\end{definition}

In the case of bipartite systems, this definition is more general than the one given in \cite{long_paper}: First of all, it applies to a more general definition of bipartite system (in \cite{long_paper}, we only considered bipartite systems in Haag duality on a Hilbert space) and, second, it allows for the bipartite state to a be general state on the bipartite algebra instead of restricting to vector states.
The marginal states $\omega_x$, $x\in [N]$, of an $N$-partite emezzling state $\omega$ satisfy a monopartite version of the embezzling property, which can be viewed as a form of approximate catalysis:

\begin{definition}\label{def:mono}
    Let $\M$ be von Neumann algebra. A normal state $\omega$ on $\M$ is \emph{monopartite embezzling} if for every $d\in\NN$, every $\eps>0$ and every two states $\psi$ and $\phi$ on $M_d(\CC)$ there exists a unitary $u\in \M\ox M_d(\CC)$ such that
    \begin{align}
        \norm{u(\omega\ox\phi)u^* - \omega\ox\psi} <\eps.
    \end{align}
\end{definition}

Note that this definition of monopartite embezzling states on a von Neumann algebra $\M$ is \emph{not} a special case of the definition of an $N$-partite embezzling state with $N=1$ since we allow for mixed states $\psi,\phi$ in \cref{def:mono} but only consider pure states $\psi,\phi$ in \cref{def:mbz-state}. 
Instead, we will see that each of the marginals of a multipartite embezzling state is a monopartite embezzling state.
Assuming Haag duality, monopartite and bipartite embezzling are equivalent:

\begin{proposition}[{\cite[Thm.~11]{long_paper}}]
    Let $(\M_1,\M_2)$ be a bipartite system on $\H$ satisfying Haag duality, let $\Omega\in \H$ be a state vector and denote by $\omega_x$ the induced marginal state on $\M_x$, $x=1,2$.
    Then $\Omega$ is (bipartite) embezzling if and only if $\omega_1$ monopartite embezzling if and only if $\omega_2$ is monopartite embezzling.
\end{proposition}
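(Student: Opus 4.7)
The strategy is to prove the triple equivalence via two implications: (i) bipartite embezzling of $\Omega$ implies monopartite embezzling of each marginal $\omega_x$, and (ii) monopartite embezzling of $\omega_1$ implies bipartite embezzling of $\Omega$. Symmetry of the bipartite condition under exchange of the two parties (Haag duality being itself symmetric) then delivers the equivalence of monopartite embezzling for $\omega_1$ and $\omega_2$. The central structural input used throughout is the commutation relation
\begin{equation*}
(\M_1\ox M_d(\CC))' = \M_2\ox M_d(\CC)
\end{equation*}
on the ambient Hilbert space $\H\ox\CC^d\ox\CC^d$, where the two copies of $M_d(\CC)$ act on Alice's and Bob's respective $d$-dimensional auxiliaries; it follows from Haag duality $\M_2=\M_1'$ combined with the tensor-product commutation theorem.

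For direction (i), given states $\phi,\psi$ on $M_d(\CC)$ and $\eps>0$, I would first purify them to unit vectors $\tilde\phi,\tilde\psi\in\CC^d\ox\CC^d$ whose first-factor marginals are $\phi,\psi$. Applying bipartite embezzling of $\Omega$ with targets $\tilde\phi,\tilde\psi$ yields $u_x\in\M_x\ox M_d(\CC)$ with $\norm{u_1 u_2(\Omega\ox\tilde\phi)-\Omega\ox\tilde\psi}<\eps$, hence the induced vector states lie within $2\eps$ in norm. Restricting to Alice's subalgebra $\M_1\ox M_d(\CC)$, the commutation relation above forces $u_2$ to drop out of the marginal; since the marginal of $\Omega\ox\tilde\phi$ there equals $\omega_1\ox\phi$, one obtains $\norm{u_1(\omega_1\ox\phi)u_1^*-\omega_1\ox\psi}<2\eps$, establishing monopartite embezzling of $\omega_1$ (and similarly of $\omega_2$).

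Direction (ii) is the technically more involved one. Given pure $\phi,\psi$ on $(\CC^d)^{\ox 2}$ with first-marginals $\phi_1,\psi_1$, monopartite embezzling of $\omega_1$ supplies $u_1\in\M_1\ox M_d(\CC)$ with $\norm{u_1(\omega_1\ox\phi_1)u_1^*-\omega_1\ox\psi_1}<\eps$. The vectors $u_1(\Omega\ox\phi)$ and $\Omega\ox\psi$ then induce marginals on $\M_1\ox M_d(\CC)$ that are $\eps$-close in norm, and the remaining task is to produce a unitary $u_2$ in the commutant $\M_2\ox M_d(\CC)$ that approximately conjugates the first vector to the second. This is supplied by Araki's standard-form and natural-cone machinery: assuming $\Omega$ is cyclic and separating for $\M_1$ and $\phi,\psi$ have full Schmidt rank (the general case follows by a density approximation), both vectors can be moved into the natural cone of $\M_1\ox M_d(\CC)$, and the Powers--Störmer / Araki inequality $\norm{\xi_\omega-\xi_{\omega'}}^2\le\norm{\omega-\omega'}$ turns $\eps$-closeness of states into $O(\sqrt\eps)$-closeness of natural-cone vectors, from which an honest unitary $u_2\in\M_2\ox M_d(\CC)$ is extracted via the cyclic/separating property.

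\textbf{Main obstacle.} The heart of the argument is the quantitative vector-implementation step in direction (ii): upgrading an $\eps$-closeness of marginal states to an $O(\sqrt\eps)$-closeness of their implementing vectors and, crucially, producing a genuine unitary (rather than just a partial isometry) inside $\M_2\ox M_d(\CC)$. This requires careful use of the standard form of $\M_1\ox M_d(\CC)$, the cyclic/separating character of $\Omega$ for $\M_1$ (reducing to the invariant subspace $[\M_1\Omega]$ if necessary), and a density approximation of arbitrary pure bipartite states by ones of full Schmidt rank; the remaining parts of the proof are routine consequences of the Haag-duality commutation relation.
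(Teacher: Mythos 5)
This proposition is imported by citation from \cite{long_paper} and is not reproved in the present paper, so there is no in-text proof to compare against; your outline nevertheless follows what is essentially the argument of the cited result: Haag duality identifies the commutant of $\M_1\ox M_d(\CC)$ with Bob's algebra, purification handles the direction from bipartite to monopartite embezzling, and the standard form with the natural cone and the Powers--St\"ormer-type estimate $\norm{\xi_\omega-\xi_{\omega'}}^2\le\norm{\omega-\omega'}$ supplies the commutant unitary in the converse direction. The only place where real work remains unexecuted in your plan is the reduction to the case where $\Omega$ is cyclic and separating (via the support projections $[\M_1\Omega]$, $[\M_2\Omega]$) and the verification that the resulting partial isometries are genuine unitaries; you correctly identify these as the technical core, so the plan is sound.
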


A direct consequence of the definitions is the following:

\begin{lemma}\label{lem:subsystems-mbz}
    Let $(\M_x)_{x\in[N]}$ be an $N$-partite system and $\omega$ an embezzling state. 
    Let $I_m\subset [N]$, with $m=1,\ldots,M$ be a collection of disjoint subsets and $I=\cup_m I_m$. Then $\omega_I$ is an embezzling state on the $M$-partite system $(\M_{I_1},\ldots,\M_{I_M})$.
\end{lemma}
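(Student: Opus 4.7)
The plan is to reduce the $M$-partite embezzling of $\omega_I$ on $(\M_{I_1},\dots,\M_{I_M})$ to the $N$-partite embezzling of $\omega$ on $(\M_x)_{x\in[N]}$ via an explicit embedding, followed by a restriction and an absorption step.

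Given $d\in\NN$, target pure states $\psi,\phi$ on $(\CC^d)^{\otimes M}$, and $\eps>0$, I pick representatives $x_m\in I_m$ for each $m$ and define embedded target pure states $\tilde\psi,\tilde\phi$ on $(\CC^d)^{\otimes N}$ by placing the $m$-th tensor factor of the input at position $x_m$ and the vector $\ket 1$ at every position $y\notin\{x_1,\dots,x_M\}$. Applying the $N$-partite embezzling hypothesis for $\omega$ to $\tilde\psi,\tilde\phi$ yields unitaries $u_x\in\M_x\otimes M_d(\CC)$ with $\|u(\omega\otimes\tilde\phi)u^*-\omega\otimes\tilde\psi\|_1<\eps$, where $u=\prod_x u_x$.

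Next I set $v_m=\prod_{x\in I_m}u_x$ and $u_{I^c}=\prod_{y\in I^c}u_y$. Since the parties and ancilla positions indexed by $I^c$ are disjoint from those indexed by $I$, $u_{I^c}$ commutes with every element of the subalgebra $\M_I\otimes M_d(\CC)^{\otimes I}$ and thus cancels upon restriction. Monotonicity of the trace norm under restrictions then gives
\[ \|v(\omega_I\otimes\tilde\phi_I)v^*-\omega_I\otimes\tilde\psi_I\|_1 < \eps, \]
with $v=\prod_m v_m$ and $\tilde\phi_I=\phi\otimes\ketbra{1}{1}^{\otimes(I\setminus\{x_1,\dots,x_M\})}$, with $\phi$ occupying the positions $x_1,\dots,x_M$ (and similarly for $\tilde\psi_I$).

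The unitary $v_m$ a priori lies in $\M_{I_m}\otimes M_d(\CC)^{\otimes|I_m|}$ rather than in $\M_{I_m}\otimes M_d(\CC)$. To remedy this I invoke the absorption isomorphism $\M_{I_m}\otimes M_d(\CC)^{\otimes(|I_m|-1)}\cong\M_{I_m}$, available because the embezzling hypothesis forces each $\M_x$, and hence each $\M_{I_m}\supset\M_x$, to be properly infinite. Since the extra ancillas are both initialized and targeted at $\ketbra{1}{1}$, the isomorphism can be chosen so as to send $\sigma\otimes\ketbra{1}{1}^{\otimes(|I_m|-1)}\mapsto\sigma$ for normal states $\sigma$ on $\M_{I_m}$. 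Under this identification $v_m$ becomes a unitary in $\M_{I_m}\otimes M_d(\CC)$, the tensored states reduce to $\omega_I\otimes\phi$ and $\omega_I\otimes\psi$, and the trace-norm bound carries over, establishing the $M$-partite embezzling of $\omega_I$.

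The hardest part is this final absorption step: making the abstract isomorphism concrete enough to transport both the unitary $v_m$ and the joint state simultaneously, and to do so consistently across the $M$ super-parties. The proper infiniteness of $\M_x$ needed here follows from the observation that each marginal $\omega_x=\omega\upharpoonright\M_x$ must be monopartite embezzling, an instance of the lemma with $|I_m|=1$ that involves no absorption and is a truly direct consequence of the definitions.
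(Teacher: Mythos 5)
Your reduction --- embedding the $M$-partite targets into $(\CC^d)^{\ox N}$ by placing them at representatives $x_m\in I_m$ and padding with $\ket 1$, applying the $N$-partite embezzling property, grouping $v_m=\prod_{x\in I_m}u_x$, cancelling $u_{I^c}$ against the restriction, and invoking monotonicity of the norm --- is exactly the paper's proof, and up to that point everything is correct. The paper in fact stops there: it simply takes $u_{I_m}=\prod_{x\in I_m}u_x$ as the local unitaries and performs no absorption (in all of its applications the blocks $I_m$ are singletons, so the ancilla count matches automatically). You are right that for $|I_m|>1$ the unitary $v_m$ lives in $\M_{I_m}\ox M_d(\CC)^{\ox|I_m|}$ rather than $\M_{I_m}\ox M_d(\CC)$, and noticing this wrinkle is to your credit.

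However, your fix is where the argument breaks. There is no *-isomorphism $\gamma:\M_{I_m}\ox M_d(\CC)^{\ox(|I_m|-1)}\to\M_{I_m}$ whose pullback sends every normal state $\sigma$ to $\sigma\ox\kettbra{1}^{\ox(|I_m|-1)}$, proper infiniteness notwithstanding: an isomorphism maps support projections to support projections, so it pulls back faithful states to faithful states, whereas $\sigma\ox\kettbra{1}^{\ox k}$ is never faithful for $k\geq1$. Equivalently, such a $\gamma$ would have to satisfy $\gamma(1\ox\kettbra{1}^{\ox k})=1$, which is impossible for a unital injective map. Even a version intertwining only the specific states at hand fails whenever $\omega_{I_m}$ has full support (which there is no reason to exclude), and one would additionally need compatibility with the joint, non-product state $\omega_I$ across all $m$ simultaneously. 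A correct repair compresses rather than conjugates: let $q_m=1\ox1\ox\kettbra{1}^{\ox(|I_m|-1)}$ project the surplus ancillas onto $\ket1$; then $s_m=q_mv_mq_m$ is a contraction which, under the identification $q_m\bigl(\M_{I_m}\ox M_d(\CC)^{\ox|I_m|}\bigr)q_m\cong\M_{I_m}\ox M_d(\CC)$, approximately implements the transition, and it can be upgraded to a unitary at the cost of an $\eps\mapsto O(\eps^{1/2})$ degradation, in the spirit of \cref{lem:contractions}. Alternatively, one checks once and for all that the definition of a multipartite embezzling state is insensitive to enlarging the local ancillas by factors that are initialized and returned in a fixed pure product state.
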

\begin{proof}
   Let $\psi_I,\phi_I$ be pure states on $M_d(\CC)^{\otimes M}$ and let $\psi = \psi_I \ox (\bra1\placeholder\ket1)^{\ox N-M}$, $\phi = \phi_I \ox (\bra1\placeholder\ket1)^{\ox N-M}$.  
   Let $u_x \in \M_x$ be unitaries that embezzle the pure state $\psi$ from the pure state $\phi$ with precision $\eps>0$. Set $u_{I_m} = \prod_{x\in I_m} u_x$ and $u_I = \prod_{x\in I}u_{x} = \prod_{m=1}^M u_{I_m}$. Then 
    \begin{align}
         \norm{u_I(\omega_I \ox \phi_I)u_I^* - \omega_I\ox \psi_I} \leq  \norm{u(\omega \ox \phi)u^* - \omega\ox \psi} <\eps,
    \end{align}
    where we used monotonicity of the norm.
\end{proof}

\begin{lemma}\label{lem:pure-to-mixed}
Let $(\M_x)_{x\in [N]}$ be an $N$-partite system, $\omega$ an embezzling state and $I\subset[N]$ a collection of $M\leq \lfloor N/2\rfloor$ subsystems.
Then the state $\omega_I$ on the $M$-partite system $(\M_x)_{x\in I}$ can embezzle arbitrary $M$-partite mixed states: For all $d\in\NN$, all states $\psi,\phi$ on $M_d(\CC)^{\ox M}$, and every $\eps>0$ there exist unitaries $u_x \in \M_{x}$, $x\in I$, such that
\begin{equation}
    \norm{u_I (\omega_I\ox\phi)u_I^*-\omega\ox\psi}<\eps, \qquad u_I=\prod_{x\in I}u_x.
\end{equation}
In particular, each of the marginals $\omega_x$, $x\in[N]$, is a monopartite embezzling state.
\end{lemma}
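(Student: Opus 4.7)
The idea is purification: a mixed state on the $M$ parties in $I$ can be purified to a pure state on $2M$ parties, and the hypothesis $M\leq\lfloor N/2\rfloor$, equivalently $|I^c|\geq M$, is exactly what is needed to accommodate the $M$ ancillary parties inside $I^c$. We first embezzle the purifications using the multipartite embezzling property of $\omega$, and then restrict the resulting unitaries to the parties in $I$.

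Concretely, fix an injection $\iota:I\hookrightarrow I^c$, set $I'=\iota(I)$, and choose an identification $\CC^{d^M}\cong\bigotimes_{x\in I}\CC^d_{\iota(x)}$. Given states $\phi,\psi$ on $\bigotimes_{x\in I}\CC^d$, both of rank at most $d^M$, form standard purifications $\tilde\phi,\tilde\psi\in(\CC^d)^{\ox N}$ whose ancilla is placed on the parties in $I'$ (through the above identification), with the remaining $N-2M$ parties in the product state $\ket{1}^{\ox(N-2M)}$. By construction, $\phi$ (resp.\ $\psi$) is the marginal of $\tilde\phi$ (resp.\ $\tilde\psi$) on $(\CC^d)^{\ox I}$. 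Applying \cref{def:mbz-state} to $\omega$ with target pair $(\tilde\phi,\tilde\psi)$ yields unitaries $u_x\in \M_x\ox M_d(\CC)$, $x\in[N]$, such that with $u=\prod_x u_x$,
\begin{equation*}
    \|u(\omega\ox\tilde\phi)u^* - \omega\ox\tilde\psi\| < \eps.
\end{equation*}

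Now split $u=u_I u_{I^c}$ with $u_I=\prod_{x\in I}u_x$ and $u_{I^c}=\prod_{y\in I^c}u_y$, and restrict both functionals above to the subalgebra $\M_I\ox M_d(\CC)^{\ox I}$. This restriction is norm-contractive on the space of normal functionals. On the right-hand side it produces $\omega_I\ox\psi$ by the purification property. On the left-hand side, every $A\in\M_I\ox M_d(\CC)^{\ox I}$ satisfies $[A,u_{I^c}]=0$: $\M_I$ commutes with $\M_{I^c}$ by the defining commutativity of multipartite systems, and the matrix factors on distinct parties commute trivially. Hence $u_{I^c}^*u_{I^c}$ cancels inside the evaluation and the restriction becomes $u_I(\omega_I\ox\phi)u_I^*$. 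This yields the required bound, and the \emph{in particular} clause is the special case $M=1$, $I=\{x\}$.

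The argument is a bookkeeping reduction, so there is no serious obstacle; the one point requiring care is the restriction step, where one must combine the commutativity built into the multipartite structure with the defining property of the purification to identify the restricted left-hand side as $u_I(\omega_I\ox\phi)u_I^*$.
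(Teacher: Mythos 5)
Your proposal is correct and follows essentially the same route as the paper: purify the mixed targets to pure states on $(\CC^d)^{\ox N}$ using the spare parties guaranteed by $M\leq\lfloor N/2\rfloor$, invoke the embezzling property of $\omega$ for the purifications, and then restrict to $\M_I\ox M_d(\CC)^{\ox I}$, where the unitaries of the parties in $I^c$ cancel by commutativity and the norm only decreases. The paper compresses the restriction step into a reference to \cref{lem:subsystems-mbz}; you spell it out, which is the only difference.
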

\begin{proof}
    Let $\psi$ be a state on $M_d(\CC)^{\otimes M}$. Since $M\leq \lfloor N/2\rfloor$, there exists a pure state $\varphi$ on $M_d(\CC)^{\otimes N}$, so that $\psi = \varphi_I$. 
    Now apply the same reasoning as in \cref{lem:subsystems-mbz}.
\end{proof}

\begin{corollary}\label{cor:nogo} Consider an $N$-partite system $(\M_x)_{x\in[N]}$ with $N\geq 3$. If any of the bipartite subsystems $(\M_x,\M_y)$ with $x\ne y$ is split, then $(\M_x)_{x\in[N]}$ cannot host $N$-partite embezzling states. 
\end{corollary}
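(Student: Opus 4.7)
The plan is to reduce the claim to the bipartite no-go theorem from \cite{long_paper}, which states that a split bipartite system cannot host a bipartite embezzling state.

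Suppose toward contradiction that $\omega$ is an $N$-partite embezzling state on $(\M_x)_{x\in[N]}$, and let $(\M_x,\M_y)$ with $x\ne y$ be the hypothesized split bipartite subsystem. First I would apply \cref{lem:subsystems-mbz} with the two disjoint singletons $I_1=\{x\}$ and $I_2=\{y\}$, simply dropping the remaining $N-2\ge 1$ parties. The lemma yields directly that the marginal $\omega_{\{x,y\}}$ is a bipartite embezzling state on $(\M_x,\M_y)$.

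Since $(\M_x,\M_y)$ is split by hypothesis, this contradicts the bipartite no-go theorem cited above, completing the argument.

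The only substantive input is the bipartite no-go theorem imported from \cite{long_paper}; the reduction itself is essentially automatic once \cref{lem:subsystems-mbz} is available. Heuristically, the hard part of that imported theorem is that by \cref{lem:product-state} the split property supplies a faithful normal product state on $\M_x\vee\M_y$, and the resulting factorized coupling between the two subalgebras is too rigid to support the near-arbitrary transitivity demanded of an embezzler. The hypothesis $N\ge 3$ serves only to make the statement genuinely multipartite: for $N=2$ the assertion would reduce verbatim to the bipartite no-go theorem itself.
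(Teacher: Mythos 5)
Your proof is correct and follows essentially the same route as the paper: reduce to the bipartite subsystem $(\M_x,\M_y)$ via \cref{lem:subsystems-mbz} and then invoke the split no-go theorem. The only cosmetic difference is attribution --- the relevant no-go result is \cref{thm:no-go}, proved in this paper's appendix (building on the flow-of-weights machinery of \cite{long_paper}), rather than a statement imported verbatim from \cite{long_paper}.
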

\begin{proof}
   By \cref{thm:no-go}, there exist no embezzling states on $(\M_x,\M_y)$. Hence, by \cref{lem:subsystems-mbz} the system $(\M_x)_{x\in[N]}$ cannot host $N$-partite embezzling states.
\end{proof}

\subsection{Multipartite embezzling families}

\begin{definition}\label{def:increasing-net-Npartite}
    An \emph{increasing net of $N$-partite systems} $(\M^{(n)}_1,\ldots,\M^{(n)}_N)_n$ is a collection of $N$ increasing nets of von Neumann algebras $(\M_x^{(n)})_n$ such that $(\M^{(n)}_x)_{x\in[N]}$ is an $N$-partite system for each $n$. 
    An $N$-partite system $(\M_x)_{x\in[N]}$ is \emph{generated} by an increasing net of subsystems $(\M^{(n)}_1,\ldots,\M^{(n)}_N)_n$ if $\M_x = \vee_n \M^{(n)}_x$ for all $x\in[N]$.
\end{definition}

\begin{definition}[Multipartite embezzling family]\label{def:multipartite-mbz-family}
  A net of normal states $(\omega^{(n)})_n$ on an increasing net of $N$-partite systems $(\M_1^{(n)},\ldots,\M_N^{(n)})_n$ is an $N$-\emph{partite embezzling family} if for all $d\in\NN$ and $\eps>0$ there exists $n = n(d,\eps)$ such that for all $m\geq n$ and all pure states $\psi$ and $\phi$ on the $N$-partite system on $(\CC^d)^{\otimes N}$ there exist unitaries $u^{(m)}_x \in \M^{(m)}_x\otimes M_{d}(\CC)$ such that
  \begin{align}
      \norm{u^{(m)}(\omega^{(m)}\ox\phi)(u^{(m)})^* - \omega^{(m)} \ox \psi} < \eps,\quad u^{(m)} = \prod_x u_x^{(m)}.
  \end{align}
\end{definition}

If we have an increasing net of bipartite systems, local unitaries $u^{(m)}_x \in \M^{(m)}_x$ are also elements of $\M^{(n)}_x$ for $n\geq m$ and the states $\omega^{(n)}$ on $\M^{(n)}_{[N]} = \vee_x \M^{(n)}_x$ restrict to $\M^{(m)}_{[N]}$ for $m\leq n$.  This allows us to define quasi-consistent  and consistent embezzling families:

\begin{definition}[Consistency]
\label{def:consistent}
    An $N$-partite embezzling family $(\omega^{(n)})_n$ on an increasing net of $N$-partite systems $(\M_1^{(n)},\ldots,\M_N^{(n)})_n$ is \emph{quasi-consistent} if for all $d\in\NN$ and $\eps>0$ there exists $n = n(d,\eps)$ such that for all pure states $\psi$ and $\phi$ on the $N$-partite system on $(\CC^d)^{\otimes N}$ there exist unitaries $u^{(n)}_x \in \M^{(n)}_x\otimes M_{d}(\CC)$  such that for all $m\geq n$
  \begin{align}
      \norm{u^{(n)}(\omega^{(m)}\ox\phi)(u^{(n)})^* - \omega^{(m)} \ox \psi} < \eps,\quad u^{(n)} = \prod_x u_x^{(n)}.
  \end{align}
  If in addition we have $\omega^{(n)} = \omega^{(m)} \restriction \M^{(n)}_{[N]}$ for all $n\leq m$, we call $(\omega^{(n)})_n$ \emph{consistent}.
\end{definition}
Now consider a state $\omega$ on a multipartite system $(\M_x)_{x\in[N]}$ and assume that each $\M_x$ contains a weakly dense increasing net of von Neumann algebras:\footnote{We can assume, without loss of generality, that the index set $I$ of the net is the same for each party: Otherwise, construct the product $I=I_1\times \ldots\times I_N$ of the local sets $I_x$ with the product order and set $\M^{(n)}_x = \M^{(n_x)}_x$ for $n=(n_1,\ldots,n_N)\in I$. }
\begin{align}
    \M_x = \bigvee_n \M_x^{(n)},
\end{align}
meaning that the multipartite system $(\M_x)_{x\in[N]}$ is generated by  the increasing net of subsystems $(\M\up{n}_1,\ldots,\M\up{n}_N)_n$.
It follows that, if $\omega$ is embezzling, then the obtained sequence is an embezzling family (see \cref{thm:state-to-family} below).

A prototypical example of this setting is an infinite lattice system, partitioned into $N$ infinite cones in a "piece of cake"-like partitioning (see \cref{fig:illustration}). Then, each cone contains a weakly dense, increasing sequence of finite type I factors, each describing finitely many of the spins of the cone. More generally, whenever each $\M_x$ is hyperfinite (or approximately finite-dimensional), this is true.
By restricting $\omega$ to the multipartite systems
\begin{align}
    (\M^{(n)}_x)_{x\in[N]},\quad \M^{(n)}_{[N]} = \vee_x \M^{(n)}_x,
\end{align}
we obtain a consistent sequence of states $(\omega^{(n)})_n$ on an increasing sequence of $N$-partite systems.

\begin{theorem}\label{thm:state-to-family}
    Let $(\M_x)_{x\in[N]}$ be an $N$-partite system generated by an increasing net of subsystems $(\M\up{n}_x,\ldots,\M\up{n}_N)_n$.
    If $\omega$ is an $N$-partite embezzling state, then its restrictions $(\omega^{(n)})_n$ to $(\M^{(n)}_x)_{x\in[N]}$ form a consistent $N$-partite embezzling family.
\end{theorem}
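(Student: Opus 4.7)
The plan is to deduce consistency from the embezzling property of $\omega$ in two stages: first, approximate the ``ideal'' local unitaries in $\M_x\ox M_d(\CC)$ by ones lying in a single $\M_x^{(n)}\ox M_d(\CC)$ via Kaplansky's density theorem for unitaries; second, use compactness of the space of pairs of pure target states on $(\CC^d)^{\ox N}$ to make the required index $n$ uniform. Once this has been achieved for $\omega$, the extension to every $\omega^{(m)}$ with $m\geq n$ will be automatic by restriction.

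For the pointwise step, I fix $d\in\NN$, $\eps>0$, and a pair of pure states $(\psi,\phi)$ on the $N$-partite system on $(\CC^d)^{\ox N}$. By \cref{def:mbz-state} there exist unitaries $v_x\in\M_x\ox M_d(\CC)$ with $\|v(\omega\ox\phi)v^*-\omega\ox\psi\|<\eps/2$, where $v=\prod_x v_x$. Since $M_d(\CC)$ is finite-dimensional and $\M_x=\vee_n\M_x^{(n)}$, the *-subalgebra $\bigcup_n(\M_x^{(n)}\ox M_d(\CC))$ is weakly dense in $\M_x\ox M_d(\CC)$; Kaplansky's density theorem together with a standard functional-calculus argument (writing $v_x=\exp(iH_x)$ with $H_x=H_x^*\in\M_x\ox M_d(\CC)$ and approximating $H_x$ in SOT within the subalgebra) therefore yields unitaries in this union approximating $v_x$ arbitrarily well in SOT. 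Representing $\omega\ox\phi$ by a cyclic vector $\Xi$ in a faithful (e.g., standard) representation and invoking the vector-state inequality $\|\omega_\xi-\omega_\eta\|\le 2\|\xi-\eta\|$, SOT-convergence of the approximating unitaries translates into norm-convergence of the conjugated normal states. By directedness of the net, I obtain an index $n=n(\psi,\phi,\eps)$ and unitaries $u_x\in\M_x^{(n)}\ox M_d(\CC)$ with $\|u(\omega\ox\phi)u^*-\omega\ox\psi\|<\eps$.

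To turn the pointwise $n$ into one uniform in the target states, I exploit compactness of the space of pairs of pure states on $(\CC^d)^{\ox N}$, which is a product of two complex projective spaces. For any fixed tuple of unitaries $u_x\in\M_x^{(n)}\ox M_d(\CC)$, the map $(\psi,\phi)\mapsto\|u(\omega\ox\phi)u^*-\omega\ox\psi\|$ is continuous, so the set of pairs for which the unitaries from the previous step produce an error strictly less than $\eps$ is an open neighborhood of the pair for which they were constructed. Extracting a finite subcover and taking $n=n(d,\eps)$ to be the maximum of the finitely many indices that occur, directedness of the net ensures that all the corresponding unitaries lie in $\M_x^{(n)}\ox M_d(\CC)$; for each target pair $(\psi,\phi)$ I then use the unitaries associated with whichever neighborhood contains it.

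Finally, for $m\geq n$ the chosen unitaries satisfy $u_x\in\M_x^{(n)}\ox M_d(\CC)\subset\M_x^{(m)}\ox M_d(\CC)$, so the conjugated states $u(\omega\ox\phi)u^*$ and $\omega\ox\psi$, restricted to $\M_{[N]}^{(m)}\ox M_d(\CC)^{\ox N}$, equal $u(\omega^{(m)}\ox\phi)u^*$ and $\omega^{(m)}\ox\psi$, respectively; since the predual norm does not increase under restriction, the estimate $\|u(\omega^{(m)}\ox\phi)u^*-\omega^{(m)}\ox\psi\|<\eps$ is inherited. The compatibility $\omega^{(n)}=\omega^{(m)}\!\upharpoonright\!\M_{[N]}^{(n)}$ for $n\leq m$ is automatic, as both are restrictions of $\omega$. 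The main technical subtlety I expect lies in the first step, namely in promoting Kaplansky's SOT approximation of unitaries into a norm approximation of the conjugated normal states; the compactness and restriction arguments in the subsequent steps are comparatively routine.
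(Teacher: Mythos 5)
Your proof is correct and follows essentially the route the paper indicates for \cref{thm:state-to-family}: the Kaplansky-plus-functional-calculus approximation of the local unitaries by unitaries in $\bigcup_n \M_x^{(n)}\ox M_d(\CC)$ is precisely the content of \cref{lem:unitary_closure} (strong* density of $\U(\M_0)$ in $\U(\M)$), which the paper cites as the key ingredient of its (omitted) proof. Your compactness argument over pairs of pure target states to make $n$ depend only on $(d,\eps)$, and the restriction step for $m\ge n$, supply the remaining routine ingredients correctly.
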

The theorem generalizes \cite[Thm. 35]{van_luijk_critical_2024} to the multipartite setting. 
Since its proof is essentially identical (up to replacing sequences with nets and taking products of unitaries), we omit it here.
It relies on the observation that each party may approximate their local embezzling unitaries $u_x$ arbitrarily well by unitary elements of the increasing net of factors $\cup_n \M^{(n)}_x$:\footnote{In \cite{van_luijk_critical_2024} the Lemma is only stated for increasing sequences, but it is also valid for increasing nets.}
\begin{lemma}[\cite{van_luijk_critical_2024}, Lem.~36]
\label{lem:unitary_closure}
Consider a weakly dense, increasing net of von Neumann algebras $\cup_n \M^{(n)}\subset \M$ and denote $\M_{0}=\bigcup_{n}\M_{n}$. Then
\begin{align}
\label{eq:unitary_closure}
\U(\M) & = \overline{\U(\M_{0})}^{s^*}
\end{align}
where the closure on the right is taken in the strong* topology.
\end{lemma}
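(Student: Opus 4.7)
The plan is to prove both inclusions separately. The easy direction $\overline{\U(\M_{0})}^{s^{*}}\subseteq\U(\M)$ just observes that every $u\in\U(\M_{0})$ lies in some $\M^{(n)}\subseteq\M$ and hence is unitary in $\M$, together with the fact that $\U(\M)$ is strong-$*$ closed: multiplication and the adjoint are jointly strong-$*$ continuous on bounded sets, so the relations $uu^{*}=u^{*}u=1$ are preserved under strong-$*$ limits.

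For the reverse inclusion, I would use a ``self-adjoint logarithm'' trick. Fix $u\in\U(\M)$ and write $u=\exp(\i h)$ with $h=h^{*}\in\M$, $\norm{h}\le\pi$, using the Borel functional calculus applied to a measurable branch of $\log$ on the unit circle. Since $\M_{0}$ is a $*$-subalgebra of $\M$ whose weak closure is $\M$, Kaplansky's density theorem (applied to self-adjoint elements, after a rescaling by $\pi$) produces a net $(h_{\alpha})$ of self-adjoint elements in $\M_{0}$ with $\norm{h_{\alpha}}\le\pi$ and $h_{\alpha}\to h$ in the strong operator topology. Each $h_{\alpha}$ lies in some $\M^{(n)}$ (depending on $\alpha$), which is a von Neumann algebra and is thus closed under bounded Borel functional calculus, so $u_{\alpha}:=\exp(\i h_{\alpha})\in\M^{(n)}\subseteq\M_{0}$ is a unitary approximant.

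It remains to verify $u_{\alpha}\to u$ in strong-$*$. Strong convergence $u_{\alpha}\to u$ follows from uniform polynomial approximation of $x\mapsto e^{\i x}$ on $[-\pi,\pi]$ combined with the SOT-continuity of polynomial functional calculus on norm-bounded sets of self-adjoint operators. Since the $u_{\alpha}$ and $u$ are all unitary, strong convergence upgrades automatically to strong-$*$ convergence, as one sees from $\norm{u_{\alpha}^{*}\xi-u^{*}\xi}^{2}=2\norm{\xi}^{2}-2\Re\langle u_{\alpha}u^{*}\xi,\xi\rangle\to 0$, where $u_{\alpha}(u^{*}\xi)\to u(u^{*}\xi)=\xi$ by strong convergence. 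The main subtlety I anticipate is maintaining the uniform norm bound on the self-adjoint approximants while simultaneously securing SOT convergence of the logarithms; this is exactly what Kaplansky's density theorem delivers, so the remainder reduces to routine functional-analytic bookkeeping.
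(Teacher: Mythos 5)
Your proof is correct and follows the standard route: the paper itself omits the argument and cites \cite{van_luijk_critical_2024} for it, and the proof there is essentially the one you give --- write $u=\exp(\i h)$ with $h=h^*$, $\norm h\le\pi$, use Kaplansky density to obtain self-adjoint $h_\alpha\in\M_0$ with $\norm{h_\alpha}\le\pi$ converging strongly to $h$, exponentiate inside the individual $\M^{(n)}$, and use joint strong continuity of multiplication on bounded sets to pass the convergence through the functional calculus. Your remaining checks (strong-$*$ closedness of $\U(\M)$ via continuity of the involution and of products on bounded sets, and the automatic upgrade of strong to strong-$*$ convergence for nets of unitaries converging to a unitary) are also correct.
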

We now consider the converse question: Does a consistent or quasi-consistent embezzling family induce an embezzling state?
\begin{theorem}\label{thm:consistent-to-state}
    Let $(\M^{(n)}_1,\ldots,\M^{(n)}_N)_{n}$ be an increasing net of $N$-partite systems and let $(\omega^{(n)})_{n}$ be a consistent $N$-partite embezzling family.
    Then there exists an $N$-partite system $(\M_x)_{x\in[N]}$ with an embezzling state $\omega$ and a family of normal *-morphisms $j^{(n)}:\M^{(n)}_{[N]}\to \M_{[N]}$ such that:
    \begin{enumerate}
    \item $j^{(n)}\restriction \M^{(m)}_{[N]} = j^{(m)}$ for $m\leq n$,
    \item  $j^{(n)}(\M^{(n)}_x) \subset \M_x$, 
    \item $\M_{[N]} = \bigvee_n j^{(n)}(\M^{(n)}_{[N]})$ and  $\omega$ has central support $1$,
    \item $\omega^{(n)} = \omega\circ j^{(n)}$.
    \end{enumerate}
    With these properties, the triple $(\M_{[N]},(j^{(n)})_n,\omega)$ is unique up to isomorphism.
\end{theorem}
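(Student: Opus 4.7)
The plan is to realize $(\M_{[N]},\omega)$ as a Hilbert-space inductive limit of the GNS data for the consistent family $(\omega^{(n)})_n$, to transfer the uniform-in-$m$ embezzling estimate to $\omega$ by a $\sigma$-weak density argument, and to extract uniqueness from the rigidity of GNS representations.

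\emph{Construction.} For each $n$, let $(\H^{(n)},\pi^{(n)},\Omega^{(n)})$ denote the GNS triple of $\omega^{(n)}$. Consistency, $\omega^{(m)}\restriction\M^{(n)}_{[N]}=\omega^{(n)}$ for $n\le m$, produces compatible isometries $V^{(m,n)}:\H^{(n)}\to\H^{(m)}$ determined by $V^{(m,n)}\pi^{(n)}(a)\Omega^{(n)}:=\pi^{(m)}(a)\Omega^{(m)}$, and satisfying $V^{(k,m)}V^{(m,n)}=V^{(k,n)}$. I let $(\H,V^{(n)},\Omega)$ be the Hilbert-space inductive limit, and set $j^{(n)}(a)V^{(m)}\xi:=V^{(m)}\pi^{(m)}(a)\xi$ for $m\ge n$ and $\xi\in\H^{(m)}$. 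Each $j^{(n)}$ is a normal *-morphism $\M^{(n)}_{[N]}\to\B(\H)$, since every vector functional on $\B(\H)$ pulls back via $j^{(n)}$ to a norm-limit (in the dual of $\M^{(n)}_{[N]}$) of vector functionals of the normal GNS representations $\pi^{(m)}$, and the predual $(\M^{(n)}_{[N]})_*$ is norm-closed in the dual. I define $\M_x:=\bigvee_n j^{(n)}(\M^{(n)}_x)$, $\M_{[N]}:=\bigvee_x\M_x$, and $\omega:=\bra\Omega\placeholder\ket\Omega$. Pairwise commutativity of the $\M_x$ reduces through the directed structure to commutativity of $\M^{(n)}_x,\M^{(n)}_y$ inside $\M^{(n)}_{[N]}$; the four enumerated properties are immediate, and $\Omega$ is cyclic for $\M_{[N]}$, giving central support $1$.

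\emph{Embezzling property.} Given $d\in\NN$, $\eps>0$, and pure states $\psi,\phi$ on $(\CC^d)^{\ox N}$, consistency supplies $n=n(d,\eps)$ and unitaries $u^{(n)}_x\in\M^{(n)}_x\ox M_d(\CC)$ with $\norm{u^{(n)}(\omega^{(m)}\ox\phi)(u^{(n)})^*-\omega^{(m)}\ox\psi}<\eps$ for every $m\ge n$. Put $\tilde u:=\prod_x (j^{(n)}\ox\id)(u^{(n)}_x)\in\M_{[N]}\ox M_d(\CC)^{\ox N}$. For any contraction $b\in\M^{(m)}_{[N]}\ox M_d(\CC)^{\ox N}$ with $m\ge n$, properties 1 and 4 give
\begin{align}
\bigl(\tilde u(\omega\ox\phi)\tilde u^*-\omega\ox\psi\bigr)\bigl((j^{(m)}\ox\id)(b)\bigr)=\bigl(u^{(n)}(\omega^{(m)}\ox\phi)(u^{(n)})^*-\omega^{(m)}\ox\psi\bigr)(b),
\end{align}
whose modulus is bounded by $\eps$. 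Since the images of $j^{(m)}\ox\id$ for varying $m\ge n$ are jointly $\sigma$-weakly dense in $\M_{[N]}\ox M_d(\CC)^{\ox N}$ (property 3 combined with finite-dimensionality of $M_d(\CC)^{\ox N}$), Kaplansky density together with normality of the functional $\tilde u(\omega\ox\phi)\tilde u^*-\omega\ox\psi$ yields $\norm{\tilde u(\omega\ox\phi)\tilde u^*-\omega\ox\psi}\le\eps$, so $\omega$ is an $N$-partite embezzling state.

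\emph{Uniqueness.} Given another triple $(\N_{[N]},(k^{(n)})_n,\tau)$, the maps $(k^{(n)})_n$ realize the directed system $(\M^{(n)}_{[N]})_n$ inside the GNS Hilbert space $\H_\tau$ of $\tau$, with a cyclic vector $\Omega_\tau$ whose induced states on $\M^{(n)}_{[N]}$ again coincide with $\omega^{(n)}$. The uniqueness of the cyclic GNS representation of the *-algebra $\bigcup_n\M^{(n)}_{[N]}$ equipped with the compatible state then produces a unitary $U:\H\to\H_\tau$ with $U\Omega=\Omega_\tau$ and $Uj^{(n)}(a)=k^{(n)}(a)U$ for all $n$, $a$. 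Weak closure gives $\Ad(U)(\M_{[N]})\supset\N_{[N]}$, the symmetric argument applied to $U^*$ yields the reverse inclusion, and the same reasoning restricts to an isomorphism $\M_x\to\N_x$ for each $x$, proving uniqueness up to isomorphism.

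The main technical obstacle is the embezzling step: transferring the uniform-in-$m$ norm bound from the consistent family to the limit state $\omega$. Consistency, rather than mere existence of an embezzling family, is indispensable here because it furnishes a single choice of unitaries whose bound holds simultaneously for all sufficiently large $m$; this is exactly the input needed so that normality of the relevant functional together with Kaplansky density extends the estimate from the $\sigma$-weakly dense subalgebra $\bigcup_m(j^{(m)}\ox\id)(\M^{(m)}_{[N]}\ox M_d(\CC)^{\ox N})$ to all of $\M_{[N]}\ox M_d(\CC)^{\ox N}$.
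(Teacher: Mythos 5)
Your construction is essentially the paper's: you build the limit as a GNS/inductive-limit object for the compatible state on $\bigcup_n\M^{(n)}_{[N]}$, prove normality of the $j^{(n)}$ by approximating vector functionals (this is the content of \cref{lem:local_normal}), transfer the uniform-in-$m$ estimate to $\omega$ by a density argument, and get uniqueness from GNS rigidity. The overall route is correct.

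There is one gap in the density step, and it is precisely the point the paper isolates as \cref{lem:norm_lift}. Your estimate controls $\bigl|\bigl(\tilde u(\omega\ox\phi)\tilde u^*-\omega\ox\psi\bigr)(c)\bigr|$ only for $c$ in the set $\{(j^{(m)}\ox\id)(b):\ \norm{b}\le1,\ m\ge n\}$. To conclude $\norm{\tilde u(\omega\ox\phi)\tilde u^*-\omega\ox\psi}\le\eps$ via Kaplansky density and normality, you need the bound on the full unit ball of the $\sigma$-weakly dense $*$-subalgebra $\bigcup_m(j^{(m)}\ox\id)(\M^{(m)}_{[N]}\ox M_d(\CC)^{\ox N})$. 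Since the $j^{(m)}$ need not be injective (the $\M^{(m)}_{[N]}$ are not assumed factorial), $j^{(m)}\ox\id$ is only contractive, so a priori there may be contractions in the image whose chosen preimages have norm exceeding $1$, and your estimate does not cover them. The fix is to show that every (self-adjoint, say) contraction in the image admits a preimage of equal norm --- either by the continuous-functional-calculus argument of \cref{lem:norm_lift}, or by noting that $\ker j^{(m)}$ is $(1-z)\M^{(m)}_{[N]}$ for a central projection $z$ and replacing $b$ by $zb$. With that sentence added, your argument closes and coincides with the paper's proof.
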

An important special case arises when the increasing net of $N$-partite systems is factorial: since normal *-morphisms from a  factor into another von Neumann algebra are injective, we can then view each $\M^{(n)}_{[N]}$ as embedded in $\M_{[N]}$:
\begin{corollary}\label{cor:consistent-to-state-factorial}
    Let $(\M^{(n)}_1,\ldots,\M^{(n)}_N)_{n}$ be an increasing net of factorial $N$-partite systems and let $(\omega^{(n)})_{n}$ be a consistent $N$-partite embezzling family. There exists an $N$-partite system $(\M_x)_{x\in[N]}$ generated by  $(\M^{(n)}_1,\ldots,\M^{(n)}_N)_{n}$, together with an embezzling state $\omega$ such that $\omega^{(n)} =\omega\restriction \M^{(n)}_{[N]}$. With these properties the pair of $N$-partite system and embezzling state is unique up to isomorphism. 
\end{corollary}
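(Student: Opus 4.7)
The plan is to deduce the corollary from \cref{thm:consistent-to-state}, using factoriality to upgrade the $*$-morphisms $j^{(n)}$ to embeddings, and then verifying that the resulting $\omega$ is embezzling on the subsystem \emph{generated by} the images of the $j^{(n)}$, rather than merely on the ambient system produced by the theorem.

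First, I apply \cref{thm:consistent-to-state} to obtain an ambient $N$-partite system $(\tilde\M_x)_{x\in[N]}$, an embezzling state $\omega$ on $\tilde\M_{[N]}$, and a coherent family of normal $*$-morphisms $j^{(n)}\colon \M^{(n)}_{[N]}\to\tilde\M_{[N]}$ satisfying properties (1)--(4). Because each $\M^{(n)}_{[N]}$ is a factor, the kernel of the normal $*$-morphism $j^{(n)}$ is an ultraweakly closed two-sided ideal, hence either $\{0\}$ or all of $\M^{(n)}_{[N]}$; the latter is excluded because $\omega\circ j^{(n)}=\omega^{(n)}$ is a state. Thus each $j^{(n)}$ is a normal $*$-isomorphism onto its image, and I identify $\M^{(n)}_{[N]}$ with $j^{(n)}(\M^{(n)}_{[N]})\subset\tilde\M_{[N]}$. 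Property~(1) makes this identification compatible with the inclusions $\M^{(m)}_{[N]}\subset\M^{(n)}_{[N]}$ for $m\le n$, and property~(4) reads $\omega^{(n)}=\omega\restriction\M^{(n)}_{[N]}$.

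Next, I define $\M_x:=\bigvee_n j^{(n)}(\M^{(n)}_x)\subset\tilde\M_x$. The algebras $\M_x,\M_y$ with $x\ne y$ commute: for indices $n,n'$ and a common upper bound $n''\ge n,n'$, property~(1) places $j^{(n)}(\M^{(n)}_x)$ and $j^{(n')}(\M^{(n')}_y)$ inside $j^{(n'')}(\M^{(n'')}_x)$ and $j^{(n'')}(\M^{(n'')}_y)$ respectively, which commute inside $j^{(n'')}(\M^{(n'')}_{[N]})$; commutation persists under ultraweak closure. Moreover, by property~(3),
\begin{equation*}
\bigvee_x \M_x = \bigvee_n \bigvee_x j^{(n)}(\M^{(n)}_x) = \bigvee_n j^{(n)}(\M^{(n)}_{[N]}) = \tilde\M_{[N]},
\end{equation*}
so $(\M_x)_{x\in[N]}$ is an $N$-partite system on $\tilde\M_{[N]}$ generated by the increasing net in the sense of \cref{def:increasing-net-Npartite}. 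Uniqueness up to isomorphism is inherited directly from the uniqueness part of \cref{thm:consistent-to-state}.

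The step that requires genuine care is verifying that $\omega$ is embezzling on the \emph{sub}-system $(\M_x)$, not merely on the possibly larger $(\tilde\M_x)$. Given $\eps>0$, $d\in\NN$, and pure states $\psi,\phi$ on $(\CC^d)^{\otimes N}$, consistency yields some $n$ and unitaries $u^{(n)}_x\in\M^{(n)}_x\ox M_d(\CC)$ with $\|u^{(n)}(\omega^{(m)}\ox\phi)u^{(n)*}-\omega^{(m)}\ox\psi\|<\eps/2$ for every $m\ge n$. Setting $v_x:=(j^{(n)}\ox\id)(u^{(n)}_x)\in\M_x\ox M_d(\CC)$, I need to bound the norm of the normal functional $v(\omega\ox\phi)v^*-\omega\ox\psi$ on the full algebra $\tilde\M_{[N]}\ox M_d(\CC)$. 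By a Kaplansky-density argument, this norm equals the supremum over $m\ge n$ of the norms of its restrictions to $j^{(m)}(\M^{(m)}_{[N]})\ox M_d(\CC)$; via $j^{(m)}\ox\id$, and using property~(1) to rewrite $v=(j^{(m)}\ox\id)(u^{(n)})$ for all $m\ge n$, each such restriction is isometric to $u^{(n)}(\omega^{(m)}\ox\phi)u^{(n)*}-\omega^{(m)}\ox\psi$, which has norm $<\eps/2$. Hence the ambient norm is $\le\eps/2<\eps$, as required. This "passage to the limit" is the main obstacle of the proof, and it is here that the consistency assumption (as opposed to mere quasi-consistency) is crucial, since we need the estimate to hold uniformly in $m\ge n$ in order to propagate the bound to the inductive limit.
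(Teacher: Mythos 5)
Your proof is correct and follows essentially the same route as the paper, which derives the corollary from \cref{thm:consistent-to-state} by observing that normal unital *-morphisms out of factors are injective, so the $j^{(n)}$ become isometric embeddings. The additional verifications you carry out (commutation and generation of the $\M_x$, and the Kaplansky-density argument re-establishing the embezzling property on the generated subsystem) are exactly the steps already implicit in the construction given in the proof of \cref{thm:consistent-to-state}, where $\M_x=\bigvee_n j^{(n)}(\M^{(n)}_x)$ and the embezzling unitaries are taken from $j^{(n)}(\M^{(n)}_x)\ox M_d(\CC)$.
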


To prove \cref{thm:state-to-family}, we need an auxiliary result. This is a direct consequence of \cite[Thm.~IV.9.5]{glimm1985qft_expositions}. We include proof for the sake of completeness.

\begin{lemma}\label{lem:local_normal}
Let $(\M^{(n)})_{n}$ be an increasing net of von Neumann algebra. Let $\omega_0$ be a state on $\M_{0} = \bigcup_{n}\M\up{n}$ such that $\omega^{(n)}=\omega_0\restriction\M\up{n}$ is normal for all $n$. Then, $\pi\restriction\M\up{n}$ is normal for all $n$, where $(\H,\pi,\Omega)$ denotes the GNS representation of $\omega_0$.
\end{lemma}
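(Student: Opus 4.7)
The plan is to invoke the standard characterization that a $*$-representation $\rho\colon\N\to\B(\K)$ of a von Neumann algebra $\N$ is normal if and only if the set
\[
V_\rho := \{\xi\in\K \,:\, b\mapsto \langle \rho(b)\xi,\xi\rangle \text{ is normal on } \N\}
\]
is dense in $\K$; equivalently, $\rho$ is $\sigma$-weakly continuous. The nontrivial direction follows because the normal functionals form a norm-closed subspace of $\N^*$ and the matrix coefficients $(\xi,\eta)\mapsto\langle\rho(\cdot)\xi,\eta\rangle$ are jointly norm-continuous in $(\xi,\eta)$, so that polarization together with norm-density promotes the vector-state information to all matrix coefficients. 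This is the content of the cited Glimm-Jaffe, Thm.~IV.9.5.

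Fix $n$. I will verify the criterion on the subset $\pi(\M_{0})\Omega\subset\H$, which is dense by the GNS construction. Given $a\in\M_{0}$, pick $m$ with $a\in\M\up{m}$ and set $k:=\max(m,n)$; by the increasing-net assumption $\M\up{n}\cup\M\up{m}\subset\M\up{k}$, hence $a^*ba\in\M\up{k}$ for every $b\in\M\up{n}$, and
\[
\langle \pi(b)\pi(a)\Omega,\pi(a)\Omega\rangle \;=\; \omega_{0}(a^*ba) \;=\; \omega\up{k}(a^*ba).
\]
Since $\omega\up{k}$ is normal on $\M\up{k}$ by hypothesis, and since $b\mapsto a^*ba$ is $\sigma$-weakly continuous as a map $\M\up{n}\to\M\up{k}$ (left and right multiplication by fixed bounded operators being ultraweakly continuous), the composition $b\mapsto\omega\up{k}(a^*ba)$ is a normal functional on $\M\up{n}$. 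Thus $\pi(\M_{0})\Omega\subset V_{\pi\restriction\M\up{n}}$, and since this set is dense in $\H$, the characterization yields that $\pi\restriction\M\up{n}$ is normal.

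The only nontrivial ingredient is the dense-vector-states criterion for normality of a representation of a von Neumann algebra; the remainder is a routine exploitation of the increasing-net structure, which allows any product $a^*ba$ to be placed inside a single algebra $\M\up{k}$ on which $\omega_{0}$ is already assumed to restrict to a normal functional. One subtlety to be careful about is that $\M_{0}$ is itself only a $*$-algebra and not a von Neumann algebra, so normality is asserted (and used) only at the level of each fixed $\M\up{n}$; the argument above never needs more than that.
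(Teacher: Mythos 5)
Your proposal is correct and follows essentially the same route as the paper's proof: both show that the vector states at the dense set of vectors $\pi(a)\Omega$, $a\in\M_0$, restrict to normal functionals on each fixed $\M\up{n}$ (by placing $a^*ba$ inside a single $\M\up{k}$ and invoking the hypothesis), and then upgrade this to normality of $\pi\restriction\M\up{n}$. The only cosmetic difference is that the paper spells out the upgrade step explicitly (norm-limits and countable sums of normal functionals are normal), whereas you quote it as the dense-vector-states criterion from the same Glimm--Jaffe reference the paper cites.
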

\begin{proof}
Consider a sequence of vectors $(\Phi_{j})_{j}$ given by $\Phi_{j} = \pi(a_{j})\Omega$ for $a_{j}\in\M_0$. 
By assumption, the positive functionals $\varphi_{j} = a_j\omega a_j^* = \bra\Omega\pi(a_{j}^{*})\placeholder\pi(a_{j})\ket{\Omega} = \bra{\Phi_{j}}(\placeholder)\ket{\Phi_{j}}$ are normal on any $\M^{(n)}$.
This allows us to conclude that any functional of the form $\varphi = \bra{\Phi}\placeholder\ket{\Phi}$ with $\Phi\in\H$ is normal for any $\M^{(n)}$ as follows: By construction $\Omega$ is cyclic for $\M_0$, i.e., we may assume that $\Phi$ is approximated by vectors of the form $\Phi_{j} = \pi(a_{j})\Omega$, which entails that
\begin{align*}
   \norm{\varphi_{j} - \varphi} & \leq (\norm{\Phi_{j}}+\norm{\Phi})\norm{\Phi_{j}-\Phi} \to 0, \qquad j \to \infty
\end{align*}
and, therefore, $\varphi$ is normal as norm-limits of normal states are normal \cite[Lem.~III.3.14]{takesaki1}. This implies that all maps of the form
\begin{align*}
   a & \mapsto \sum_{k=1}^{\infty}\bra{\Phi_{k}}\pi(a)\ket{\Phi_{k}}, & \Phi_{k}\in\H,\ \sum_{k=1}^{\infty}\norm{\Phi_{k}}^{2} & < \infty,
\end{align*}
is normal on any $\M^{(n)}$, i.e., $\pi\restriction\M^{(n)}$ is normal.
\end{proof}

\begin{proof}[Proof of \cref{thm:consistent-to-state}]
    Existence: We consider the normed unital *-algebra $\M_{[N],0} = \bigcup_{n}\M^{(n)}_{[N]}$ together with the state $\varphi$ given by $\varphi|_{\M^{(n)}_{[N]}} = \omega^{(n)}$. 
    Taking the GNS representation $(\H,\pi,\Omega)$ associated with $(\M_{[N],0},\varphi)$, we obtain a von Neumann algebra as the double commutant $\M_{[N]} = \pi(\M_{[N],0})''$.
    The *-morphisms $j^{(n)} = \pi|_{\M^{(n)}_{[N]}}$ are normal by \cref{lem:local_normal}. By construction 
    \begin{align}
    \M_{[N]}=\bigvee_n j^{(n)}(\M^{(n)}_{[N]}) = \bigvee_{n,x} j^{(n)}(\M^{(n)}_x) = \bigvee_x \M_x,\quad \M_x = \bigvee_n j^{(n)}(\M^{(n)}_x).
    \end{align}
    We can extend $\varphi$ to a define a normal state $\omega$ on $\M_{[N]}$ via
    \begin{align*}
        \omega(a) & = \bra{\Omega}\pi(a)\ket{\Omega}, &  a & \in\M_{[N]}.
    \end{align*}
    We show that $\omega$ is embezzling:
    By assumption, given $d\in\NN, \eps>0$, we find $n = n(d,\eps)$ such that for all pure states $\psi$ and $\phi$ on the $N$-partite systme on $(\CC^d)^{\ox N}$ there exist unitaries $u^{(n)}_x\in M_{d}(\M^{(n)}_x)$ such that
    \begin{align*}
        \norm{u^{(n)}(\omega^{(m)}\ox\phi)(u^{(n)})^{*}-\omega^{(m)}\ox\psi} & < \eps,\quad   m\geq n,\quad u^{(n)}=\prod_x u^{(n)}_x. 
    \end{align*}
    Now let $a\in \pi(\M_{[N],0})\ox M_d(\CC)^{\ox N}$ be a self-adjoint contraction and let $m\geq n$ be sufficiently large such that there is a self-adjoint contraction $a_0\in \M^{(m)}_{[N]}\ox M_d(\CC)^{\ox N}$ with $a=\pi(a_0)$ and $\norm{a}=\norm{a_0}\leq 1$. The existence of $a_0$ follows immediately from \cite[Prop.~II.5.1.5]{blackadar_operator_2006}.
    It follows that
    \begin{equation}
        \big|\big[ u^{(n)}(\omega \ox \phi)(u^{(n)})^* -\omega\ox\psi\big](a)\big|
        = \big|\big[u^{(n)}(\omega^{(m)}\ox \phi)(u^{(n)})^*-\omega^{(m)} \ox \psi \big](a_0)\big|< \eps.\nonumber
    \end{equation}
    Since $\pi(\M_{[N],0}))\ox M_d(\CC)^{\ox N}$ is ultraweakly dense, and since the states involved are normal, this implies $\norm{u^{(n)}(\omega \ox\phi)(u^{(n)})^*-\omega\ox\psi}<\eps$.
    Therefore, $\omega$ is embezzling.
    Since $\Omega$ is a cyclic vector for $\M_{[N]}$, the central support of $\omega$ in $\M_{[N]}$ is $1$.    
    
    Uniqueness:
    Let $\M_{[N]}, j^{(n)}, \omega$ be as specified.
    Since $\omega$ has central support $1$, its GNS representation is faithful.
    We may hence assume $\M_{[N]}\subset \B(\H)$ to be given in the GNS representation and let $\Omega$ denote the GNS vector. 
    We consider the unital *-algebra $\M_{[N],0}= \bigcup_n\M^{(n)}_{[N]}$ with the state $\varphi$ as above and define a *-morphism $\pi:\M_{[N],0}\to \M_{[N]}\subset\B(\H)$ via $\pi\upharpoonright \M^{(n)}_{[N]}=j^{(n)}$ and find $\varphi = \bra\Omega\pi(\placeholder)\ket\Omega$.
    The assumption $\M_{[N]} = \bigvee_n j^{(n)}(\M^{(n)}_{[N]}) \equiv \pi(\M_{[N],0})''$ guarantees that $[\pi(\M_{[N],0})\Omega] = [\M_{[N]}\Omega]=\H$.
    By uniqueness of the GNS representation, the triple $(\M_{[N]},(j^{(n)})_n,\omega)$ is isomorphic to the one constructed in the existence step.    
\end{proof}

If the index set is given by $\NN$ and each $\M^{(n)}_{[N]}$ is purely atomic\footnote{A von Neumann algebra $\M$ is called purely atomic if it is a product of type I factors \cite{blackadar_operator_2006}.}, we can, in fact, show that embezzling states are \emph{equivalent} to quasi-consistent embezzling families. 
We first need the definition of a weak*-cluster point:

\begin{definition}\label{def:weak_star_cluster}
    Let $(\M^{(n)})_{n}$ be an increasing sequence von Neumann algebras and let $(\omega^{(n)})_{n}$ be a corresponding sequence of normal states. 
    We call a state $\omega$ on the unital *-algebra $\M_{0} = \bigcup_{n}\M^{(n)}$ a sequential weak*-cluster point if there is a subset $K\subseteq\NN$ such that
    \begin{align}
        \lim_{k\in K}(\omega^{(k)}-\omega)(a) & = 0, &  &  a\in\M_{0}.
        \end{align}
    In this case, we also say that $\omega$ is the weak*-limit of the subsequence $(\omega^{(k)})_{k\in K}$.
\end{definition}

\begin{proposition}\label{prop:mbzfam_quasi}
    Let $(\M_x)_{x\in[N]}$ be an $N$-partite generated by an increasing sequence of $N$-partite systems $(\M\up{n}_1,\ldots,\M\up{n}_N)_n$ with each $\M\up{n}_x$ purely atomic.
    Set $\M_{[N],0} = \cup_n \M^{(n)}_{[N]}$.
    Given a normal state $\omega$ on $\M_{[N]}$ such that $\omega_{0} = \omega|_{\M_{[N],0}}$ is a sequential weak*-cluster point of a sequence of normal states $(\varphi^{(n)})_{n}$ on $(\M^{(n)}_{[N]})_{n}$, the following are equivalent:
    \begin{enumerate}
        \item \label{it:mbzfam_quasi_1} $\omega$ is a multipartite embezzling state.
        \item \label{it:mbzfam_quasi_2} There exists  a (non-decreasing) sequence $(k_{n})_{n}\subset\NN$ such that $\omega_{0}$ is the weak*-limit of $(\varphi^{(k_{n})})_{n}$ and $(\varphi^{(k_{n})}\restriction \M^{(n)}_{[N]})_{n}$ is an $N$-partite quasi-consistent embezzling family.
    \end{enumerate}
\end{proposition}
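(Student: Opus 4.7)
The plan is to prove each direction separately. The direction $(\ref{it:mbzfam_quasi_2})\Rightarrow(\ref{it:mbzfam_quasi_1})$ is a weak* limiting argument that does not require pure atomicity, while the converse direction is the harder one and uses the purely atomic hypothesis crucially.

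For $(\ref{it:mbzfam_quasi_2})\Rightarrow(\ref{it:mbzfam_quasi_1})$, I fix $d\in\NN$, pure states $\psi,\phi$ on $(\CC^d)^{\ox N}$, and $\eps>0$, and apply the assumed quasi-consistency of $\sigma^{(n)}:=\varphi^{(k_n)}\restriction\M^{(n)}_{[N]}$ at tolerance $\eps/2$ to obtain a threshold $n_0$ and local unitaries $u_x^{(n_0)}\in\M^{(n_0)}_x\ox M_d(\CC)$ realizing the $\eps/2$ bound against every $\sigma^{(m)}$ with $m\geq n_0$. Viewing the $u_x^{(n_0)}$ as elements of $\M_x\ox M_d(\CC)$, the normal Hermitian functional $\tau:= u^{(n_0)}(\omega\ox\phi)(u^{(n_0)})^* - \omega\ox\psi$ on $\M_{[N]}\ox M_d(\CC)^{\ox N}$ has its operator norm realized as a supremum of $|\tau(a)|$ over self-adjoint contractions $a$ in the ultraweakly dense $*$-subalgebra $\bigcup_m\M^{(m)}_{[N]}\ox M_d(\CC)^{\ox N}$ (by Kaplansky density and normality of $\tau$). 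For any such $a\in\M^{(m)}_{[N]}\ox M_d(\CC)^{\ox N}$ with $m\geq n_0$, the weak*-cluster hypothesis applied to the fixed element $(u^{(n_0)})^*au^{(n_0)}\in\M^{(m)}_{[N]}\ox M_d(\CC)^{\ox N}$ gives $\tau(a)=\lim_{n\to\infty}\tau_n(a)$, where $\tau_n$ replaces $\omega$ by $\varphi^{(k_n)}$; since $|\tau_n(a)|<\eps/2$ for all $n\geq m$ by quasi-consistency, we obtain $|\tau(a)|\leq\eps/2<\eps$, proving that $\omega$ embezzles.

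For $(\ref{it:mbzfam_quasi_1})\Rightarrow(\ref{it:mbzfam_quasi_2})$, the first step is to upgrade the weak*-cluster hypothesis to norm convergence in the predual using pure atomicity: for a purely atomic von Neumann algebra, weak* convergence of normal states to a normal state implies predual-norm convergence. This is a non-commutative Scheff\'e-type fact: on $B(H)$ it follows from Simon's theorem combined with the observation that $1_H\in B(H)$, so weak convergence automatically gives convergence of trace norms; the general purely atomic case follows by a tail estimate across the atoms. This yields $\|\varphi^{(k)}\restriction\M^{(n)}_{[N]}-\omega_0\restriction\M^{(n)}_{[N]}\|\to 0$ as $k\to\infty$ in $K$, for each fixed $n$. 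A diagonal extraction then produces a non-decreasing sequence $(k_n)\subseteq K$ with $k_n\geq n$ such that the diagonal norm $\|\varphi^{(k_n)}\restriction\M^{(n)}_{[N]}-\omega_0\restriction\M^{(n)}_{[N]}\|$ vanishes as $n\to\infty$; in particular $\omega_0$ is the weak*-limit of $(\varphi^{(k_n)})_n$.

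Showing that $\sigma^{(n)}:=\varphi^{(k_n)}\restriction\M^{(n)}_{[N]}$ is quasi-consistent for this sequence is the heart of the argument. For fixed $d$, define $\epsilon_n(\psi,\phi):=\inf\{\|u(\omega\ox\phi)u^*-\omega\ox\psi\|:u=\prod_x u_x,\,u_x\in\U(\M^{(n)}_x\ox M_d(\CC))\}$; this is upper semi-continuous in $(\psi,\phi)$ as an infimum of continuous functions and monotonically decreases in $n$. Since $\omega$ is embezzling and unitaries in $\M_x\ox M_d(\CC)$ can be strong*-approximated by unitaries from $\bigcup_n \M^{(n)}_x\ox M_d(\CC)$ by \cref{lem:unitary_closure}, $\epsilon_n(\psi,\phi)\to 0$ pointwise on the compact space of pure state pairs. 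A Dini-type argument — the super-level sets $\{\epsilon_n\geq\eta\}$ are closed, decreasing, with empty intersection, hence empty from some $n_0$ on — upgrades this to uniform convergence. For a target $\eps>0$, choosing $n_0$ large enough that both $\sup_{(\psi,\phi)}\epsilon_{n_0}(\psi,\phi)<\eps/2$ and $\|\sigma^{(m)}-\omega_0\restriction\M^{(m)}_{[N]}\|<\eps/4$ for all $m\geq n_0$, the triangle inequality transfers the embezzling bound from $\omega$ to each $\sigma^{(m)}$. The main obstacle is precisely this uniformity of $n_0$ over the continuous family of pure state pairs $(\psi,\phi)$, which the semi-continuity/Dini argument resolves; the purely atomic hypothesis is essential because weak* convergence of normal states to a normal state need not imply predual-norm convergence for general von Neumann algebras (e.g., for $L^\infty([0,1])$).
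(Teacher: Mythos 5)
Your proof is correct, but it routes around the paper's two structural theorems in both directions. For $(2)\Rightarrow(1)$ the paper reduces to \cref{thm:consistent-to-state}: it upgrades the cluster-point hypothesis to norm convergence $\lim_n\lVert\chi^{(n)}\restriction\M^{(m)}_{[N]}-\omega^{(m)}\rVert=0$ (using pure atomicity via Dell'Antonio's theorem) and transfers quasi-consistency from $(\chi^{(n)})_n$ to $(\omega^{(m)})_m$ by a triangle inequality, so that the GNS construction of \cref{thm:consistent-to-state} produces the embezzling state. You instead argue directly: Kaplansky density realizes $\lVert\tau\rVert$ on self-adjoint contractions from $\bigcup_m\M^{(m)}_{[N]}\ox M_d(\CC)^{\ox N}$, and the weak*-limit hypothesis applied to $(u^{(n_0)})^*au^{(n_0)}$ passes the quasi-consistency bound to $\omega$. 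This is cleaner in one respect you correctly flag: your version of this implication does not use pure atomicity at all, whereas the paper's does (and this is essentially the same density argument the paper uses inside the proof of \cref{thm:consistent-to-state}, so nothing is lost). For $(1)\Rightarrow(2)$ both arguments share the same skeleton — Dell'Antonio norm convergence on purely atomic algebras, diagonal extraction of $(k_n)$, and a $2\lVert\sigma^{(m)}-\omega^{(m)}\rVert$ triangle inequality — but where the paper simply cites \cref{thm:state-to-family} for the consistency of the restrictions $(\omega^{(n)})_n$, you re-derive the crucial uniformity of $n_0$ over the pure state pairs $(\psi,\phi)$ from scratch, via upper semicontinuity of the infimum $\epsilon_n(\psi,\phi)$, monotonicity in $n$, pointwise convergence to zero from \cref{lem:unitary_closure}, and a Dini-type compactness argument on super-level sets. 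That self-contained uniformity argument is sound (the super-level sets are closed, nested, and have empty intersection in a compact space) and is a legitimate substitute for the omitted proof of \cref{thm:state-to-family}; the paper's route is shorter because it delegates exactly this step to the cited theorem.
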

\begin{proof}
In the following we use that $\M\up{n}_{[N]}=\vee_{x\in[N]} \M^{(n)}_x$ is purely atomic if each of the $\M\up n_x$ is and write $\varphi\up{m}_n$ for $\varphi\up{m}\restriction \M\up{n}_{[N]}$. 
We first show that \cref{it:mbzfam_quasi_1} $\implies$ \cref{it:mbzfam_quasi_2}: As before we denote $\omega\up n = \omega\restriction \M\up n_{[N]}$. Since $\omega_{0}$ is a sequential weak*-cluster point, we know that there is an infinite subset $K\subseteq\NN$ such that $\lim_{k\in K}\norm{\varphi\up{k}_n-\omega\up{n}} = 0$ as weak convergence is equivalent to norm convergence for sequences of normal states on purely atomic von Neumann algebras \cite{DellAntonio1967normal_limits}. 
Now, we choose $k=k_{n}$ such that $\norm{\varphi\up{k_{n}}_n-\omega\up{n}}=\delta_{n}\to0$ for $n\to\infty$, and we set $\chi\up{n} = \varphi\up{k_{n}}_n$.
Since $(\omega\up{n})_{n}$ is consistent by \cref{thm:state-to-family}, given any $d\in\NN, \eps>0$, we find $n = n(d,\eps)$ such that for all pure states $\psi$ and $\phi$ on $M_{d}(\CC)^{\ox N}$ there exists unitaries $u\up{n}_x\in\U(M_{d}(\M\up{n}_x))$ such that for $m\geq n$
\begin{align*}
\norm{u\up{n}(\omega\up{m}\ox\phi)(u\up{n})^{*}-\omega\up{m}\ox\psi} & < \tfrac{\eps}{2},\quad u\up n = \prod_x u\up n_x.
\end{align*}
This implies that $(\chi_{n})_{n}$ is quasi-consistent because
\begin{align} \nonumber
\|\chi\up{m}\ox\psi-u\up{n}\big(\chi\up{m}\ox\phi\big)(u\up{n})^{*}\|  &\leq 2\norm{\chi\up{m}-\omega\up{m}}\nonumber +\norm{\omega\up{m}\ox\psi-u\up{n}\big(\omega\up{m}\ox\phi\big)(u\up{n})^{*}} \\ 
    & < 2\delta_{m} + \tfrac{\eps}{2}\label{eq:mbzfam_approx}  < \eps
\end{align}
for $m\geq n$ such that $\delta_{m}<\tfrac{\eps}{2}$.

To show \cref{it:mbzfam_quasi_2} $\implies$ \cref{it:mbzfam_quasi_1}, we set $\chi\up{n} = \varphi\up{k_{n}}|_{\M\up{n}_{[N]}}$ as before and note that 
\begin{equation}
\lim_{n}\norm{\chi\up{n}_m-\omega\up{m}}=0,\qquad \chi\up{n}_m = \chi\up{n}|_{\M\up{m}_{[N]}},
\end{equation}
for all $m$ by the same argument as in the first step. By \cref{thm:consistent-to-state}, it is sufficient to show that the sequence $(\omega\up{m})_{m}$ is quasi-consistent (as the state consistency holds by assumption). Since the sequence $(\chi\up{n})_{n}$ is quasi-consistent, this follows: Given $d\in\NN, \eps>0$, we find by assumption $n = n(d,\eps)$ such that for all pure states $\psi$ and $\phi$ on $M_{d}(\CC)^{\ox N}$ there exists unitaries $u\up{n}_x\in\U(M_{d}(\M\up{n}_x))$ such that for $m\geq n$
\begin{align*}
\norm{u\up{n}(\chi\up{m}\ox\phi)(u\up{n})^{*}-\chi\up{m}\ox\psi} & < \tfrac{\eps}{2},\quad u\up{n} = \prod_x u\up{n}.
\end{align*}
For $k\geq m\geq n$, this implies that
\begin{align*}
\|u\up{n}(\omega\up{m}&\ox\phi)(u\up{n})^{*}-\omega\up{m}\ox\psi\|\\& \leq 2 \norm{\chi\up{k}_m\!-\!\omega\up{m}} + \norm{u\up{n}(\chi\up{k}_m\!\ox\phi)(u\up{n})^{*}-\chi\up{k}_{m}}\ox\psi \\
& \leq 2 \norm{\chi\up{k}_{m}\!-\!\omega\up{m}} + \norm{u\up{n}(\chi\up{k}\ox\phi)(u_{n})^{*}-\chi\up{k}\ox\psi} \\
& < 2 \norm{\chi\up{k}_m\!-\!\omega\up{m}} + \tfrac{\eps}{2},
\end{align*}
Now, we can choose $k\geq m$ sufficiently large to achieve
\begin{align*}
\norm{u\up{n}(\omega\up{m}\ox\phi)(u\up{n})^{*}-\omega\up {m}\ox\psi} & < \eps,
\end{align*}
which completes the proof.
\end{proof}

As a final result of this subsection, we show that for increasing \emph{sequences of finite-dimensional $N$-partite systems}, i.e., where the local algebras $\M\up{n}_x$ are direct sums of full matrix algebras, we can always select subsequences from a quasi-consistent embezzling family that are close to consistent embezzling families in an appropriate sense, i.e., there is no need to assume the existence of a sequential weak*-cluster point as in \cref{prop:mbzfam_quasi}.\footnote{In the language of \cite{vanluijk2024jdyn}, we consider $j^{*}$-convergent sequences of states with respect to the inductive system $(\M\up{n}_{[N]})_{n}$.}

\begin{proposition}
\label{prop:lim_con_fam}
    Let $(\M\up{n}_1,\ldots\M\up{n}_N)_{n}$ be an increasing sequence of $N$-partite systems, where each $\M\up{n}_x$ is a finite-dimensional von Neumann algebra. Let $(\varphi\up{n})_{n}$ be a quasi-consistent  $N$-partite embezzling sequence of normal states. 
    Then, there exists a subset $K\subset \NN$ and a consistent $N$-partite embezzling sequence $(\omega\up n)_{n}$ on $(\M\up{n}_1,\ldots,\M\up{n}_N)_{n}$ that is norm close to suitable restrictions of the original quasi-consistent family:
    \begin{equation}
    \label{eq:norm_lim_con}
        \lim_{n}\norm{\varphi\up{k_{n}}|_{\M\up{n}_{[N]}}-\omega\up{n}} = 0,
    \end{equation}
    for a (non-decreasing) sequence $(k_{n})_{n}\subset K$.
\end{proposition}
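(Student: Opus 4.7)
The plan is to extract, via a Cantor diagonal argument, a subsequence of $(\varphi\up n)_n$ whose restrictions to each finite stage converge in norm, and then to check that the limit states assemble into a consistent embezzling family that is close to the original family at every level. Since each $\M\up m_{[N]}=\bigvee_{x\in[N]}\M\up m_x$ is finite-dimensional, its normal state space is norm-compact, so for each fixed $m$ the sequence $(\varphi\up n\restriction\M\up m_{[N]})_{n\geq m}$ admits a norm-convergent subsequence. A standard diagonal procedure then produces a single subsequence $(\ell_j)_j\subset\NN$ and normal states $\omega\up m$ on $\M\up m_{[N]}$ with $\lim_j\|\varphi\up{\ell_j}\restriction\M\up m_{[N]} - \omega\up m\|=0$ for every $m$. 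Consistency $\omega\up{m'}\restriction\M\up m_{[N]} = \omega\up m$ for $m\leq m'$ is automatic from norm-continuity of restriction. Set $K=\{\ell_j:j\in\NN\}$.

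Next I would verify that $(\omega\up n)_n$ is an embezzling family. Fix $d\in\NN$ and $\eps>0$; quasi-consistency of $(\varphi\up n)_n$ with tolerance $\eps/2$ produces $n_0=n(d,\eps/2)$ such that for every pair of pure states $\psi,\phi$ on $(\CC^d)^{\otimes N}$ there exist unitaries $u\up{n_0}_x\in\M\up{n_0}_x\otimes M_d(\CC)$ with $u\up{n_0}=\prod_x u\up{n_0}_x$ satisfying
\[
\|u\up{n_0}(\varphi\up m\otimes\phi)(u\up{n_0})^* - \varphi\up m\otimes\psi\| < \tfrac{\eps}{2}, \qquad m\geq n_0.
\]
For any fixed $n\geq n_0$, the unitary $u\up{n_0}$ lies in $\M\up n_{[N]}\otimes M_d(\CC)^{\otimes N}$, so conjugation by $u\up{n_0}$ preserves this subalgebra and commutes with restriction to it. Taking $m=\ell_j\geq n$, restricting the above inequality (which can only decrease the functional norm), and sending $j\to\infty$ transfers the bound, by norm-continuity of restriction, of conjugation by a fixed unitary, and of tensoring with the fixed state $\phi$, to $\|u\up{n_0}(\omega\up n\otimes\phi)(u\up{n_0})^*-\omega\up n\otimes\psi\|\leq\eps/2<\eps$, so $(\omega\up n)_n$ is a consistent embezzling family. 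For the norm-closeness \eqref{eq:norm_lim_con}, choose $k_n\in K$ inductively as the smallest element with $k_n\geq\max\{n,k_{n-1}\}$ and $\|\varphi\up{k_n}\restriction\M\up n_{[N]}-\omega\up n\|<1/n$; such a $k_n$ exists by the norm convergence established in the first paragraph (applied with $m=n$). The resulting sequence is non-decreasing and realizes \eqref{eq:norm_lim_con}.

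The main subtle point is the embezzling step: one must ensure that the embezzling inequality survives both a restriction to a smaller algebra and a norm limit of the resource state. This works cleanly precisely because the embezzling unitaries are drawn from a fixed finite stage $\M\up{n_0}_{[N]}\otimes M_d(\CC)^{\otimes N}$ and therefore belong to every larger stage, so restriction commutes with the adjoint action and the same unitaries can be reused uniformly as the resource varies. The diagonal extraction itself is routine, relying only on norm-compactness of state spaces of finite-dimensional von Neumann algebras; crucially, no auxiliary weak*-cluster-point assumption is needed here because norm-compactness gives convergent subsequences directly.
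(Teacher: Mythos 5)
Your proposal is correct and follows essentially the same route as the paper: a diagonal extraction of a norm-convergent subsequence of restrictions (using finite-dimensionality of each $\M\up{n}_{[N]}$), consistency of the limits by norm-continuity of restriction, and transfer of the quasi-consistent embezzling property by exploiting that the embezzling unitaries lie in a fixed finite stage so that restriction only decreases the relevant functional norms. The only cosmetic difference is that you pass to the limit $j\to\infty$ where the paper writes out the corresponding triangle inequality with the error term $2\norm{\varphi\up{k}\restriction\M\up{m}_{[N]}-\omega\up{m}}$ made explicit.
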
 

In fact, we show that,
\begin{align}
\label{eq:norm_lim}
    \lim_{k\in K} \norm{\varphi\up{k}|_{\M\up{n}_{[N]}}-\omega\up{n}} & = 0,
\end{align}
for all $n$ which is stronger than \cref{eq:norm_lim_con}.\footnote{\label{ft:norm_lim} To see this, note that \cref{eq:norm_lim} implies that we can find a sequence $(k_{n})_n$ such that $\norm{\omega\up{n}-\varphi\up{k_{n}}\restriction \M\up{n}_{[N]}}<\eps_{n}$ for a null sequence $(\eps_{n})_{n}$.}
By \cref{thm:consistent-to-state}, we know that the consistent $N$-partite embezzling family $(\omega\up{n})_{n}$ defines a $N$-partite embezzling state $\omega$, which is a sequential weak*-cluster point of the quasi-consistent $N$-partite embezzling family $(\varphi\up{n})_{n}$:
\begin{align}
    \lim_{k\in K}\varphi\up{k}(a_{n}) & = \omega\up{n}(a_{n}), & n\in\NN,\  a_{n}\in\M\up{n}_{[N]}.
\end{align}

\begin{figure}[t]
\begin{center}
\begin{tikzpicture}
	
	\draw (0.2,0.25) node[right]{$\varphi^{(1)}_{\phantom]}$} (1.75,0.25) node[right]{$\varphi^{(2)}_{1\phantom]}$} (4.75,0.25) node[right]{$\varphi^{(n)}_{1\phantom]}$} (7.75,0.25) node[right]{$\varphi^{(n+k)}_{1\phantom]}$}	(10.75,0.25) node[right]{$\omega\up{1}_{\phantom]}$} (12.5, 0.25) node[right]{$\M\up{1}_{[N]}$};
	\draw (3.25, 0.25) node[right]{$\dots$};
    \draw (6.5, 0.25) node[right]{$\dots$};
    \draw (9, 0.25) node[right]{$\dots$};
	
	\draw (1.75,1.5) node[right]{$\varphi^{(2)}_{\phantom]}$} (4.75,1.5) node[right]{$\varphi^{(n)}_{2\phantom]}$} (7.75,1.5) node[right]{$\varphi^{(n+k)}_{2\phantom]}$} (10.75,1.5)	node[right]{$\omega\up{2}_{\phantom]}$}  (12.5, 1.5) node[right]{$\M\up{2}_{[N]}$};
	\draw (3.25, 1.5) node[right]{$\dots$};
	\draw (6.5, 1.5) node[right]{$\dots$};
    \draw (9, 1.5) node[right]{$\dots$};
	
	\draw[->] (2,1.1) to (2,0.6);
	\draw[->] (4.975,1.1) to (4.975,0.6);
    \draw[->] (7.975,1.1) to (7.975,0.6);
     \draw[->] (11,1.1) to (11,.6);
    \draw (12.95, 0.875) node{$\cup$};
	
	\draw (4.75,3.5) node[right]{$\varphi^{(n)}_{\phantom ]}$} (7.75,3.5) node[right]{$\varphi^{(n+k)}_{n\phantom]}$} (10.75,3.5) node[right]{$\omega\up{n}_{\phantom]}$} (12.5, 3.5) node[right]{$\M\up{n}_{[N]}$};
    \draw (6.5, 3.5) node[right]{$\dots$};
    \draw (9, 3.5) node[right]{$\dots$};
	
	\draw[->] (4.975,3.125) to (4.975,2.625);
    \draw[->] (7.975,3.125) to (7.975,2.625);
     \draw[->] (11,3.125) to (11,2.625);
    \draw (12.95, 2.925) node{$\cup$};

    \draw (7.75,5.5) node[right]{$\varphi^{(n+k)}_{\phantom{]}}$} (10.75,5.5) node[right]{$\omega\up{n+k}_{\phantom{]}}$} (12.5, 5.5) node[right]{$\M\up{n+k}_{[N]}$};
    \draw (9, 5.5) node[right]{$\dots$};

    \draw[->] (7.975,5.125) to (7.975,4.625);
    \draw[->] (11,5.125) to (11,4.625);
    \draw (12.95, 4.925) node{$\cup$};
	
	\node[right] at (3.25,2.625) {$\iddots$};
	\node[right] at (4.825,2.25) {$\vdots$};
	\node[right] at (7.8,2.25) {$\vdots$};
    \node[right] at (10.8,2.25) {$\vdots$};
    \node[right] at (12.775,2.25) {$\vdots$};

    \node[right] at (6.5,4.825) {$\iddots$};

    \node[right] at (3.25,2.625) {$\iddots$};
	\node[right] at (7.8,4.25) {$\vdots$};
    \node[right] at (10.8,4.25) {$\vdots$};
    \node[right] at (12.775,4.25) {$\vdots$};

    \node[right] at (8.8,6.25) {$\iddots$};
    \node[right] at (10.8,6.25) {$\vdots$};
    \node[right] at (12.775,6.25) {$\vdots$};
    
\end{tikzpicture}
\caption{\small An illustration of the relations between the quasi-consistent embezzling family $(\varphi^{(n)})_{n}$, its restrictions $\varphi^{(n)}_{m} = \varphi\up{n}\restriction \M\up{m}_{[N]}$, and the consistent embezzling family $\omega\up{n}$ appearing in \cref{prop:lim_con_fam} (assuming that $K_n=\NN$).
The downward arrows symbolize the consistency relation $\varphi\up n_{m} \upharpoonright \M_{[N]}\up n = \varphi\up n_m$ for integers $n\ge m$ (with $\varphi\up n_n := \varphi\up n$).
}
\label{fig:Wtree}
\end{center}
\end{figure}
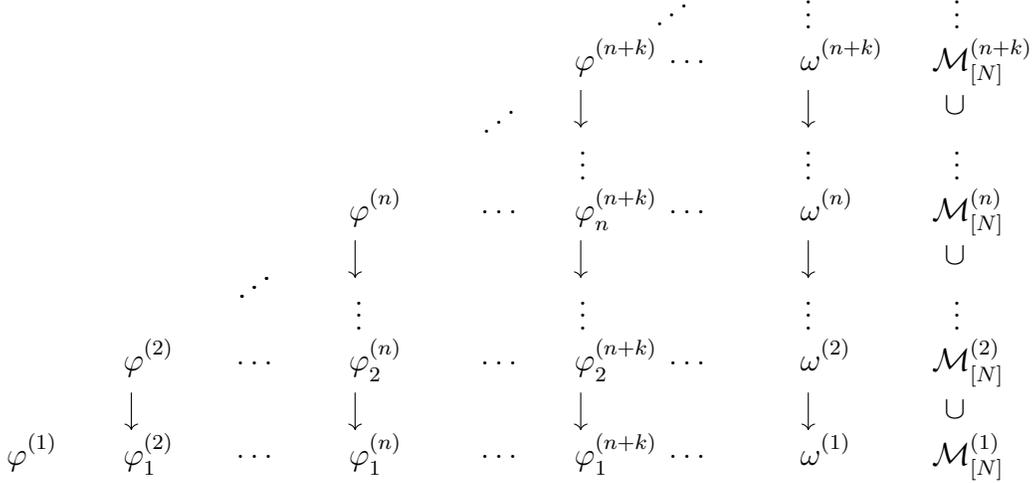
\begin{proof}
    For every $n$, we have the sequence of normal states $(\varphi\up{m}_n)_m := (\varphi\up{m}\restriction \M\up{n}_{[N]})_{m}$ which has a norm convergent subsequence $(\varphi\up{k_{n}}_n)_{k_{n}\in K_{n}}$, $K_{n}\subseteq\NN$, as each $\M\up{n}_{[N]}$ is finite dimensional (see \cref{fig:Wtree}). 
    Iterating over $n$, we can achieve that $K_{n}-1\supseteq K_{n+1}$\footnote{For a subset $K\subset \NN$, we denote by $K-1$ the subset where every element is reduced by $1$ (ignoring $0$ as a potential outcome).}, which allows us to form the diagonal subset $K$ consisting of the first element of each $K_{n}$. 
    It follows that the sequence of normal states $(\varphi\up{k}_n)_{k\in K}$ is norm convergent to a state $\omega\up{n}$ for all $n$:
    \begin{align}
        \lim_{k\in K} \norm{\varphi\up{k}_n-\omega\up{n}} & = 0, &  n\in\NN.
    \end{align}
    Since the states $\varphi\up{k}_n$ are normal for all $n,k$, we know that the limit states $\omega\up{n}$ are normal for all $n$ (the predual of a von Neumann algebra is weakly sequentially complete \cite[Cor.~III.5.2]{takesaki1}). 
    By construction, we have $\omega\up{n}|_{\M\up{m}_{[N]}} = \omega\up{m}$ for $n\geq m$ and, therefore, it remains to show that $(\omega\up{n})_{n}$ is a quasi-consistent embezzling family. 
    This follows by an analogous argument as in the proof of \cref{prop:mbzfam_quasi}: Given any $d\in\NN, \eps>0$, we find $n = n(d,\eps)$ such that for all pure states $\psi$ and $\phi$ on $M_{d}(\CC)^{\ox N}$ there exists unitaries $u\up{n}_x\in\U(M_{d}(\M\up{n}_x))$ such that for $k\geq n$
    \begin{align*}
    \norm{u\up{n}(\varphi\up{k}\ox\phi)(u\up{n})^{*}-\varphi\up{k}\ox\psi} & < \tfrac{\eps}{2},\quad \forall k\geq n,\quad u\up{n}=\prod_x u\up{n}_x.
    \end{align*}
    For $m\geq n$, this implies that
    \begin{align*}
    \| u\up{n}(\omega\up{m} &\ox\phi)(u\up{n})^{*}-\omega\up{m}\ox\psi \| \\ 
    &\leq 2 \norm{\varphi\up{k}_m\!-\!\omega\up{m}} + \norm{u\up{n}(\varphi\up{k}_m\ox\phi)(u\up{n})^{*}-\varphi\up{k}_m\ox\psi} 
    \\
    & \leq 2 \norm{\varphi\up{k}_m\!-\!\omega\up{m}} + \norm{u\up{n}(\varphi\up{k}\ox\phi)(u\up{n})^{*}-\varphi\up{k}\ox\psi} \\
    & < 2\norm{\varphi\up{k}_m\!-\!\omega\up{m}} + \tfrac{\eps}{2}.
    \end{align*}
    Because of \cref{eq:norm_lim}, we can choose $K\ni k\geq m$ sufficiently large to achieve
    \begin{align*}
    \norm{u\up{n}(\omega\up{m}\ox\phi)(u\up{n})^{*}-\omega\up{m}\ox\psi} & < \eps,
    \end{align*}
    which completes the proof.
\end{proof}

\begin{remark}
 An analog of \cref{lem:pure-to-mixed} obviously also holds for multipartite embezzling families instead of embezzling states using the same arguments.  Analogous results to \cref{thm:state-to-family}, \cref{prop:mbzfam_quasi}, and \cref{prop:lim_con_fam} hold in the monopartite setting as well as for the $N$-partite embezzlement of mixed states (as in \cref{lem:pure-to-mixed}) with essentially identical proofs (see \cref{def:mbzfam} for a definition of monopartite embezzling families).
\end{remark}

\subsection{The limit of the Leung-Toner-Watrous embezzling family}\label{sec:LTW}

So far, we discussed 
general relations between multipartite embezzling states and multipartite embezzling families. 
These results would be of little value unless multipartite states and families in fact exist. 
In \cite{long_paper,short_paper,van_luijk_critical_2024} we discussed how \emph{bipartite} embezzling states arise in quantum field theory and many-body physics. 
We will now show how the `universal construction' for embezzling families introduced by Leung, Toner, and Watrous (LTW) \ \cite{Leung2013coherent} gives rise to consistent, genuinely $N$-partite embezzling families for arbitrary $N\in \NN$. By \cref{thm:consistent-to-state}, we immediately deduce that $N$-partite embezzling states exist. In fact, we will show that the construction results in an irreducible, multipartite system on $\H$ that is a \emph{universal multipartite embezzler}: Every density matrix on $\H$ is an $N$-partite embezzling state.

In the following, we repeatedly consider multipartite systems where each party has access to several qudits. It is, therefore useful to introduce a short-hand for the Hilbert space describing $N$ parties, each consisting of $n$ qudits of Hilbert space dimension $d$:
\begin{equation}
    \K_{d,n,N}:= \bigotimes_{x=1}^N (\CC^d)^{\otimes n} \cong \bigotimes_{j=1}^{n} (\CC^d)^{\otimes N}\cong \CC^{d^{nN}}.
\end{equation}
To proceed with the construction, it will be useful to introduce an auxiliary object: First, given two unit vectors $\Psi,\Phi\in (\CC^d)^{\ox N}=\K_{d,1,N}$, we denote by $\lambda$ the phase such that $\ip{\lambda\Psi}{\Phi}\ge0$ (with $\lambda=1$ if the vectors are orthogonal) and define unit vectors
\begin{align}
    \Omega_n(\Psi,\Phi) & := \frac{1}{C_n}\sum_{j=0}^{n} \lambda^{n-j}\,
    {\Psi}^{\otimes j}\otimes {\Phi}^{\otimes n-j} \in \K_{d,n,N}
\end{align}
where $C_n$ is the normalizing constant ensuring that $\norm{\Omega_n(\Psi,\Phi)}=1$.
In fact, the normalizing constant satisfies 
\begin{equation}\label{eq:normalization constant}
    \sqrt{n+1}\le C_n \le n+1
\end{equation}
as can be seen from 
\begin{equation*}
    C_n^2 = \sum_{j,k=0}^{n} \langle\Psi^{\ox j}\ox (\lambda\Phi)^{\ox n-j}|\Psi^{\ox k}\ox (\lambda\Phi)^{\ox n-k}\rangle = \sum_{j,k=0}^{n} \langle\Psi|\lambda\Phi\rangle^{|j-k|},
\end{equation*}
since $\ip\Psi{\lambda\Phi}^{|j-k|}$ is in $[0,1]$ for all $j,k$ and equals $1$ if $j=k$.\footnote{The phase $\lambda$ in the definition of $\Omega$ is necessary to ensure the lower bound in \eqref{eq:normalization constant}. While it is possible to absorb it into the vector $\Phi$, this can only be done in a $\Psi$-dependent way, which causes problems later on.}

Next, we show that the sequence $\Omega_n(\Psi,\Phi)$ can be used to embezzle the transition $\Psi\to\Phi$ of $N$-qudit states $\Psi,\Phi$.
We consider the case where each party has $n+1$ qudits.
We denote by $u$ the unitary that implements a cyclic permutation (to the right) on the $n+1$ qudits $(\CC^d)^{\ox n+1}$ and define the $N$-partite local unitary $U = \bar\lambda u^{\otimes N}$ on the Hilbert space $\K_{d,n+1,N}$.
We have (with the obvious implicit reordering of tensor products)
\begin{align} 
    U (\Psi \otimes \Omega_n(\Psi,\Phi)) &= \frac{1}{C_n} U \sum_{j=0}^{n} \Psi\otimes \Psi^{\otimes j} \otimes (\lambda\Phi)^{\otimes n-j} \nonumber\\
    &= \frac{1}{C_n} \bar\lambda\sum_{j=0}^{n-1} \lambda\Phi\otimes \Psi^{\otimes j+1}\otimes (\lambda\Phi)^{\otimes n-j-1} + \frac{\bar\lambda}{C_n} \Psi^{\ox n+1} \nonumber\\
    &= \frac{1}{C_n} \sum_{j=1}^{n} \Phi\otimes\Psi^{\otimes j}\otimes (\lambda\Phi)^{\otimes n-j}+ \frac{\bar\lambda}{C_n} \Psi^{\ox n+1}\nonumber\\
    &= \Phi \otimes \Omega_n(\Psi,\Phi) -\frac{\lambda^n}{C_n} \Phi^{\ox n+1}+\frac{\bar\lambda}{C_n} \Psi^{\ox n+1}.
\end{align}
Thus, \eqref{eq:normalization constant} yields an $N$-independent estimate:
\begin{align}
\label{eq:multipartite-mbz-property}
\norm{U(\Psi\ox\Omega_n(\Psi,\Phi)) - \Phi\ox\Omega_n(\Psi,\Phi)} & \le  \frac{2}{\sqrt{n+1}} < \frac2{\sqrt n}.
\end{align}
In principle, this error could be improved by choosing non-uniform coefficients in the superposition to construct $\Omega_n$; however, we are not concerned with optimality of the construction here \cite{Leung2013coherent}.

We observe the following important property of the construction: Suppose $\Psi$ and $\Phi$ are non-entangled over a bipartition of the $N$ subsystems, which we may take (without loss of generality) as the division into the first $L$ and the remaining $N-L$ subsystems. Assume that $\Phi$ and $\Psi$ coincide on the  last $N-L$ subsystems, i.e.,  $\Psi = \check\Psi\ox \Xi$ and $\Phi = \check\Phi \ox \Xi$, with $\check\Psi,\check\Phi \in (\CC^d)^{\otimes L}$ and $\Xi\in (\CC^d)^{\ox N-L}$ normalized.
We have
\begin{align}
    \Omega_n(\Psi,\Phi)=\Omega_n(\check\Psi\ox\Xi,\check\Phi\ox \Xi) = \Omega_n(\check\Psi,\check\Phi)\ox \Xi^{\otimes n},
\end{align}
with the obvious implicit re-ordering of tensor factors.
Setting $\check U = \bar\lambda u^{\ox L}$, we find
\begin{align}\label{eq:multipartite-local-mbz-property}
   \norm{U (\Psi\ox\Omega_n(\Psi,\Phi))-\Phi\ox\Omega_n(\Psi,\Phi)} & = \norm{\check U(\check\Psi \ox\Omega_n(\check\Psi,\check\Phi))-\check\Phi\ox\Omega_n(\check\Psi,\check\Phi)}<\frac{2}{\sqrt{n}}.
\end{align}
This means that if $L$ of $N$ agents want to transform a local pure state (possibly entangled) on their systems to another local pure state, they can do so without requiring the remaining agents to act on their $N-L$ systems if all the agents share $\Omega_n(\Psi,\Phi)$. 

Let $(\eps_n)_n$ be a decreasing null sequence of positive numbers $\eps_n>0$.
For simplicity, we choose $\varepsilon_n = 1/\sqrt{n}$ in the following. 
For every $n\in\NN$ and $2\leq d\in \NN$ we pick an $\varepsilon_n$-cover $\{\Psi_j^{(d,n)}\}$ of the unit sphere of $(\CC^d)^{\otimes N}$ up to phase: 
For every normalized vector $\Phi \in (\CC^d)^{\otimes N}$ there exists an index $j$ and a phase $\lambda\in S^1$ such that
\begin{align}
    \norm{\lambda \Phi - \Psi_j^{(d,n)}} \leq \eps_n.
\end{align}
Finally, we define
\begin{align}
    \Omega^{(d,n)}:= \bigotimes_{i\neq j} \Omega_n(\Psi_i^{(d,n)},\Psi_j^{(d,n)}) \in \bigotimes_{i\neq j} \K_{d,n,N}\cong \bigotimes_{x=1}^N (\CC^d)^{\otimes nM(d,n)},
\end{align}
where $M(d,n)$ is the number of distinct pairs in the $\eps_n$-cover associated to $(d,n)$.
By construction, $\Omega^{(d,n)}$ can be used to embezzle arbitrary $N$-qudit pure states up to an error of $2/\sqrt{n}+2\eps_n = 4/\sqrt n$.

To turn this into a monopartite embezzling family, we pick a counting $(d_k,n_k)$ of pairs of natural numbers with $d_k\geq 2$ with the property that for every $(d,n)\in\NN\times \NN$ with $d\geq 2$ and every $k_0\in\NN$ there exists a $k\geq k_0$ such that $d=d_k$ and $n_k\geq n$.
Writing $D_k := d_k^{n_k M(d_k,n_k)}$, we regard $\Omega\up{d_k,n_k}$ as a state on the $N$-partite Hilbert space $\bigotimes_{x=1}^N \CC^{D_k}$ and get the Leung-Toner-Watrous $N$-partite embezzling family $(\H\up{n},\Omega\up{n})$ via 
\begin{equation}
    (\H\up{n};\Omega\up{n}) := \bigotimes_{k=1}^n \bigg(\bigotimes_{x=1}^N \CC^{D_k}, \Omega\up{d_k,n_k}\bigg).
\end{equation}

The crucial observation is that we can simply take the limit $n\to \oo$ of the Leung-Toner-Watrous family by using von Neumann's (incomplete) infinite Hilbert space tensor product:
\begin{align}
    (\H,\Omega) := \bigotimes_{k\in\NN} \left(\bigotimes_{x=1}^N \CC^{D_k};\Omega^{(d_k,n_k)}\right).
\end{align}
The Hilbert space $\H$ carries an $N$-partite structure in the sense that the observables of each party $x\in \{1,\ldots,N\}$ are described by a factor $\M_{x}$ acting on $\H$.\footnote{The factor $\M_{x}$ is defined as the weak closure of $\bigcup_{n\in\NN} \left(\bigotimes_{k=1}^n 1_{[1,x-1]} \ox M_{D_k}(\CC) \ox1_{[x+1,N]}\right)$ in its natural action on $\H$.}
By construction, $\M_{x}$ is isomorphic with the ITPFI factor
\begin{align}
\M_{x} \cong \bigotimes_{k\in \NN} (M_{D_k}(\CC); \omega_{x}^{(d_k,n_k)}),
\end{align}
where $\omega_{x}^{(d_k,n_k)}$ is the reduced state of $\Omega^{(d_k,n_k)}$ on the $x$-th tensor factor.

We thus obtain an $N$-partite quantum systems $(\M_{1},\ldots,\M_{N})$ on $\H$ with each subsystem described by a factor $\M_{x}$ and such that $\bigvee_{x=1}^N \M_{x} = \B(\H)$ together with a pure $N$-partite entangled state $\Omega = \otimes_k \Omega^{(d_k,n_k)}$. 
For any $K\in\NN$ we can also consider the increasing sequences of $N$-partite systems $(\M\up{n}_1,\ldots \M\up{n}_N)_n$ via
    \begin{align}
      \M_{x}\up{n} = \bigotimes_{k=1}^n M_{D_k},\quad \H\up{n} = \bigotimes_{k=1}^n (\CC^{D_k})^{\otimes N},\quad  \Omega\up{n} = \bigotimes_{k=1}^n \Omega^{(d_k,n_k)}, 
    \end{align}
and the canonical embedding of $\M\up{n}_x\subset \M\up{n+1}_x$, so that
\begin{align}
    \M_{x} = \M_{x}\up{n} \ox \M_{x}^{>n},\quad \H = \H\up{n}\ox \H^{>n},\quad \Omega = \Omega\up{n}\ox \Omega^{>n}
\end{align}
and $\M_x = (\cup_n \M\up{n}_x)''$. 
The relations \eqref{eq:multipartite-mbz-property} and \eqref{eq:multipartite-local-mbz-property} now turn into the following crucial observation:
By construction of the sequence $\Omega^{(d_k,n_k)}$, for any $d\in\NN$, any normalized $\Psi,\Phi \in (\CC^{d})^{\otimes L}$ with $2\leq L\leq N$, any $\eps>0$ and any $k_0\in \NN$, there exists a $k\geq k_0$ and local unitaries $u_x \in M_{D_k}(\CC)\subset \M_{x}$  for $x=1,\ldots,L$ such that
\begin{align}
    \norm{U (\Phi\ox \Xi) \ox \Omega - (\Psi\ox \Xi) \ox \Omega} \leq \frac{2}{\sqrt n_k} + 2\varepsilon_{n_k} = \frac{4}{\sqrt n_k}<\varepsilon,
\end{align}
where $U=\prod_{x=1}^L u_x$ and where $\Xi\in(\CC^{d})^{\ox N-L}$ is any unit vector. To see this, we simply pick $k$ sufficiently large that $d_k=d$, $4/\sqrt{n_k}<\varepsilon$ and choose the $u_x$ to  perform a cyclic permutation on the relevant tensor factor $\Omega_{n_k}(\Psi_i^{(d_k,n_k)},\Psi_j^{(d_k,n_k)})$ with $\norm{\Psi_i^{(d_k,n_k)}-\Psi\ox\Xi}<1/\sqrt{n_k}$ and $\norm{\Psi_j^{(d_k,n_k)}-\Phi\ox\Xi}<1/\sqrt{n_k}$. The estimate then follows from the triangle inequality.
This yields:

\begin{proposition}
    For any $N\geq 2$ and any $n\in\NN_0$ the $N$-partite system $(\M_{1}^{>n},\ldots,\M_{N}^{>n})$ on $\H^{>n}$ defined above has the following properties:
    \begin{enumerate}
    \item For any subset of $L\leq N$ subsystems with $L\geq 2$, the state $\Omega^{>n}$ is $L$-partite embezzling. 
    \item On every subsystem $\{x\}$ the reduced state $\omega_x^{>n}$ on $\M_{x}^{>K}$ is a monopartite embezzling state.  
 \end{enumerate}
  The family of unit vectors $\Omega\up{n}\in\H\up{n}$ on the increasing sequence of $N$-partite systems $(\M_1\up{n},\ldots, \allowbreak \M_N\up{n})_n$ on $\H\up{n}$ defines a multipartite, consistent embezzling family converging to the embezzling state defined by $\Omega\in\H$. 
\end{proposition}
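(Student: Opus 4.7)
The plan is to derive all three claims directly from the observation established just above the proposition, together with \cref{lem:pure-to-mixed} and \cref{thm:consistent-to-state}.

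For Part 1, I would fix $n \in \NN_0$, a subset $I \subset [N]$ of size $L \in \{2,\ldots,N\}$ (without loss of generality $I = \{1,\ldots,L\}$), pure vectors $\Psi,\Phi \in (\CC^d)^{\otimes L}$, an arbitrary unit padding vector $\Xi \in (\CC^d)^{\otimes N-L}$, and $\eps > 0$. Applying the observation with tolerance $\eps/2$ and threshold $k_0 = n+1$ yields an index $k > n$ and local unitaries $u_x \in \M_x^{>n} \otimes M_d(\CC)$ for $x \in I$ with
\begin{equation*}
    \bigl\| U(\Phi \otimes \Xi) \otimes \Omega - (\Psi \otimes \Xi) \otimes \Omega \bigr\| < \tfrac{\eps}{2}, \qquad U = \prod_{x \in I} u_x.
\end{equation*}
Splitting $\Omega = \Omega^{(n)} \otimes \Omega^{>n}$ and noting that $U$ acts trivially on the $\Omega^{(n)}$ factor lets the same bound be written with $\Omega^{>n}$ in place of $\Omega$. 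Interpreting $\Omega^{>n} \otimes \Phi \otimes \Xi$ and $\Omega^{>n} \otimes \Psi \otimes \Xi$ as vector purifications of $\omega^{>n} \otimes \phi$ and $\omega^{>n} \otimes \psi$ on the algebra $\bigvee_{x \in I}(\M_x^{>n} \otimes M_d(\CC))$ (the $N-L$ auxiliary $\Xi$ qudits carrying no observables), the standard estimate $\|\eta - \eta'\| \leq 2\|\alpha - \alpha'\|$ between vector states then gives $\|U(\omega^{>n} \otimes \phi) U^* - \omega^{>n} \otimes \psi\| < \eps$, settling Part 1. Part 2 is then immediate: Part 1 with $L = N$ makes $\Omega^{>n}$ an $N$-partite embezzling state, and \cref{lem:pure-to-mixed} applied to singletons (valid because $N \geq 2$ gives $1 \leq \lfloor N/2\rfloor$) shows that each marginal $\omega_x^{>n}$ is monopartite embezzling.

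For the final statement I would verify state consistency, quasi-consistency, and convergence in turn. State consistency, namely that the restriction of $\omega^{(m)}$ to $\M^{(n)}_{[N]}$ equals $\omega^{(n)}$ for $m \geq n$, is immediate from the unit-vector product structure $\Omega^{(m)} = \Omega^{(n)} \otimes \bigotimes_{\ell=n+1}^m \Omega^{(d_\ell,n_\ell)}$. For quasi-consistency, given $d$ and $\eps$ I would choose $n$ large enough that some $k \leq n$ satisfies $d_k = d$ and $4/\sqrt{n_k} < \eps$, which is possible by the defining property of the enumeration $(d_k,n_k)$. The cyclic-permutation recipe from \eqref{eq:multipartite-mbz-property} applied to slot $k$ then produces unitaries $u^{(n)}_x \in \M^{(n)}_x \otimes M_d(\CC)$ acting trivially on the slots $\Omega^{(d_\ell,n_\ell)}$ for $\ell > n$, so the embezzling error on $\Omega^{(m)}$ coincides with the error on $\Omega^{(n)}$ for every $m \geq n$ and is therefore $< \eps$. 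Convergence to the state defined by $\Omega$ then follows from the uniqueness clause of \cref{thm:consistent-to-state}: the consistent embezzling family $(\Omega^{(n)})_n$ determines an $N$-partite embezzling state uniquely, and the incomplete infinite tensor product construction realizes $(\M_x, \Omega)$ as exactly that limit.

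The main obstacle is essentially bookkeeping: one must ensure that the $N$-partite $\eps_n$-cover underlying the construction of $\Omega^{(d,n)}$ can be used to localize the embezzling action within the $L$-element subset $I$ when $L < N$. This is precisely what \eqref{eq:multipartite-local-mbz-property} enables, provided the padding vector $\Xi$ is chosen to be the same unit vector on both sides of the embezzling equation; once that is granted, each of the three items reduces to a direct invocation of the preceding observation and the cited lemmas.
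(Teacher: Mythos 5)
Your proposal is correct and follows essentially the same route as the paper, which proves the proposition simply by pointing to the estimate $\norm{U(\Phi\ox\Xi)\ox\Omega-(\Psi\ox\Xi)\ox\Omega}<4/\sqrt{n_k}$ established immediately beforehand; your elaboration (splitting off $\Omega^{(n)}$, converting vector estimates to state estimates, reading off quasi-consistency from the fact that the embezzling unitaries act only on slot $k\le n$, and invoking \cref{lem:pure-to-mixed} and \cref{thm:consistent-to-state}) is exactly the intended filling-in of details. The only point worth noting is that for $L<N$ your argument (like the paper's observation) applies the partial cyclic permutation to $\Omega_{n_k}(\Psi_i,\Psi_j)$ with cover elements that are merely close to, not equal to, product vectors across the $I$ versus $I^c$ cut; if one wants to avoid worrying about this, Part 1 for $L<N$ also follows directly from the $L=N$ case together with \cref{lem:subsystems-mbz}.
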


As a consequence of the proposition, we find that the LTW construction in fact yields a universal multipartite embezzler:
\begin{theorem}
    For any $N\geq 2$, consider the $N$-partite system $(\M_{1},\ldots,\M_{N})$ defined above.
    Then:
    \begin{enumerate}
        \item Each $\M_{x}$ is a hyperfinite type III$_1$ factor.
        \item Every density operator $\rho$ on $\H$ is an $N$-partite embezzling state. 
        \item Haag duality holds.
    \end{enumerate}
\end{theorem}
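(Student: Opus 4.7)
The plan is to prove parts (3), (1), and (2) in that order, with (2) being the main substantive claim.

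For part (3), each finite-level system $(\M_x^{(n)})_{x\in[N]}$ is the canonical $N$-fold tensor decomposition of the matrix algebra $\bigotimes_{x=1}^N M_{D_1 \cdots D_n}(\CC)$, for which Haag duality holds by the elementary commutant identity inside a tensor product. Since $(\M_1,\ldots,\M_N)$ arises by construction as the incomplete infinite tensor product of these systems with respect to the product states $\bigotimes_k \omega_x^{(d_k,n_k)}$, Lemma \ref{lem:haag} lifts Haag duality to the limit. For part (1), each $\M_x$ is by construction the ITPFI factor $\bigotimes_k (M_{D_k}(\CC); \omega_x^{(d_k,n_k)})$, hence automatically hyperfinite and factorial (every normal state on a matrix algebra has central support $1$). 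The type III$_1$ classification then follows either by invoking the characterization from \cite{long_paper} together with the monopartite embezzling property of the marginal $\omega_x$ established in the preceding proposition, or by directly verifying that the Araki--Woods asymptotic ratio set $r_\infty(\M_x)$ equals $\RR_{>0}$, using that the marginals of the superpositions $\Omega_n(\Psi,\Phi)$ taken over dense $\eps$-covers produce spectral ratios dense in $(0,\infty)$.

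For part (2), the strategy is to bootstrap from the embezzling property of the distinguished vector $\Omega$ (from the preceding proposition) to arbitrary density operators $\rho$ on $\H$, using homogeneity of the normal state space of the hyperfinite type III$_1$ factor (Powers, Connes--Størmer). The bipartite universal embezzler theorem of \cite{long_paper} asserts that on a bipartite system with Haag duality whose factors are both type III$_1$, every density operator on $\H$ is bipartite embezzling. By part (1), each $\M_x$ is type III$_1$, and so is $\M_x' = \M_{\{x\}^c}$ (part (3)), so $(\M_x, \M_{\{x\}^c})$ is a universal bipartite embezzler for every $x$. Given $\rho$ and an embezzling task $(\psi,\phi,\eps)$, the bipartite result produces unitaries $u_x \in \M_x \otimes M_d(\CC)$ and $u'_x \in \M_{\{x\}^c}\otimes M_d(\CC)$ with $\|u_x u'_x(\rho\otimes\phi)(u_x u'_x)^* - \rho\otimes\psi\| < \eps$. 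To conclude multipartite embezzlement, one must decompose $u'_x$ as a product $\prod_{y\neq x} v_y$ with $v_y \in \M_y \otimes M_d(\CC)$, which is achieved by an inductive application of the universal embezzler argument to the $(N{-}1)$-partite subsystem $(\M_y)_{y \neq x}$ inside $\M_{\{x\}^c}$, together with the concrete factorized (cyclic-permutation) structure of the embezzling unitaries for $\Omega$ supplied by the preceding proposition.

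The main obstacle is exactly this factorization step: the passage from bipartite embezzling unitaries on $(\M_x, \M_{\{x\}^c})$ to a fully multipartite product of local unitaries across all $N$ parties. The unitaries produced by the bipartite theorem at each stage need not a priori respect the internal multipartite structure of the complement $\M_{\{x\}^c}$. The cleanest remedy is probably to re-derive part (2) in a multipartite-native manner: approximate $\rho$ in trace norm by a state $w|\Omega\rangle\langle\Omega|w^*$ with $w = \prod_x w_x$ via iterative marginal matching (type III$_1$ homogeneity on each $\M_x$ in turn, with residuals absorbed into $\M_{\{x\}^c}$ via Haag duality and split off at successive induction steps), and then transfer the embezzling unitaries for $\Omega$ via conjugation by $w$, with the total error controlled by the triangle inequality.
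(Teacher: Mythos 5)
Your parts (1) and (3) track the paper: Haag duality via \cref{lem:haag}, and type III$_1$ from the embezzling property of the marginals (the paper cites \cite[Thm.~88]{long_paper}, using that $\omega_x=\omega_x\up{n}\ox\omega_x^{>n}$ with $\omega_x^{>n}$ monopartite embezzling for \emph{every} $n$; a single embezzling marginal only forces type III, so the infinitely repeated factorization is what pins down III$_1$ --- your phrasing is loose on this point, and your alternative via $r_\infty$ would require actually computing the marginal spectra, which you do not do). The real issue is part (2).

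There you have correctly located the obstacle --- the bipartite universal-embezzler theorem applied to $(\M_x,\M_{\{x\}^c})$ yields a unitary $u_x'$ in the complement with no reason to factor as a product over the remaining $N-1$ parties, and your induction does not close --- but your proposed remedy fails outright. A state of the form $w\kettbra{\Omega}w^*$ with $w$ unitary is pure, and unitary conjugation preserves the spectrum of a density operator, so no genuinely mixed $\rho$ on $\H$ can be approximated in trace norm by states in the $\prod_x\U(\M_x)$-orbit of the vector state $\Omega$. Homogeneity arguments of Connes--St{\o}rmer type act on normal states of the \emph{factor} $\M_x$, not on density operators of $\B(\H)$, and cannot bridge the pure/mixed divide.

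The paper's argument sidesteps both problems by exploiting the infinite-tensor-product structure $\H=\H\up{n}\ox\H^{>n}$: every density operator $\rho$ on $\H$ is approximated in trace norm by $\rho\up{n}\ox\kettbra{\Omega^{>n}}$, where $\rho\up{n}$ is an arbitrary (possibly mixed) density operator on the finite head $\H\up{n}$ (\cite[Cor.~85]{long_paper}). The tail vector $\Omega^{>n}$ is itself an $N$-partite embezzling state on $(\M_x^{>n})_x$ by the preceding proposition, with embezzling unitaries that are genuinely local, one per party, lying in $\M_x^{>n}\ox M_d(\CC)\subset\M_x\ox M_d(\CC)$ and acting trivially on $\H\up{n}$; hence $\rho\up{n}\ox\kettbra{\Omega^{>n}}$ is embezzling with those same unitaries, and the triangle inequality transfers the property to $\rho$. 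Replacing your homogeneity step by this head/tail approximation repairs the proof.
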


\begin{proof}
    The first two items follow from the arguments presented in \cite[Sec. 6]{long_paper}. We only sketch them here. For the first item, observe that  $\M_{x}=\M_{x}\up{n}\ox \M_{x}^{>n}$ and that by the above discussion, the states $\omega_x^{>n}$  on $\M_{x}^{>n}$ arising as the marginals of $\Omega^{>n}$ are monopartite embezzling states. Since $\omega_x = \omega_x\up{n}\ox \omega_x^{>n}$ for every $n\in\NN$, it follows from \cite[Thm.~88]{long_paper} that $\M_{x}$ must be of type III$_1$.
    For the second item, every density  operator $\rho$ on $\B(\H)$ can be approximated by a density operator of the form
    \begin{align}
        \rho\up{n} \otimes |\Omega^{>n}\rangle\langle \Omega^{>n}|
    \end{align}
    on $\H\up{n}\ox \H^{>n}$ up to arbitrary accuracy (see for example \cite[Cor. 85]{long_paper}).
    Since $\Omega^{>n}$ defines an $N$-partite embezzling state for every $n\in\NN$, it follows that $\rho$ is an $N$-partite embezzling state.
    Finally, Haag duality holds by \cref{lem:haag}.
\end{proof}

\section{Monopartite and bipartite embezzling families}
\label{sec:families}

Embezzling families were discovered first by van Dam and Hayden in a finite-dimensional bipartite setting \cite{van_dam2003universal}. Bipartite embezzling states and families can equivalently be discussed in a monopartite setting \cite{long_paper}. In this section, we use this to discuss monopartite and bipartite embezzling families and their limits in more detail. 
Our goal is to analyze the existence of limits of two constructions of embezzling families: 
\begin{itemize}
    \item[(1)] \label{it:vdh} The van Dam-Hayden family \cite{van_dam2003universal}.
    \item[(2)] \label{it:ltw} A highly simplified version of the LTW family.
\end{itemize}
For our purposes, we interpret both families as weak*-convergent sequences with respect to a suitable infinite $C^{*}$-tensor product of finite-dimensional matrix algebras. In this way, we obtain a qualitative difference between the two families -- the vDH family leads to the hyperfinite type $\II_{1}$ factor (on which no embezzling states exist) and the simplified LTW family provides us with the hyperfinite type $\III_{1}$ factor -- a universal embezzler. Interestingly, the embezzling unitaries for the simplified LTW family can still be constructed rather explicitly.

We follow \cite{van_luijk_critical_2024} and define a monopartite version of the concept of an embezzling family as follows:
\begin{definition}
\label{def:mbzfam}
    Let $(\M\up{n})_n$ be an increasing net of von Neumann algebras and $(\omega_n)_{n}$ be a net of normal states on $\M\up{n}$.
    Then $(\omega\up{n})_{n}$ is called a \emph{monopartite} embezzling family if for all $d\in\NN$ and all $\eps>0$ there exist an $n=n(d,\eps)$ such that for all $m\ge n(d,\eps)$ and for all states $\psi$ and $\phi$ on $M_{d}(\CC)$ there exist unitaries $u_{m}\in M_{d}(\M\up{m}) = \M\up{m}\ox M_{d}(\CC)$ such that
    \begin{align}
        \norm{u\up{m}(\omega\up{m}\ox\phi)(u\up{m})^{*}-\omega\up{m}\ox\psi} & < \eps.
    \end{align}
\end{definition}
Note that a monopartite embezzling family allows to embezzle arbitrary mixed states on single systems, whereas multipartite embezzling families allow to embezzle arbitrary pure, entangled states. 
For finite-dimensional sequences $(\M\up{n})_{n}$, it is shown in \cite[Prop.~33]{van_luijk_critical_2024} that a monopartite embezzling family induces a bipartite embezzling family in the sense of \cref{def:multipartite-mbz-family} for $N=2$. Conversely, any bipartite embezzling family clearly induces a monopartite embezzling family by passing to its reduced states. Thus, results on monpartite embezzling families induce corresponding results on bipartite embezzling families.  In the following, we therefore restrict to the monopartite setting.

To obtain our results, we first show that it suffices for a monopartite embezzling family to be able to embezzle arbitrary maximally mixed states in order to embezzle arbitrary mixed states. 
There are two ways one can show this:
First, the results in \cite{long_paper} based on the \emph{flow of weights} imply that  for (monopartite) \emph{embezzling states} (instead of families) being able to embezzle maximally mixed states of dimension $d_1$ and $d_2$ such that $\log(d_1)/\log(d_2)$ is irrational is sufficient to be able to embezzle \emph{arbitrary} mixed states. 
Second, by the monopartite analog of \cref{thm:consistent-to-state}, having a \emph{consistent} monopartite family that can embezzle arbitrary maximally mixed states yields a state that can embezzle arbitrary maximally mixed states. Hence, it is a monopartite embezzling state, and its restriction (which yields again the starting family) yields a monopartite embezzling family.

\begin{remark}\label{rem:traces}
    The arguments of \cite{long_paper} also imply that \emph{every} normal state on a type $\III_{1/d}$ factor allows to embezzle maximally mixed states of dimension $d$. Since not all such states are monopartite embezzling states, it follows that being able to embezzle maximally mixed states of a fixed dimension is insufficient to guarantee monopartite embezzlement of all states. In the bipartite setting of embezzlement of entanglement, this means that it is not sufficient to be able to embezzle Bell states via local operations (clearly, it is sufficient if we additionally allow for classical communication \cite{nielsen_conditions_1999}).
\end{remark}

We now present a more direct argument for families on increasing finite-dimensional matrix algebras. It does not require consistency and is based on the following Lemma:

\begin{lemma}\label{lem:contractions}
    Let $\omega$ be a state on $M_k(\CC)$ and $\psi$ a state on $M_d(\CC)$ and suppose that a contraction $s:\CC^k\to \CC^k\ox \CC^d$ fulfills
    \begin{align}
        \norm{\omega\ox \psi - s \omega s^*}<\eps.
    \end{align}
    If $s=va$ is a polar decomposition of $s$ with $0\leq a\leq 1$ and $v:\CC^k\to \CC^k\ox \CC^d$ an isometry, then
      \begin{align}
        \norm{\omega\ox \psi - v\omega v^*}<6\eps^{1/2}.
    \end{align}
\end{lemma}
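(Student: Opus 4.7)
My plan is to compare $v\omega v^*$ with $s\omega s^* = va\rho av^*$ (using a density-matrix representation $\rho$ for $\omega$ and $\sigma$ for $\psi$), so that the $\eps^{1/2}$ factor arises from estimating how close $a\rho a$ is to $\rho$ in trace norm. The difference between $v\omega v^*$ and $\omega\ox\psi$ then follows by the triangle inequality, with the bound between $s\omega s^*$ and $\omega\ox\psi$ coming directly from the hypothesis.

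The first step is to extract a scalar consequence from the hypothesis. Evaluating the positive functional $\omega\ox\psi - s\omega s^*$ on $1$ (which is bounded by the norm of the functional) gives $\tr((1-a^2)\rho)<\eps$. Because $0\le a\le 1$ we have $1-a\le (1-a)(1+a)=1-a^2$, so also $\tr((1-a)\rho)<\eps$ and $\tr((1-a)^2\rho)\le\tr((1-a)\rho)<\eps$.

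Next, I will estimate $\|\rho-a\rho a\|_1$. Decomposing $\rho-a\rho a=(1-a)\rho+a\rho(1-a)$, each summand is bounded by $\|(1-a)\rho\|_1$ (using $\|a\|\le1$ and $\|x\|_1=\|x^*\|_1$). A Cauchy--Schwarz estimate for Schatten norms,
\begin{equation}
    \|(1-a)\rho\|_1\le \|(1-a)\rho^{1/2}\|_2\,\|\rho^{1/2}\|_2=\sqrt{\tr((1-a)^2\rho)}<\eps^{1/2},
\end{equation}
then yields $\|\rho-a\rho a\|_1<2\eps^{1/2}$. Since conjugation by an isometry preserves singular values and hence the trace norm, this gives $\|v\omega v^*-s\omega s^*\|=\|v(\rho-a\rho a)v^*\|_1<2\eps^{1/2}$.

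Finally, the triangle inequality gives $\|v\omega v^*-\omega\ox\psi\|<2\eps^{1/2}+\eps$, which is bounded by $6\eps^{1/2}$ for all $\eps\ge 0$ (for $\eps\le 16$ this follows from $\eps\le 4\eps^{1/2}$, and for $\eps>16$ the conclusion is trivial since the left-hand side is a difference of states and hence bounded by $2$). The main delicacy is the Cauchy--Schwarz step linking the scalar bound $\tr((1-a)\rho)<\eps$ to the operator-level trace-norm bound; everything else is bookkeeping with the polar decomposition and the isometric invariance of $\|\,\cdot\,\|_1$.
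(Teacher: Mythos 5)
Your proof is correct, but it takes a genuinely different route from the paper. You work entirely at the level of density operators: from $\omega(1-a^2)<\eps$ you deduce $\tr((1-a)^2\rho)<\eps$, split $\rho-a\rho a=(1-a)\rho+a\rho(1-a)$, and control each piece by the Schatten--Cauchy--Schwarz estimate $\norm{(1-a)\rho}_1\le\norm{(1-a)\rho^{1/2}}_2\norm{\rho^{1/2}}_2<\eps^{1/2}$ (a "gentle-operator" argument), then use isometric invariance of $\norm{\placeholder}_1$ and the triangle inequality to get $2\eps^{1/2}+\eps$. The paper instead writes $v=u(1\ox\ket1)$, replaces $s$ by the contraction $r=u(a\ox1)$, passes to canonical purifications via the estimate $\norm{\Omega_\omega-\Omega_\varphi}^2\le\norm{\omega-\varphi}$, and bounds $\norm{\Phi-b\Phi}$ for the polar part $b$ of $r\ox\bar r$ at the vector level before converting back to states. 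Your approach is more elementary (no standard-form/purification machinery) and in fact tighter: note that in the paper's chain the quantity $2\eps^{1/2}-\eps$ is a bound on $\norm{\Phi-b\Phi}^2$, so a careful reading of that route only yields a bound of order $\eps^{1/4}$; your direct trace-norm argument is the one that actually delivers the advertised $O(\eps^{1/2})$ rate, with the constant $6$ being generous. The only cosmetic blemish is calling $\omega\ox\psi-s\omega s^*$ a "positive functional" --- it is merely a difference of positive functionals --- but all you use is $|\varphi(1)|\le\norm{\varphi}$ together with $a^2=s^*s\le1$, so the step stands.
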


\begin{proof}
    Every isometry $v:\CC^k\to \CC^k\ox \CC^d$ can be written as $u(1\ox \ket 1)$ for some unitary $u$ on $\CC^k\ox \CC^d$.
    Hence the contraction $r= u(a\ox 1)$ on $\CC^k\ox \CC^d$ satisfies
    \begin{align}
        r(\omega \ox \bra1\placeholder\ket1)r^* = s\omega s^*. 
    \end{align}
    Let $\Omega\in\CC^{k_n}\otimes\CC^{k_n}$ and $\Psi\in\CC^d\ox\CC^d$ be the canonical purifications, so that $\norm{\Omega \ox\Psi - r\ox\bar r(\Omega\ox \ket1\ket1)}< \eps^{1/2}$.
    Let $r\ox\bar r= (u\ox \bar u)b$ be a polar decomposition and put $\Phi=\Omega\ox\ket1\ket1$.
    Then, $\norm{r\ox\bar r \Phi}=\norm{b\Phi} \ge 1-\eps^{1/2}$.
    Since $r\ox\bar r$ is a contraction, we have $0\leq b\leq 1$. Hence $-b \le -b^2$ and
    \begin{equation*}
        \norm{\Phi-b\Phi}^2 = 1 + \bra{\Phi}(b^2-2b)\ket{\Phi}  \le 1- \bra{\Phi}b^2\ket{\Phi} \le 1- (1-\eps^{1/2})^2 = 2\eps^{1/2}-\eps
    \end{equation*}
    Applying the triangle inequality, we obtain 
    \begin{equation*}
        \norm{\Omega\ox\Psi-u\ox\bar u(\Omega\ox\ket1\ket1} )
        \le \eps^{1/2} + 2\eps^{1/2} -\eps \le 3\eps^{1/2}.
    \end{equation*}
    Thus, the standard estimate $\norm{\bra\Phi\placeholder\ket\Phi-\bra\Psi\placeholder\ket\Psi} \le 2 \norm{\Phi-\Psi}$ yields
    \begin{equation*}
     \norm{\omega\ox\psi - v\omega v^*} = \norm{\omega\ox\psi - u(\omega\ox\bra1\placeholder\ket1)u^*} \le 6 \eps^{1/2}.
    \end{equation*}
\end{proof}

\begin{lemma}
\label{lem:mbz_traces}
Let $(\omega\up{n})_{n}$ be a sequence of states on $M_{k_n}(\CC)$ for an increasing sequence of integers $(k_{n})_{n}\subset \NN$. Then, the following are equivalent:
\begin{itemize}
    \item[(1)] $(\omega\up{n})_{n}$ is a monopartite embezzling family in the sense of \cref{def:mbzfam}.
    \item[(2)] $(\omega\up{n})_{n}$ can embezzle maximally mixed states of arbitrary dimension, i.e., for all $d\in\NN$ and $\eps>0$ there exists an $n=n(d,\eps)$ such that for all $m\geq n$ there exist unitaries $u\up{m}\in M_{k_m}(\CC)\ox M_d(\CC)$ such that
    \begin{align}
    \label{eq:mbz_traces}
    \norm{u\up{m}(\omega\up{m}\ox\bra1\placeholder\ket1)(u\up{m})^{*}-\omega\up{m}\ox\tfrac{1}{d}\tr_{d}} & < \eps,
    \end{align}
    where $\tr_{d}$ denotes the canonical trace on $M_{d}(\CC)$.
\end{itemize}
\end{lemma}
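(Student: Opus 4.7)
The direction $(1)\Rightarrow(2)$ is immediate, since maximally mixed states are particular examples of states on $M_d(\CC)$.

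For the converse $(2)\Rightarrow(1)$, fix a state $\psi$ on $M_d(\CC)$ and $\eps>0$. The plan is to combine the given embezzling of maximally mixed states with \cref{lem:contractions}. First I would spectrally decompose $\psi=\sum_i p_i|v_i\rangle\langle v_i|$ and approximate each $p_i$ by $q_i/D$ with $\sum_i q_i=D$, producing a state $\psi'=\sum_i(q_i/D)|v_i\rangle\langle v_i|$ with $\norm{\psi-\psi'}<\eps/3$; here $D$ can be taken arbitrarily large. Partition $\{1,\ldots,D\}$ into disjoint blocks $G_1,\ldots,G_d$ with $|G_i|=q_i$. By hypothesis (2), for any $\eta>0$ and $m$ sufficiently large there is a unitary $U_m\in M_{k_m}(\CC)\ox M_D(\CC)$ with $\norm{U_m(\omega^{(m)}\ox|1\rangle\langle 1|)U_m^*-\omega^{(m)}\ox\tfrac{1}{D}\tr_D}<\eta$.

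Writing $U_m(I\ox|1\rangle)=\sum_j W_{m,j}\ox|j\rangle$ with $W_{m,j}\in B(\CC^{k_m})$ and $\sum_j W_{m,j}^*W_{m,j}=I$, the embezzling hypothesis implies the block-orthogonality $\norm{W_{m,i}\omega^{(m)}W_{m,j}^*-(\delta_{ij}/D)\omega^{(m)}}_1\leq\eta$ for each pair $(i,j)$ (compressions of an operator small in trace norm are themselves small). Setting $L_i:=\sum_{j\in G_i}W_{m,j}$ and defining $s_m\colon\CC^{k_m}\to\CC^{k_m}\ox\CC^d$ by $s_m\xi:=\sum_i L_i\xi\ox|v_i\rangle$, a direct calculation using disjointness of the $G_i$ and this block-orthogonality shows that $\norm{s_m\omega^{(m)}s_m^*-\omega^{(m)}\ox\psi'}_1$ is bounded by a polynomial (in $D,d$) multiple of $\eta$.

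The central obstacle is establishing $\norm{s_m}\approx 1$. One computes
\[
s_m^*s_m=I+\sum_{\substack{j\neq l\\ j,l\in\text{same }G_i}}W_{m,j}^*W_{m,l},
\]
and the cross-terms have small $\omega^{(m)}$-weighted trace but no a priori operator-norm bound. I would exploit the freedom in choosing $U_m$—the $\tau_D$-embezzling property is preserved under precomposition by $I\ox V$ for any unitary $V$ on $\CC^D$—and use a probabilistic argument (averaging over Haar measure on $U(D)$, whose first moments annihilate the off-diagonal cross-terms in expectation) to select $V$ such that the cross-term sum has operator norm $O(\sqrt\eta)$. Rescaling $s_m\mapsto s_m/\norm{s_m}$ then yields a genuine contraction whose action on $\omega^{(m)}$ still approximates $\omega^{(m)}\ox\psi'$ with error $O(\sqrt\eta)$ after absorbing the small renormalization penalty.

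Finally, \cref{lem:contractions} applied to the contraction $s_m$ yields an isometry $v_m\colon\CC^{k_m}\to\CC^{k_m}\ox\CC^d$ with $\norm{v_m\omega^{(m)}v_m^*-\omega^{(m)}\ox\psi'}_1$ bounded by a root of the input error. Extending $v_m$ to a unitary $u_m\in M_{k_m}(\CC)\ox M_d(\CC)$ via $u_m(\xi\ox|1\rangle):=v_m\xi$ (and unitarily on the orthogonal complement of $\CC^{k_m}\ox|1\rangle$), the triangle inequality combined with $\norm{\psi-\psi'}<\eps/3$ gives the desired bound $\norm{u_m(\omega^{(m)}\ox|1\rangle\langle 1|)u_m^*-\omega^{(m)}\ox\psi}<\eps$ upon choosing $\eta$ small enough.
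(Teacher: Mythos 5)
Your easy direction and the rational approximation $\psi\approx\psi'$ are fine, and you correctly identify that \cref{lem:contractions} is the tool for upgrading a contraction to an isometry. But the step you yourself flag as ``the central obstacle'' is a genuine gap, and the fix you sketch does not work. Writing $U_m(1\ox\ket1)=\sum_j W_{m,j}\ox\ket j$ and setting $L_i=\sum_{j\in G_i}W_{m,j}$, the hypothesis only controls the \emph{$\omega^{(m)}$-weighted} quantities $W_{m,j}\,\omega^{(m)} W_{m,l}^*$ in trace norm; it says nothing about the operator norm of the cross terms in $s_m^*s_m = 1+\sum_i\sum_{j\ne l\in G_i}W_{m,j}^*W_{m,l}$. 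Your Haar-averaging argument over $V\in U(D)$ only shows that these cross terms vanish \emph{in expectation} (first moments of Haar unitaries). Vanishing expectation of a matrix-valued random variable does not produce a realization with small operator norm: one would need a concentration or second-moment bound, the natural variance terms are of order one rather than $O(\eta)$, and the matrices live in $M_{k_m}(\CC)$ with $k_m\to\infty$, so no dimension-free concentration is available. Without $\norm{s_m}\approx1$, the rescaling $s_m/\norm{s_m}$ can destroy the approximation $s_m\omega^{(m)}s_m^*\approx\omega^{(m)}\ox\psi'$, and \cref{lem:contractions} cannot be applied.

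The paper avoids this problem entirely by never extracting blocks of a single unitary. Instead (following Haagerup--St{\o}rmer), it writes the eigenvalues as $\lambda_j=p_j/q_j$, uses hypothesis (2) to produce for each $j$ an \emph{isometry} $v_{n,j}:\CC^{k_n}\to\CC^{k_n}\ox\CC^{q_1\cdots p_j\cdots q_d}$ embezzling the corresponding normalized trace, forms the isometry $v_n=\oplus_j v_{n,j}$, which maps $\omega^{(n)}\ox\rho_{\{\lambda_j\}}$ approximately to $\omega^{(n)}\ox\tfrac1Q\tr_Q$ with $Q=q_1\cdots q_d$, and takes a second isometry $w_n$ with $w_n\omega^{(n)}w_n^*\approx\omega^{(n)}\ox\tfrac1Q\tr_Q$. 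The map $s_n=v_n^*w_n$ is then \emph{automatically} a contraction, being a coisometry composed with an isometry, and satisfies $s_n\omega^{(n)}s_n^*\approx\omega^{(n)}\ox\rho_{\{\lambda_j\}}$; only then is \cref{lem:contractions} invoked. You should either adopt this $v_n^*w_n$ construction or supply a genuine proof that your cross terms can be made small in operator norm; as written, the argument is incomplete. (A minor further omission: \cref{def:mbzfam} requires embezzling between two arbitrary states $\phi,\psi$, not only from $\kettbra11$; this is handled by a one-line triangle-inequality reduction, which you should state.)
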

\begin{proof}
The implication (1)$\implies$(2) is clear. For the converse statement, (2)$\implies$(1), we argue in analogy with \cite[Cor.~3.3]{haagerup_classification_2009}.

First, if we can embezzle arbitrary mixed states from the initial pure state $\bra1\placeholder\ket1$, then it is possible to embezzle arbitrary mixed states from any other arbitrary mixed state by the triangle inequality and because the inverse of a unitary is unitary. It is, hence, sufficient to show that one can embezzle arbitrary mixed states from $\bra1\placeholder\ket1$.
Second, if $v: \CC^k \to \CC^k\ox \CC^d$ is an isometry, then we can write $v = u(1\ox \ket1)$ for some unitary $u$ on $\CC^k\ox \CC^d$. Conversely, for a unitary $u$, setting $v=u(1\ox \ket1)$ defines an isometry $v$. It hence suffices to consider isometries in the following. We use the shorthand $v\omega v^{*}\approx_{\eps}\omega\ox\psi$ for $\norm{v\omega v^{*}-\omega\ox\psi}  < \eps$.

Consider a state $\psi_{\{\lambda_j\}}$ on $M_{d}(\CC)$ with eigenvalues $\lambda_{1},...,\lambda_{d}\geq0$. 
Without loss of generality, we may assume that $\psi_{\{\lambda_j\}}$ is given in terms of a diagonal density matrix $\rho_{\{\lambda_j\}}$ with respect to the canonical basis of $\CC^{d}$. Therefore, considering $\omega_{n}\ox\psi_{\{\lambda_j\}}$ is equivalent to considering $\omega_{n}\ox\rho_{\{\lambda_j\}}$, and we may write:
\begin{align}
\label{eq:state_ext_prod}
\omega_{n}\ox\rho_{\{\lambda_j\}} & = \begin{pmatrix} \lambda_{1}\omega_{n} & \cdots & 0 \\ \vdots & \ddots & \vdots \\ 0 & \cdots & \lambda_{d}\omega_{n}\end{pmatrix}.
\end{align}
Now, we assume that $\lambda_{j}=\tfrac{p_{j}}{q_{j}}$, $j=1,...,d$, which is sufficient as any state on $M_{d}(\CC)$ can be approximated in norm in this way. We rewrite \cref{eq:state_ext_prod}:
\begin{align}
\omega_{n}\ox\rho_{\{\lambda_j\}} & = \frac{1}{q_{1}...q_{d}}\begin{pmatrix} p_{1}q_{2}...q_{d}\omega_{n} & \cdots & 0 \\ \vdots & \ddots & \vdots \\ 0 & \cdots & q_{1}...q_{d-1}p_{d}\omega_{n}\end{pmatrix}.
\end{align}
Since the family $(\omega_{n})_{n}$ can embezzle traces of arbitrary dimension, we may choose $n$ large enough such that we find an isometry $v_{n,j}:\CC^{k_{n}}\to\CC^{k_{n}}\ox\CC^{q_{1}...p_{j}...q_{d}}$ such that
\begin{align}
v_{n,j}\omega_{n}v_{n,j}^{*} & \approx_{\eps} \omega_{n}\ox\tfrac{1}{q_{1}...p_{j}...q_{d}}\tr_{q_{1}...p_{j}...q_{d}},
\end{align}
for each $j=1,...,d$. Setting $v_{n}=\oplus_{j=1}^{d}v_{n,j}$, we obtain
\begin{align}
v_{n}(\omega_{n}\ox\rho_{\{\lambda_j\}})v_{n}^{*} & \approx_{\eps} \frac{1}{q_{1}...q_{d}}\begin{pmatrix} \omega_{n}\ox\Tr_{p_{1}q_{2}...q_{d}} & \cdots & 0 \\ \vdots & \ddots & \vdots \\ 0 & \cdots & \omega_{n}\ox\Tr_{q_{1}...q_{d-1}p_{d}}\end{pmatrix} = \omega_{n}\ox\tfrac{1}{q_{1}...q_{d}}\tr_{q_{1}...q_{d}}.
\end{align}
Finally, by potentially increasing $n$, we find an isometry $w_{n}:\CC^{k_{n}}\to\CC^{k_{n}}\ox\CC^{q_{1}...q_{d}}$ such that $w_{n}\omega_{n}w_{n}^{*}\approx_{\eps}\omega_{n}\ox\tfrac{1}{q_{1}...q_{d}}\tr_{q_{1}...q_{d}}$. This allows us to define a contraction $s_{n}=v_{n}^{*}w_{n}$ such that:
\begin{align}
s_{n}\omega_{n}s_{n}^{*} & = v_{n}^{*}(w_{n}\omega_{n}w_{n}^{*})v_{n} \approx_{\eps} v_{n}^{*}(\omega_{n}\ox\tfrac{1}{q_{1}...q_{d}}\tr_{q_{1}...q_{d}})v_{n} \approx_{\eps} \omega_{n}\ox\rho_{\{\lambda_j\}},
\end{align}
where we used that $\norm{s\varphi s^{*}}\leq\norm{\varphi}$ for any pair of contraction $s$ and linear functional $\varphi$, and that $v_{n}$ is an isometry. Invoking \cref{lem:contractions} concludes the proof.
\end{proof}

\subsection{The van Dam-Hayden embezzling family}
\label{sec:vdh}

The original embezzling family proposed by van Dam and Hayden is defined as follows:
For each $n\in\NN$, we consider the bipartite pure states $\ket{\Omega\up{n}} = h_{n}^{-\frac{1}{2}}\sum_{j=1}^{n}\frac{1}{\sqrt{j}}\,\ket{jj}$ 
on the Hilbert spaces $\H\up{n} = \CC^{n}\ox\CC^{n}$, where $h_{n} = \sum_{j=1}^{n}\tfrac{1}{j}$ is the $n$th harmonic number.
We denote by 
\begin{align}\label{eq:vDHmbz}
     \omega\up{n} = \frac{1}{h_{n}} \sum_{j=1}^n \frac{1}{j} \bra j\placeholder \ket j
\end{align} 
the marginal of $\Omega\up{n}$ on $M_{n}(\CC)\cong M_{n}(\CC)\ox1\subset M_{n}(\CC)\ox M_{n}(\CC)$. 

\begin{theorem}[\cite{van_dam2003universal}]
\label{thm:vDHmbz}
The sequence of states $(\omega\up{n})_{n}$ in \cref{eq:vDHmbz} is a monopartite embezzling family.
\end{theorem}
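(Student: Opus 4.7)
The plan is to reduce \cref{thm:vDHmbz} to the content of \cref{lem:mbz_traces}: it suffices to show that for every $d\in\NN$ and $\eps>0$ there is an integer $n(d,\eps)$ such that for all $m\geq n(d,\eps)$ one can find a unitary $u\in M_m(\CC)\ox M_d(\CC)$ with
\[
\|u(\omega\up{m}\ox\ketbra{1}{1})u^*-\omega\up{m}\ox\tfrac{1}{d}\tr_d\|_1<\eps.
\]
Since $\tr_d/d$ is the maximally mixed state on $M_d(\CC)$, this is exactly the assumption in item (2) of \cref{lem:mbz_traces}, so producing this estimate completes the proof.

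To bound the optimal trace distance I would invoke Mirsky's theorem (the Hermitian analog of the Hoffman--Wielandt inequality) for the trace norm, a unitarily invariant norm: for Hermitian operators $A,B$ on $\CC^{md}$,
\[
\min_u \|uAu^*-B\|_1=\sum_{\ell=1}^{md}|\lambda^\downarrow_\ell(A)-\lambda^\downarrow_\ell(B)|,
\]
with the minimum attained by the unitary that identifies the two eigenbases in matching decreasing order. Applied to $A=\omega\up{m}\ox\ketbra11$ and $B=\omega\up{m}\ox\tfrac1d\tr_d$, the sorted eigenvalue sequences are
\[
p^\downarrow_\ell=\tfrac{1}{h_m\ell}\text{ for }1\le\ell\le m,\ \ p^\downarrow_\ell=0\text{ for }\ell>m,\qquad q^\downarrow_\ell=\tfrac{1}{dh_m\lceil\ell/d\rceil}\text{ for }1\le\ell\le md.
\]

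The remaining estimate is elementary. Using $\lceil\ell/d\rceil\ge \ell/d$ gives $q^\downarrow_\ell\le 1/(h_m\ell)=p^\downarrow_\ell$ for $\ell\le m$; combining this sign information with the fact that $p^\downarrow$ and $q^\downarrow$ both sum to $1$, the $\ell^1$ distance collapses to twice a tail sum,
\[
\|p^\downarrow-q^\downarrow\|_1=2\sum_{\ell=m+1}^{md}q^\downarrow_\ell\le 2\sum_{\ell=m+1}^{md}\tfrac{1}{h_m\ell}=\tfrac{2(h_{md}-h_m)}{h_m}\le\tfrac{2\ln d}{h_m},
\]
where in the final step I use the standard integral bound $h_{md}-h_m\le\int_m^{md}\!\tfrac{dx}{x}=\ln d$. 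Since $h_m\to\infty$, for fixed $d$ and $\eps$ the right-hand side falls below $\eps$ once $m\ge n(d,\eps)$ for an appropriate $n(d,\eps)$, and \cref{lem:mbz_traces} then yields the theorem.

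I do not anticipate significant obstacles: the three ingredients (\cref{lem:mbz_traces}, Mirsky's theorem for the trace norm, and the integral bound on a partial harmonic sum) are standard. The only step requiring care is the reduction to a tail sum, which exploits the normalization $\sum_\ell p^\downarrow_\ell=\sum_\ell q^\downarrow_\ell=1$ and the pointwise comparison on $\ell\le m$; without it, bounding $\|p^\downarrow-q^\downarrow\|_1$ block-by-block over the $d$-fold flat blocks of $q^\downarrow$ would give the same $O(\ln d/\ln m)$ rate but with a messier telescoping argument.
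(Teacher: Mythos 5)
Your argument is correct, and it fills a gap the paper leaves open: \cref{thm:vDHmbz} is stated with a citation to van Dam--Hayden and no proof is given in the text, so there is no in-paper argument to compare against. Your route differs from the original one in \cite{van_dam2003universal} in a way worth noting. Van Dam and Hayden treat an \emph{arbitrary} target state directly, by sorting the Schmidt coefficients of $\Omega^{(n)}\ox\Psi$ and bounding the overlap with the sorted coefficients of $\Omega^{(n)}\ox\ket{11}$; you instead handle only the maximally mixed target by the (monopartite) eigenvalue-sorting argument and then invoke \cref{lem:mbz_traces} to upgrade to arbitrary mixed states. The computation itself is sound: the sorted spectra $p^\downarrow_\ell=\tfrac{1}{h_m\ell}$ (for $\ell\le m$, zero beyond) and $q^\downarrow_\ell=\tfrac{1}{dh_m\lceil \ell/d\rceil}$ are right, the pointwise domination $q^\downarrow_\ell\le p^\downarrow_\ell$ on $\ell\le m$ together with equal normalization does collapse $\lVert p^\downarrow-q^\downarrow\rVert_1$ to $2\sum_{\ell>m}q^\downarrow_\ell\le 2(h_{md}-h_m)/h_m\le 2\ln d/h_m$, and for the upper bound you only need the achievability half of Mirsky's theorem (a unitary aligning the two sorted eigenbases), which is immediate. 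Both approaches yield the same $O(\log d/\log m)$ rate; yours is shorter at the cost of leaning on \cref{lem:mbz_traces}, whose proof (rational approximation of eigenvalues, direct sums of isometries, and \cref{lem:contractions}) carries the weight of passing from the maximally mixed state to general targets, whereas the original argument is self-contained but requires the slightly more delicate bipartite sorting estimate. One cosmetic point: \cref{def:mbzfam} is phrased for an increasing net of von Neumann algebras, and the algebras $M_n(\CC)$ do not include unitally into one another; this mismatch is already present in the paper's own statement (and is resolved there only in \cref{prop:vDHmbz} by passing to the subfamily on spin chains), so it is not a defect of your proof, but you may wish to flag that your quantitative estimate is exactly what \cref{lem:mbz_traces} consumes, independent of that framing.
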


We show that a certain weak*-cluster point of the van Dam-Hayden embezzling family, when naturally considered on infinite spin chains, does not yield an embezzling state since the hyperfinite II$_1$ factor does not admit monopartite embezzling states \cite{long_paper}:

\begin{proposition}
\label{prop:vDHmbz}
    The van Dam-Hayden subfamily $\varphi^{(n)}=\omega\up{2^{n}}$, considered as states on the increasing sequence of finite spin chains $\M\up{n}=\bigotimes_{k=1}^{n}M_{2}(\CC)$, converges to the tracial state $\tau$ on $\M_0 = \cup_n \M\up{n}$.
    Hence, we can view $\M_0$ as weakly dense subalgebra of the hyperfinite type II$_1$ factor: $\M = \vee_{n}\M\up{n}$.
\end{proposition}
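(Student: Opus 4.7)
The plan has three steps: (i) fix an identification of $\M\up{n}=\bigotimes_{k=1}^n M_2(\CC)$ with $M_{2^n}(\CC)$ compatible with the inclusions; (ii) compute the restrictions $\varphi\up{n}\restriction\M\up{m}$ explicitly and pass to the limit via harmonic-sum asymptotics; (iii) identify the weak closure as the hyperfinite II$_1$ factor by the classical Murray--von Neumann construction.

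For (i), I identify $|j\rangle_{2^n}\in\CC^{2^n}$ with $|b_1\rangle\otimes\cdots\otimes|b_n\rangle$ via the binary expansion $j-1=\sum_{k=1}^n b_k\,2^{n-k}$, and take the inclusion $\M\up{n}\hookrightarrow\M\up{n+1}$ to prepend the new site at the most significant position. Iterating, $\M\up{m}\subset\M\up{n}$ corresponds to $a\mapsto 1^{\otimes(n-m)}\otimes a$, and $|i\rangle\langle i|_{2^m}$ lifts to $\sum_{c=0}^{2^{n-m}-1}|c\cdot 2^m+i\rangle\langle c\cdot 2^m+i|_{2^n}$. Since $\varphi\up{n}$ is diagonal in the computational basis, so is $\varphi\up{n}\restriction\M\up{m}$, and
\begin{equation*}
\varphi\up{n}\bigl(|i\rangle\langle i|_{2^m}\bigr)=\frac{1}{h_{2^n}}\sum_{c=0}^{2^{n-m}-1}\frac{1}{c\cdot 2^m+i},\qquad i=1,\ldots,2^m.
\end{equation*}

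For (ii), I use $h_N=\ln N+\gamma+O(N^{-1})$ together with an integral comparison to obtain $\sum_{c=0}^{2^{n-m}-1}(c\cdot 2^m+i)^{-1}=2^{-m}\ln(2^n/i)+O(1)$, and $h_{2^n}=n\ln 2+\gamma+o(1)$. Plugging in yields $\varphi\up{n}(|i\rangle\langle i|_{2^m})\to 2^{-m}=\tau(|i\rangle\langle i|_{2^m})$ as $n\to\infty$ for each fixed $m$ and each $i\in\{1,\ldots,2^m\}$; off-diagonal matrix elements vanish on both sides. Hence $\varphi\up{n}\restriction\M\up{m}\to\tau\restriction\M\up{m}$ in norm, and consequently $\varphi\up{n}(a)\to\tau(a)$ for every $a\in\M_0=\bigcup_m\M\up{m}$. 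For (iii), the pair $(\M_0,\tau)$ is isomorphic to the UHF algebra of type $2^\infty$ equipped with its unique tracial state, and its tracial GNS representation realizes the hyperfinite II$_1$ factor $R$ as weak closure by the classical infinite-tensor-product construction; this identifies $\M=\bigvee_n\M\up{n}$ with $R$.

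The main subtlety lies in step (i): appending the new site at the least significant bit would concentrate $\varphi\up{n}\restriction\M\up{m}$ on $|1\rangle\langle 1|_{2^m}$ in the limit, because the ``heavy'' vDH weight on small $j$ would always remain carried by the first few basis vectors, yielding the pure product state $|1\rangle\langle 1|^{\otimes\infty}$ and a type I factor in the closure. Once the correct tensor ordering is chosen, the remaining work is an elementary harmonic-sum estimate.
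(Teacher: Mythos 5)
Your proof is correct and follows essentially the same route as the paper's: you compute the restriction $\varphi^{(n)}\restriction\M^{(m)}$ with the same diagonal (block-repeated) embedding of matrix units, reduce to the same harmonic sum $h_{2^n}^{-1}\sum_{c}(c\cdot 2^m+i)^{-1}$, and conclude $\to 2^{-m}$ by elementary asymptotics (your integral comparison versus the paper's sandwich between $h_{2^{n-m}}/2^m$ and $(2^m+h_{2^{n-m}})/2^m$ is an immaterial difference). Your explicit warning about the tensor-ordering convention corresponds exactly to the paper's footnote on choosing the basis compatible with the diagonal embedding, and your step (iii) makes explicit the standard UHF/GNS identification that the paper leaves implicit.
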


\begin{proof}
The infinite $C^*$-tensor product $\bigotimes_{k=1}^\oo M_2(\CC) $ is defined as the C*-inductive limit of the inductive system \cite[Sec.~11.4]{KadisonRingrose2}:
\begin{align}
    \M^{(m)}\hookrightarrow \M^{(n)},\qquad a_{m} \mapsto a_{m}\ox1^{\ox(n-m)},\qquad a_{m}\in\M^{(m)}, \ n\geq m.
\end{align}
Then, we have for $a_{m}\geq0$,
\begin{align}
    \varphi\up{n}|_{\M^{(m)}}(a_{m}) & = h_{2^{n}}^{-1}\sum_{l=1}^{2^{n-m}}\sum_{k=1}^{2^{m}}\tfrac{1}{(l-1)2^{m}+k}(a_{m})_{kk},
\end{align}
where we have chosen the orthonormal basis in the definition of $\varphi\up{n}$ to be compatible with the diagonal embeddings, i.e., $a_m\mapsto\mathrm{diag}(a_m,\ldots,a_m)$.\footnote{Note that this is the opposite of the usual convention of the Kronecker product.}
Since $(a_{m})_{kk}\geq0$ and $\tfrac{1}{(l-1)2^{m}+1}\geq\tfrac{1}{(l-1)2^{m}+k}\geq\tfrac{1}{l2^{m}}$, we find:
\begin{align}
    h_{2^{n}}^{-1}h_{2^{n-m}}\tfrac{1}{2^{m}}\tr_{2^{m}}(a_{m}) & \leq \varphi\up{n}|_{\M^{(m)}}(a_{m}) \leq h_{2^{n}}^{-1}(2^{m}+h_{2^{n-m}})\tfrac{1}{2^{m}}\tr_{2^{m}}(a_{m}).
\end{align}
Thus, we find $\lim_{n\rightarrow\infty}\varphi\up{n}|_{\M^{(m)}}(a_{m}) = \tfrac{1}{2^{m}}\tr_{2^{m}}(a_{m}) = \tau|_{\M^{(m)}}(a_{m})$ because $h_{n}=\log(n)+\gamma+\tfrac{1}{2n}-\eps_{n}$ with $0\leq\eps_{n}\leq\tfrac{1}{8n^{2}}$, $\gamma$ the Euler-Mascheroni constant, and $\lim_{n\rightarrow\infty}h_{2^{n}}^{-1}h_{2^{n-m}}=1$.
\end{proof}

\subsection{A simpler Leung-Toner-Watrous embezzling family}
\label{sec:simpler-ltw}
In section~\ref{sec:LTW}, we showed that one can take the limit of the LTW construction to obtain a multipartite, universal embezzler. 
One obvious disadvantage of the construction is its `inefficiency': 
Tensoring $\eps$-covers of high-dimensional unit balls seems too wasteful in terms of Hilbert space dimension. Here, we consider a simpler version of the LTW family in the monopartite setting and show that it yields a universal monopartite embezzler. Via purification, the construction naturally yields a universal embezzler for \emph{bipartite} entanglement.
One way to quantify the efficiency of an embezzling family of states on finite spin chains is in terms of the asymptotic scaling of the circuit complexity. 
This is considered in \cite{schwartzman_complexity_2024}, where the author proves a lower bound on the complexity of all embezzling families.

In this subsection, we do two things: 
We note a simpler version of the LTW family, which is defined in the same way as the LTW family but only takes maximally entangled states. 
Moreover, we give direct proof of the claim that the limiting system is of type $\III_1$ by directly computing the \emph{asymptotic ratio set} $r_\infty(\M)$ of the resulting von Neumann algebra.

Given two states $\psi,\varphi$ on a $d$-dimensional quantum system, i.e., on $M_{d}(\CC)$, we can construct an approximate (within an error $\tfrac{1}{n-1}$) catalyst $\omega^{(n)}_{d}$ that facilitates the transition from $\psi$ to $\varphi$ by
\begin{align}
\label{eq:approx_cat}
\omega^{(n)}_{d} & = \tfrac{1}{n-1}\sum_{k=1}^{n-1}\psi^{\ox k}\ox\varphi^{\ox n-k},    
\end{align}
which satisfies
\begin{align}
\label{eq:approx_cat_err}
\norm{u(\psi\ox\omega^{(n)}_{d})u^{*}-\varphi\ox\omega^{(n)}_{d}} & \leq \tfrac{2}{n-1},
\end{align}
with $u\in M_{d}(\CC)^{\ox n}$ being the unitary that cyclically permutes the tensor factors.\footnote{Note that we use approximate catalysts of a slightly different form compared to \cref{sec:LTW}.}

Let us specialize to the state $\omega^{(n)}_{d}$ for $\psi = \bra{1}\placeholder\ket{1}$ and $\varphi = \tfrac{1}{d}\tr_{d}$. Due to \cref{lem:mbz_traces}, we can form an embezzling family from the states $\{\omega^{(n)}_{d}\}$, for all dimensions $d\geq 2$ and error-parameters $n\geq 2$, by considering the product states
\begin{align}
\label{eq:ltw_fam}
\omega^{(N,D)} & = \ox_{n=2}^{N}\ox_{d=2}^{D}\omega^{(n)}_{d}
\end{align}
on $\M^{(N,D)}=\ox_{n=2}^{N}\ox_{d=2}^{D}M_{d}(\CC)^{\ox n}$.
We can take the limit $N,D\to \oo$ of $\M\up{N,D}$ using the incomplete infinite tensor product 
\begin{equation}\label{eq:monopartite_LTW_limit}
    \M = \bigotimes_{n,d=2}^\oo (M_d(\CC)^{\ox n}; \omega_d\up n).
\end{equation}
To understand the properties of this algebra, we need to analyze the spectrum of the states $\omega^{(n)}_{d}$:

\begin{lemma}
\label{lem:ltw_spec}
Consider the state $\omega^{(n)}_{d}$ on $M_{d}^{\ox n}$ for $\psi = \bra{1}\placeholder\ket{1}$ and $\varphi = \tfrac{1}{d}\tr_{d}$. The spectrum of $\omega^{(n)}_{d}$ has the following form:
\begin{align}
\label{eq:ltw_spec}
\lambda^{(n,d)}_{j} & = \frac{1}{n-1}\frac{d}{d-1}\Big(\frac{1}{d^{j}}-\frac{1}{d^{n}}\Big), & m^{(n,d)}_{j} & = (d-1)d^{j-1}+\delta_{j,1}, & j & = 1,...,n,
\end{align}
where $\{\lambda^{(n,d)}_{j}\}$ denote the spectral values and $\{m^{(n,d)}_{j}\}$ are their multiplicities.
\end{lemma}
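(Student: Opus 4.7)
The plan is to diagonalise $\omega^{(n)}_d$ directly in the computational product basis of $(\CC^d)^{\ox n}$ and then count degeneracies. Both $\psi=\ketbra{1}{1}$ and $\varphi=\tfrac1d\1$ are diagonal in the canonical basis of $\CC^d$, so every summand in \cref{eq:approx_cat} is a diagonal tensor and $\omega^{(n)}_d$ is diagonal in the product basis $\{\ket{i_1\cdots i_n}\}$. The crucial observation is that the diagonal entry at $\ket{i_1\cdots i_n}$ depends only on the single integer statistic $L(i):=\max\{k\ge 0 : i_1=\cdots=i_k=1\}$, the length of the leading run of $1$'s.

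Next I would compute these entries explicitly. Unfolding the definition, the diagonal entry equals
\begin{align*}
    \frac{1}{n-1}\sum_{k=1}^{n-1}\Bigl(\prod_{\ell=1}^{k}\delta_{i_\ell,1}\Bigr)\frac{1}{d^{n-k}} \;=\; \frac{1}{n-1}\sum_{k=1}^{\min(L,n-1)}\frac{1}{d^{n-k}} \;=:\; E_L,
\end{align*}
with $L=L(i)$. A single geometric sum (substituting $m=n-k$ and setting $j=n-L$) yields
\begin{align*}
    E_L \;=\; \frac{1}{n-1}\sum_{m=n-L}^{n-1}\frac{1}{d^m} \;=\; \frac{1}{n-1}\cdot\frac{d}{d-1}\Bigl(\frac{1}{d^{n-L}}-\frac{1}{d^n}\Bigr),
\end{align*}
which matches $\lambda^{(n,d)}_{j}$ with $j=n-L$.

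To finish, I would count basis vectors by their value of $L$. For $0\le L\le n-1$, a string with $L(i)=L$ is parametrised by a forced entry $i_{L+1}\in\{2,\ldots,d\}$ and a free tail $i_{L+2},\ldots,i_n$, contributing $(d-1)d^{\,n-L-1}=(d-1)d^{\,j-1}$ basis vectors to the eigenspace of $\lambda^{(n,d)}_{j}$. The only remaining vector is the all-ones string, for which $L=n$; because the sum defining $E_L$ is truncated at $k=n-1$, one has $E_n=E_{n-1}$, so this vector is absorbed into the eigenspace of $\lambda^{(n,d)}_1$ and supplies the extra $\delta_{j,1}$ correction in $m^{(n,d)}_j$. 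The sanity check $1+\sum_{j=1}^n(d-1)d^{j-1}=d^n$ confirms that all basis vectors are accounted for. The main (and only) subtlety is the coincidence $E_n=E_{n-1}$ at the boundary of the sum; everything else is mechanical geometric summation and combinatorial counting.
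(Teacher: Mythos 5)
Your proposal is correct and is exactly the ``direct computation'' the paper invokes: the paper's proof consists of the single observation that $\psi$ and $\varphi$ are simultaneously diagonal, and your argument carries out precisely that diagonalisation, including the boundary coincidence $E_n=E_{n-1}$ responsible for the $\delta_{j,1}$ term.
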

\begin{proof}
This follows from a direct computation. To this end, we also note that $\psi = \bra{1}\placeholder\ket{1}$ and $\varphi = \tfrac{1}{d}\tr_{d}$ are simultaneously diagonal.
\end{proof}

We may use \cref{lem:ltw_spec} to prove that the limit algebra $\M$ (see \eqref{eq:monopartite_LTW_limit}) is the hyperfinite type $\III_{1}$ factor. 
To this end, we note that $\M$ is an ITPFI factor, and we compute the asymptotic ratio set $r_{\infty}(\M)$ as introduced by Araki and Woods \cite{araki1968factors}. 
We then use the fact that $\M$ is the hyperfinite type III$_1$ if and only if $r_\oo(\M)=\RR_+$ \cite{haagerup_uniqueness_1987,connes1973classIII}.

\begin{definition}{\cite[Def.~3.2]{araki1968factors}}
\label{def:asymptotic_ratio}
Let $I\subset\{2,...,\infty\}^{\times 2}$ be a finite subset. Then $\sigma_{I}(\M,\omega)$ denotes the non-zero part of the spectrum $\ox_{(n,d)\in I}\omega^{(n)}_{d}$ on $\ox_{(n,d)\in I}M_{d}(\mathbb C)^{\ox n}$. For a subset $K\subset\sigma_{I}(\M,\omega)$, we define
\begin{align}
\label{eq:spec_subsum}
\lambda(K) & = \sum_{\lambda\in K}\lambda.
\end{align}
The asymptotic ratio set $r_{\infty}(\M)\subseteq\RR_{+}$ consists of all $x\geq0$ such that there exists a sequence of mutually disjoint subsets $I_{n}\subset\{2,...,\infty\}^{\times 2}$, mutually disjoint subsets $K^{1}_{n},K^{2}_{n}\subset\sigma_{I_{n}}(\M,\omega)$, and bijections $\beta_{n}:K^{1}_{n}\to K^{2}_{n}$ such that
\begin{align}
\label{eq:spec_subsum_infty}
\sum_{n}\lambda(K^{1}_{n}) & = \infty, & \lim_{n}\max_{\lambda\in K^{1}_{n}}|x-\beta_{n}(\lambda)/\lambda| & = 0.
\end{align}
\end{definition}
\begin{lemma}
\label{lem:ltw_ratios}
Let $\M$ be the factor arising from the limit of the LTW family as in \eqref{eq:monopartite_LTW_limit}.
Then, we have $r_{\infty}(\M) = \RR_{+}$.
\end{lemma}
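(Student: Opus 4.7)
The plan is to prove $r_{\infty}(\M)\supseteq(0,\infty)$, from which $r_{\infty}(\M)=\RR_{+}$ follows. The strategy rests on the Araki-Woods fact that $r_{\infty}(\M)\cap(0,\infty)$ is a closed multiplicative subgroup of $(0,\infty)$: it therefore suffices to exhibit two elements of $r_{\infty}(\M)$ whose logarithms have irrational ratio. The most transparent route is to show $1/d\in r_{\infty}(\M)$ for every integer $d\ge 2$; specializing to $d=2$ and $d=3$, the multiplicative group generated by $\{1/2,1/3\}$ is dense in $(0,\infty)$ (since $\log 2/\log 3\notin\QQ$), and closedness then finishes the argument.

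To show $1/d\in r_{\infty}(\M)$, the key input is the ratio of adjacent spectral values of the single-factor state $\omega_{d}^{(n)}$, which by \cref{lem:ltw_spec} equals
\begin{equation*}
    \frac{\lambda^{(n,d)}_{2}}{\lambda^{(n,d)}_{1}} = \frac{d^{-2}-d^{-n}}{d^{-1}-d^{-n}} \xrightarrow[n\to\oo]{} \frac{1}{d}.
\end{equation*}
I would pick a sequence of pairwise distinct integers $(n_{k})_{k}\subset\NN$ with $n_{k}\to\oo$ and $\sum_{k}(n_{k}-1)^{-1}=\oo$ (e.g.\ $n_{k}=k+2$), and set $I_{k}:=\{(n_{k},d)\}$; these are pairwise disjoint finite subsets of $\{2,\ldots,\infty\}^{\times 2}$. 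Within $\sigma_{I_{k}}(\M,\omega)$, take $K_{k}^{1}$ to be the collection of eigenvectors of $\omega_{d}^{(n_{k})}$ at the top eigenvalue $\lambda^{(n_{k},d)}_{1}$, which has multiplicity $m^{(n_{k},d)}_{1}=d$; take $K_{k}^{2}$ to be any $d$-element subset of the eigenvectors at $\lambda^{(n_{k},d)}_{2}$, whose multiplicity $d(d-1)$ is at least $d$; and let $\beta_{k}:K^{1}_{k}\to K^{2}_{k}$ be any bijection. By construction $K_k^1$ and $K_k^2$ are disjoint inside $\sigma_{I_k}(\M,\omega)$.

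Each ratio $\beta_{k}(\lambda)/\lambda$ equals $\lambda^{(n_{k},d)}_{2}/\lambda^{(n_{k},d)}_{1}$, so $\max_{\lambda\in K^{1}_{k}}|1/d - \beta_{k}(\lambda)/\lambda|$ tends to zero by the displayed limit. At the same time,
\begin{equation*}
    \lambda(K_{k}^{1}) = d\,\lambda^{(n_{k},d)}_{1} = \frac{d^{2}}{(n_{k}-1)(d-1)}\bigl(d^{-1}-d^{-n_{k}}\bigr),
\end{equation*}
which is asymptotic to $d/((n_{k}-1)(d-1))$, so $\sum_{k}\lambda(K_{k}^{1})=\oo$ by the choice of $(n_{k})$. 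Both conditions of \cref{def:asymptotic_ratio} are thus met, proving $1/d\in r_{\infty}(\M)$ and completing the argument after invoking the closed-subgroup reduction of the first paragraph.

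The calculation is essentially a bookkeeping exercise once \cref{lem:ltw_spec} is in hand. The only mild tension is the balance in the choice of $(n_{k})$: fast enough that the adjacent-ratio limit holds uniformly along the sequence, yet slow enough that $\sum(n_{k}-1)^{-1}$ diverges. Both requirements are comfortably handled by $n_{k}=k+2$, and no deeper obstacle appears.
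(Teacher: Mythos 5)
Your proof is correct, but it takes a genuinely different route from the paper's. The paper fixes two dimensions $d_1,d_2$ and uses the paired index sets $I_n=\{(n,d_1),(n,d_2)\}$, matching the top eigenvalue of $\omega_{d_1}^{(n)}$ against that of $\omega_{d_2}^{(n)}$ to realize the ratio $\tfrac{d_1-1}{d_2-1}$; since these ratios exhaust the positive rationals, mere closedness of $r_\infty(\M)$ already yields $\RR_+$. You instead stay inside a single tensor factor, using singleton index sets $I_k=\{(n_k,d)\}$ and comparing the two largest eigenvalues $\lambda_1^{(n_k,d)}$ and $\lambda_2^{(n_k,d)}$, whose ratio tends to $1/d$; the multiplicity count $m_1=d\le m_2=d(d-1)$ and the harmonic divergence of $\sum_k d\,\lambda_1^{(n_k,d)}$ are exactly right, and $\lambda_2^{(n_k,d)}>0$ since $n_k\ge 3$. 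The trade-off is in the closing step: the set $\{1/d\}_{d\ge2}$ accumulates only at $0$, so you must additionally invoke the Araki--Woods fact that $r_\infty(\M)\cap(0,\infty)$ is a closed multiplicative \emph{subgroup}, together with the irrationality of $\log 2/\log 3$, whereas the paper only needs closedness. That extra input is standard and correctly cited to \cite{araki1968factors}, so the argument stands; arguably your choice of eigenvalue pairings sits more cleanly inside the literal statement of \cref{def:asymptotic_ratio}, since both subsets $K_k^1,K_k^2$ and their ratio live in the spectrum of one and the same factor $\omega_d^{(n_k)}$. (Minor point of wording: the $K_k^j$ should be subsets of the eigenvalue list counted with multiplicity, not collections of eigenvectors.)
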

\begin{proof}
For any $d_{1},d_{2}$, we consider the mutually disjoint index sets $I_{n}=\{(n,d_{1}),(n,d_{2})\}$ for $n\geq2$. We choose size-$\min\{d_{1},d_{2}\}$ subsets $K^{j}_{n}=\{\lambda^{(n,d_{j})}_{1},...,\lambda^{(n,d_{j})}_{1}\}$, $j=1,2$, of $\sigma_{I_{n}}(\M,\omega)$ according to \cref{lem:ltw_spec}. We define the bijections $\beta:K^{1}_{n}\to K^{2}_{n}$ by
\begin{align}
\label{eq:ltw_bij}
\beta_{n}\Big(\lambda^{(n,d_{1})}_{1}\Big) & = \lambda^{(n,d_{2})}_{1}.
\end{align}
Then, we have
\begin{align}
\label{eq:ltw_subsum}
\sum_{n=2}^{\infty}\lambda(K^{1}_{n}) & = \frac{d_{1}}{d_{1}-1}\frac{\min\{d_{1},d_{2}\}}{d_{1}}\sum_{n=2}^{\infty}\frac{1}{n-1}\Big(1-d_{1}^{-(n-1)}\Big) \geq \frac{\min\{d_{1},d_{2}\}}{d_{1}}\sum_{n=1}^{\infty}\frac{1}{n} = \infty
\end{align}
and
\begin{align}
\label{eq:ltw_ratios}
\beta_{n}\Big(\lambda^{(n,d_{1})}_{1}\Big)/\lambda^{(n,d_{1})}_{1} & = \lambda^{(n,d_{2})}_{1}/\lambda^{(n,d_{1})}_{1} = \frac{d_{1}-1}{d_{2}-1}\left(\frac{1-d_{2}^{-(n-1)}}{1-d_{1}^{-(n-1)}}\right).
\end{align}
Thus, we find that $x = \tfrac{d_{1}-1}{d_{2}-1}\in r_{\infty}(\M)$. This implies that $r_{\infty}(\M)=\RR_{+}$ because the asymptotic ratio set is closed \cite{araki1968factors}.
\end{proof}

As an immediate consequence, we obtain the characterization of the limit of the simplified LTW family.

\begin{theorem}\label{thm:ltw_type}
    Let $\M$ be the factor arising from the limit of the LTW family as in \eqref{eq:monopartite_LTW_limit}.
    Then $\M$ is the hyperfinite type $\III_1$ factor. In particular, it is a universal monopartite embezzler.
\end{theorem}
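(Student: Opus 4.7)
The plan is to identify $\M$ via the Araki-Woods/Connes-Haagerup classification of hyperfinite factors and then obtain universal embezzlement from the type $\III_1$ classification of universal embezzlers established in \cite{long_paper}. The heavy lifting has already been done in \cref{lem:ltw_ratios}, so the proof is mostly assembly.

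First, I would verify that $\M$ is an ITPFI factor. By construction, $\M$ is the incomplete infinite tensor product of the finite-dimensional type $\I$ algebras $M_d(\CC)^{\ox n}$ with respect to the product states $\omega^{(n)}_{d}$, which is exactly the Araki-Woods ITPFI setup. The eigenvalue formula \eqref{eq:ltw_spec} shows that each $\omega^{(n)}_{d}$ has strictly positive spectrum, hence the reference state on $\M$ has trivial central support and a standard Powers-type argument ensures $\M$ is a factor. In particular, $\M$ is hyperfinite.

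Second, I would apply \cref{lem:ltw_ratios} to conclude $r_{\infty}(\M) = \RR_+$. By Araki-Woods \cite{araki1968factors}, this forces $\M$ to have type $\III_1$. Combined with hyperfiniteness and the Connes-Haagerup uniqueness theorem for injective $\III_1$ factors \cite{connes1973classIII,haagerup_uniqueness_1987}, this identifies $\M$ uniquely up to isomorphism as \emph{the} hyperfinite type $\III_1$ factor.

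Third, for universality, I would invoke the classification established in \cite{long_paper} (and cited in the introduction): a von Neumann algebra is a universal monopartite embezzler precisely when it is a type $\III_1$ factor, which is immediate from the previous paragraph. Alternatively, one can give a self-contained argument by combining \eqref{eq:approx_cat_err} (applied to $\psi=\bra{1}\placeholder\ket{1}$ and $\varphi=\tfrac{1}{d}\tr_{d}$) with \cref{lem:mbz_traces} to see that the restrictions $(\omega^{(N,D)})_{N,D}$ form a consistent monopartite embezzling family; the monopartite analog of \cref{thm:consistent-to-state} yields a monopartite embezzling state on $\M$, and \cite[Thm.~88]{long_paper} upgrades this to universality once the type $\III_1$ property is known.

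The main obstacle, such as it is, lies not in any single step but in cleanly separating the two assertions of the theorem: the ratio-set computation identifies the \emph{algebra} as hyperfinite type $\III_1$, but the \emph{state-level} universality statement must be passed either through the classification result of \cite{long_paper} or through the family-to-state bridge provided by \cref{lem:mbz_traces} and (the monopartite analog of) \cref{thm:consistent-to-state}. Neither route presents a genuine difficulty, but the logical dependence on earlier results should be made explicit to avoid a circular invocation of ``type $\III_1$ implies universal embezzler'' while constructing the canonical type $\III_1$ example.
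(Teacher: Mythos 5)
Your proposal is correct and follows essentially the same route as the paper: the paper presents \cref{thm:ltw_type} as an immediate consequence of \cref{lem:ltw_ratios} together with the Araki--Woods/Connes--Haagerup characterization ($\M$ is the hyperfinite type $\III_1$ factor iff $r_\infty(\M)=\RR_+$), and obtains universality from the type $\III_1$ classification of universal embezzlers in \cite{long_paper}. Your remark about keeping the algebra-level identification and the state-level universality logically separate is a fair point of presentation, but it does not reflect any gap in the paper's argument.
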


\appendix

\section{Split bipartite systems admit no embezzling states}

In this appendix, we prove a no-go theorem for bipartite embezzlement in split bipartite systems.
In fact, we will rule out the possibility of embezzling a single pure entangled state.

Let $\Psi\in\CC^d\ox\CC^d$ and let $\omega$ be a bipartite state on the bipartite system $(\M_1,\M_2)$, i.e., $\omega$ is a normal state on $\M_1\vee\M_2$.
We say that \emph{$\Psi$ can be embezzled from $\omega$} (or that \emph{$\omega$ is $\Psi$-embezzling}) if, for every $\eps>0$, there exist unitaries $u_1\in M_d(\M_1)$ and $u_2\in M_d(\M_2)$ such that
\begin{equation}
    \norm{u_1u_2 (\omega\ox \bra{11}\placeholder\ket{11})u_1^*u_2^*-\omega \ox \bra\Psi\placeholder\ket\Psi}<\eps.
\end{equation}
Similarly, we say that a state $\psi$ on $M_d(\CC)$ can be (monopartitely) embezzled from a normal state $\omega$ a von Neumann algebra $\M$ if, for every $\eps>0$, there is a unitary $u\in M_d(\M)$ such that $\norm{u(\omega\ox\bra1\placeholder\ket1)u^*-\omega\ox\psi}<\eps$.

We show that in a split bipartite system, no normal state on $\M_1\vee\M_2$ is able to embezzle even a single entangled state vector $\Psi\in\CC^d\ox\CC^d$.

\begin{theorem}\label{thm:no-go}
    Let $(\M_1,\M_2)$ be a split bipartite system. Then, given an entangled state vector $\Psi\in\CC^d\ox\CC^d$, there does not exist a $\Psi$-embezzling, normal state on $\M_1\vee\M_2$.
\end{theorem}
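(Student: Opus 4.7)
The plan is to derive a contradiction by computing the marginal of the embezzled system on the auxiliary two-qudit subsystem $M_d(\CC)\ox M_d(\CC)$ and exploiting the tensor-product structure supplied by the split property. Fix a sequence of local unitaries $u_{1,n}\in\M_1\ox M_d(\CC)$ and $u_{2,n}\in\M_2\ox M_d(\CC)$ realizing $\Psi$-embezzling with error $\eps_n\to 0$. Since $u_{1,n}$ and $u_{2,n}$ commute and act on disjoint tensor slots, a direct calculation shows that the marginal of $u_{1,n}u_{2,n}(\omega\ox\bra{11}\placeholder\ket{11})u_{1,n}^*u_{2,n}^*$ on $M_d\ox M_d$ equals $\omega\circ(A_n\ox B_n)$, where $A_n\colon M_d(\CC)\to\M_1$ and $B_n\colon M_d(\CC)\to\M_2$ are the unital completely positive maps $A_n(c)=V_{1,n}^*(1\ox c)V_{1,n}$ and $B_n(c)=V_{2,n}^*(1\ox c)V_{2,n}$ built from the Stinespring isometries $V_{j,n}\xi:=u_{j,n}(\xi\ox\ket{1})$. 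The embezzling property then reads
\begin{equation*}
    \norm{\omega\circ(A_n\ox B_n)-\bra\Psi\placeholder\ket\Psi}_{M_d\ox M_d}<\eps_n.
\end{equation*}

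The clean case is when $\omega=\phi_1\ox\phi_2$ is a product state on $\M_1\ox\M_2$ (a faithful such state exists by the split property via Lemma~\ref{lem:product-state}): then $\omega\circ(A_n\ox B_n)=(\phi_1\circ A_n)\ox(\phi_2\circ B_n)$ is itself a normal \emph{product} (hence separable) state on $M_d\ox M_d$. Since the set of separable states on $M_d\ox M_d$ is norm-closed and contains no pure entangled state, it sits at a strictly positive norm-distance from $\bra\Psi\placeholder\ket\Psi$, contradicting the estimate above for sufficiently large $n$ and ruling out $\Psi$-embezzling by product states.

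To handle a general normal state $\omega$, I would pass to the standard form of $\M_1\ox\M_2$ on $L^2(\M_1)\ox L^2(\M_2)$, represent $\omega$ by its natural-cone vector $\Omega$, and apply the Powers--Stormer inequality to the embezzling estimate on $\M_1\vee\M_2\ox M_d(\CC)^{\ox 2}$ to lift it to norm convergence of natural-cone representatives in the standard form. Because $u_{j,n}\in\M_j\ox M_d(\CC)\subset\B(L^2(\M_j)\ox\CC^d)$ acts as a local unitary for the Hilbert-space bipartition $(L^2(\M_1)\ox\CC^d)\ox(L^2(\M_2)\ox\CC^d)$, the Schmidt coefficients across this cut are preserved by $u_{1,n}u_{2,n}$. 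Matching the Schmidt coefficients of $\Omega\ox\ket{11}$ (which coincide with those of $\Omega$) against those of $\Omega\ox\Psi$ (which are products of the Schmidt coefficients of $\Omega$ and of $\Psi$) forces the largest Schmidt coefficient of $\Psi$ to equal $1$, so $\Psi$ is a product vector, contradicting the entanglement assumption.

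I expect this last step to be the main obstacle: carefully controlling how natural-cone representatives behave under tensor products of standard forms (in particular the non-trivial standard form of $M_d(\CC)^{\ox 2}$), and verifying that functional convergence on the subalgebra $\M_1\vee\M_2\ox M_d(\CC)^{\ox 2}$ is genuinely strong enough to yield the required vector convergence across the full Alice/Bob bipartition of the GNS Hilbert space.
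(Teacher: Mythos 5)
Your strategy is sound, and its second half is genuinely different from the paper's. The paper's proof shares your setup: it passes to the tensor product of standard forms $\H_1\ox\H_2$ supplied by the split property, uses the inequality \eqref{eq:F-vdG} (your Powers--St{\o}rmer step) to convert the state estimate into a vector estimate for the natural-cone representatives, and observes that the resulting unitaries are local for the Hilbert-space bipartition. But it then concludes by invoking \cref{prop:typeIII}: the lifted vector $\Omega_\omega$ would be a $(\Psi\ox\bar\Psi)$-embezzling vector for the type I system $(\B(\H_1)\ox1,1\ox\B(\H_2))$, contradicting the type III necessity proved via the flow of weights and \cref{lem:no-periods}. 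You instead finish by comparing Schmidt spectra directly, and this works: the reduced density operator of $\Omega_\omega$ on $\B(\H_1)$ is a fixed trace-class operator with largest eigenvalue $\lambda_1>0$; local unitaries preserve the spectrum of the reduction, so the embezzling estimate forces $|\lambda_1-\lambda_1 p_1^2|\le 2\sqrt{\eps}$ for every $\eps>0$ (here $p_1$ is the largest eigenvalue of $\mathrm{red}(\Psi)$, and $p_1^2$ that of $\mathrm{red}(\Psi\ox\bar\Psi)$), whence $p_1=1$ and $\Psi$ is a product vector. This is more elementary than the paper's route --- no flow of weights needed --- though the paper's detour buys the stronger \cref{prop:typeIII}, which holds without the split assumption. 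Your product-state warm-up via separability is correct but covers only a special case, since a general normal state on a split system need not be (a limit of convex combinations of) product states.

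Two execution points you flag yourself, both resolved exactly as in the paper. First, the standard form of $M_d(\CC)^{\ox2}$ lives on $(\CC^d)^{\ox4}$, not $(\CC^d)^{\ox2}$, and the natural-cone representative of $\bra\Psi\placeholder\ket\Psi$ is $\Psi\ox\bar\Psi$; since $\Psi\ox\bar\Psi$ is entangled across the party cut iff $\Psi$ is, nothing changes. Second, the natural-cone representative of $u\rho u^*$ is $uJuJ\,\Omega_\rho$, not $u\Omega_\rho$, so your unitaries must be augmented to $u_{j,n}u_{j,n}'$ with $u_{j,n}'=J^{(d)}u_{j,n}J^{(d)}$; because $J$ factorizes over the split tensor decomposition, $u_{j,n}u_{j,n}'$ still acts only on $\H_j\ox(\CC^d)^{\ox2}$, so the Schmidt-preservation step survives intact.
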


As a byproduct of our proof, we obtain that any system which admits embezzlement of a single entangled state must be type III.

\begin{proposition}\label{prop:typeIII}
Let $(\M_1,\M_2)$ be a bipartite system together with a normal state $\omega$ on $\M_1\vee\M_2$, and let $\Psi\in\CC^d\ox\CC^d$ be an entangled state vector. For $j=1,2$, let $z_{j}\in\M_j$ be the central support projection of the marginal $\omega_{j}$.
Then, if $\omega$ is $\Psi$-embezzling, $z_{j}\M_j$ must be a type $\III$ von Neumann algebra.    
\end{proposition}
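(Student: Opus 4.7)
The plan is to reduce the bipartite $\Psi$-embezzling condition to monopartite embezzlement on the (hypothetically non-trivial) semifinite summand of $z_j\M_j$ and then to obstruct such embezzlement using the Fack--Kosaki inequality for generalized singular value functions.

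First, I would take a partial trace over the ancilla of party $2$. Since $u_1\in\M_1\ox M_d$ commutes with $u_2\in\M_2\ox M_d$, tracing out the second $M_d(\CC)$ in the defining inequality $\norm{u_1u_2(\omega\ox\kettbra{11})u_1^*u_2^*-\omega\ox\kettbra{\Psi}}<\eps$ yields $\norm{u_1(\omega_1\ox\kettbra{1})u_1^*-\omega_1\ox\rho_\Psi}<\eps$, where $\rho_\Psi:=\tr_2\kettbra{\Psi}$ has rank $\geq 2$ because $\Psi$ is entangled. Restricting to $z_j\M_j$, I may assume $\omega_j$ has central support $1$, and suppose for contradiction that $z_j\M_j$ is not of type $\III$. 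Then I would pick the non-zero central projection $z\in z_j\M_j$ onto its semifinite direct summand; central support $1$ gives $\omega_j(z)>0$. Because $z\ox 1$ is central in $(z_j\M_j)\ox M_d$, the unitary $u_1$ compresses to a unitary in $(z\M_j)\ox M_d$, and restricting the functional inequality and normalizing would show that $\tilde\omega:=\omega_j(z)^{-1}\omega_j\upharpoonright z\M_j$ monopartitely embezzles $\rho_\Psi$ from the semifinite algebra $\N:=z\M_j$ with arbitrary precision.

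Next, I would invoke the Fack--Kosaki inequality. Fixing a faithful normal semifinite trace $\tau$ on $\N$ and writing $\tilde\omega=\tau(\tilde h\,\placeholder\,)$ with $\tilde h\geq 0$ and $\tau(\tilde h)=1$, the identification $\N_*\cong L^1(\N,\tau)$ makes the predual norm on $\N\ox M_d$ coincide with the trace-norm of densities with respect to $\tau\ox\tr$, and Fack--Kosaki gives, for every unitary $u\in\N\ox M_d$,
\begin{equation*}
    \norm{u(\tilde h\ox\kettbra{1})u^*-\tilde h\ox\rho_\Psi}_1\;\geq\;\int_0^\infty\abs{\mu_t(\tilde h\ox\kettbra{1})-\mu_t(\tilde h\ox\rho_\Psi)}\,\dd t,
\end{equation*}
with $\mu_t$ the unitary-invariant generalized singular value function. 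When $\tau(\supp\tilde h)<\infty$, strict positivity of the right-hand side is immediate: the supports of $\tilde h\ox\kettbra{1}$ and $\tilde h\ox\rho_\Psi$ have $(\tau\ox\tr)$-trace $\tau(\supp\tilde h)$ and $d\cdot\tau(\supp\tilde h)$, respectively (using that $\rho_\Psi$ is faithful on $\CC^d$), so $\mu_t(\tilde h\ox\kettbra{1})$ vanishes on $(\tau(\supp\tilde h),d\cdot\tau(\supp\tilde h))$ while $\mu_t(\tilde h\ox\rho_\Psi)$ is strictly positive there. When $\tau(\supp\tilde h)=\infty$, agreement of the two spectral scales almost everywhere would force $f(t):=\mu_t(\tilde h)$ to satisfy a dilation functional equation (e.g.\ $f(t)=d^{-1}f(t/d)$ in the maximally mixed case), whose only non-increasing solutions are multiples of $1/t$ and hence not in $L^1$, contradicting $\tau(\tilde h)=1$. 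Denoting the resulting strictly positive constant by $c$, the embezzling hypothesis is violated for $\eps<c\cdot\omega_j(z)$, forcing $z_j\M_j$ to be of type $\III$.

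The main obstacle is the verification of strict positivity of the Fack--Kosaki lower bound when $\tau(\supp\tilde h)=\infty$, where both spectral scales are everywhere positive and a direct support-size comparison is unavailable; here the dilation functional-equation analysis sketched above is the essential ingredient, whose extension to non-uniform $\rho_\Psi$ follows from a Mellin-transform analysis of the renewal-type equation $F(y)=\sum_k F(y/p_k)$.
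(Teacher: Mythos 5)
Your overall strategy is sound and, up to the reduction step, genuinely different from the paper's. The paper makes the same first move (the marginal $\omega_j$ monopartitely embezzles the mixed state $\psi=\tr_2\kettbra{\Psi}{\Psi}$, which is mixed precisely because $\Psi$ is entangled), but then passes to \cref{cor:typeIII}: via the flow of weights, $\psi$-embezzlement forces the spectral state to satisfy $\sum_i p_i\,\hat\omega\circ\theta_{\log p_i}=\hat\omega$, which \cref{lem:no-periods} rules out for any probability measure on $\RR$ unless $\psi$ is pure; non-factors are handled by disintegration. You instead restrict to the semifinite central summand, write $\tilde\omega=\tau(\tilde h\,\placeholder)$, and obstruct embezzlement via unitary invariance of generalized singular values together with a noncommutative Mirsky-type inequality (this inequality does need a citation, e.g.\ Hiai--Nakamura, but it is available). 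This buys a self-contained, trace-level argument that avoids the flow-of-weights machinery imported from the companion paper; the cost is that your decisive step --- ruling out $F(s)=\sum_k F(s/p_k)$ for the distribution function $F(s)=\tau(\chi_{(s,\infty)}(\tilde h))$ --- is exactly the multiplicative incarnation of \cref{lem:no-periods}, so the same combinatorial input reappears in disguise.

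Two points need repair. First, $\rho_\Psi=\tr_2\kettbra{\Psi}{\Psi}$ need not be faithful on $\CC^d$; entanglement only guarantees Schmidt rank $r\ge 2$. This is harmless (the supports then have traces $\tau(\supp\tilde h)$ versus $r\,\tau(\supp\tilde h)$, still distinct), but the faithfulness claim should be dropped. Second, and more substantively, the Mellin-transform argument in the case $\tau(\supp\tilde h)=\infty$ does not close as stated: for an integrable decreasing $F$ with $F(0^+)=\infty$, the Mellin transform is only guaranteed to converge on the line $\Re z=1$, which is exactly where the symbol $1-\sum_k p_k^z$ vanishes, so you cannot conclude $\hat F=0$ on an open set. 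Your auxiliary claim that the only non-increasing solutions of $f(t)=d^{-1}f(t/d)$ are multiples of $1/t$ is also false (any $\log d$-periodic modulation of $1/t$ solves it), though all nonzero solutions are indeed non-integrable. The clean fix is elementary: iterate the equation $N$ times to get $F(s)=\sum_\alpha F(s/p^\alpha)$ with $p^\alpha=p_{\alpha_1}\cdots p_{\alpha_N}$, integrate over a fixed interval $(a,b)$, and use $\sum_\alpha p^\alpha=1$ together with $F\in L^1$ and $p^\alpha\le p_{\max}^N\to0$ to push all mass to infinity, forcing $F\equiv0$ and contradicting $\tau(\tilde h)=1$. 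This is word-for-word the proof of \cref{lem:no-periods} after the substitution $s=\e^{x}$; with that step made explicit, your argument is complete.
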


Recall that if $\M\subset \B(\H)$ is a standard representation, then each normal state $\omega$ has a unique representative $\Omega_\omega$ in the associated positive cone $\P$. 
Moreover, we have:
\begin{align}\label{eq:F-vdG}
\norm{\Omega_\omega - \Omega_\psi}^2 \leq \norm{\omega-\psi} \leq \norm{\Omega_\omega+\Omega_\psi}\norm{\Omega_\omega - \Omega_\psi}.
\end{align}
The standard representation of $\M\ox M_d(\CC) = M_d(\M)$ has Hilbert $\H \ox \CC^d \ox \CC^d$ and modular conjugation $J\up d( \sum \Phi_{ij}\ox \ket{ij} ) = \sum J\Phi_{ij} \ox \ket{ji}$ (see \cite[Lem.~2]{long_paper} which also contains a description of the positive cone) and the representation of $x=[x_{k,l}]\in M_d(\M)$ takes the form
\begin{align}
 x=   \sum_{k,l} x_{k,l} \ox |k\rangle\langle l|\ox 1,\qquad x_{k,l}\in \M,
\end{align}
and its modular conjugate is given by
\begin{align}
    j\up d(x) = \sum_{k,l}j_i(x_{k,l})\ox 1\ox |k\rangle\langle l|. 
\end{align}

\begin{lemma}\label{lem:no-periods}
    Denote by $\theta_x$, $x\in\RR$, the action of $\RR$ on itself by translations.
    Let $1> p_1\ge ... \ge p_d \ge 0$ with $\sum_{i=1}^d p_i = 1$ and let $P$ be a Borel probability measure on $\RR$. Then
    \begin{align}\label{eq:no-periods}
        P \neq \sum_{i=1}^d p_i\, \theta_{\log p_i}(P),
    \end{align}
    where $\theta_t(P)$ is the measure defined by $\theta_t(P)(A) = P(\theta_{-t}(A))$.
\end{lemma}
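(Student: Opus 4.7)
The plan is to argue by contradiction via the Fourier transform of probability measures on $\RR$. Suppose that $P = \sum_{i=1}^d p_i\, \theta_{\log p_i}(P)$, with the convention that any term with $p_i=0$ is dropped. Denoting the characteristic function of $P$ by $\hat P(\xi) = \int_\RR e^{i\xi x}\, dP(x)$ and using that translation acts as $\widehat{\theta_t P}(\xi) = e^{i\xi t}\hat P(\xi)$, taking Fourier transforms of both sides yields
\[
\hat P(\xi)\,\bigl(1 - f(\xi)\bigr) = 0, \qquad f(\xi) := \sum_{i : p_i > 0} p_i\, e^{i\xi \log p_i}.
\]
Hence $\hat P(\xi) = 0$ for every $\xi$ with $f(\xi)\neq 1$.

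Next, I would analyze the set $S = \{\xi \in \RR : f(\xi)=1\}$. Since $f(\xi)$ is a convex combination (with strictly positive weights $p_i$) of points on the unit circle, one has $\lvert f(\xi)\rvert \le 1$, with $f(\xi) = 1$ if and only if $e^{i\xi \log p_i} = 1$ for every $i$ with $p_i > 0$, i.e., $\xi \log p_i \in 2\pi\ZZ$ for all such $i$. Crucially, $p_1 > 0$ (because $p_1$ is the largest and the $p_i$ sum to $1$) and $p_1 < 1$ by hypothesis, so $\log p_1 \neq 0$. Therefore $S$ is contained in the discrete subgroup $\tfrac{2\pi}{\lvert\log p_1\rvert}\ZZ$ of $\RR$, and in particular $S^c$ is open and dense in $\RR$.

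To finish, since $\hat P$ vanishes on the dense open set $S^c$ and is continuous on all of $\RR$, we would conclude that $\hat P \equiv 0$, which contradicts $\hat P(0) = P(\RR) = 1$. The main (admittedly minor) point that deserves care is the equality condition in $\lvert f(\xi)\rvert \le 1$ for a convex combination of unit-modulus complex numbers with strictly positive weights, which forces the summands to be simultaneously equal to $1$; once this is established, the rest is routine Fourier analysis on $\RR$.
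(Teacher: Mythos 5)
Your proof is correct, but it follows a genuinely different route from the paper's. The paper argues directly in measure-theoretic terms: writing the right-hand side as a convolution $\mu*P$ with $\mu=\sum_i p_i\delta_{\log p_i}$, it iterates the assumed identity $N$ times to get $P=\mu^{*N}*P$, and uses that all shifts $\log p^\alpha=\sum_k\log p_{\alpha_k}$ are at most $-N\min_i|\log p_i|$, so for large $N$ every translate $\theta_{\log p^\alpha}(P)$ carries almost no mass on a fixed interval $[-n,n]$ containing most of the mass of $P$ --- a contradiction. Your argument instead passes to characteristic functions: from $\hat P=f\hat P$ with $f=\hat\mu$, you observe that $\{f=1\}$ is contained in the discrete group $\tfrac{2\pi}{|\log p_1|}\ZZ$ (using $0<p_1<1$ and the equality case of $|f|\le1$ for a strictly positive convex combination of unimodular numbers, which you correctly justify via real parts), so $\hat P$ vanishes on a dense open set and hence everywhere by continuity, contradicting $\hat P(0)=1$. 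Both proofs are complete; the paper's is more elementary and makes visible the "escape of mass to $-\infty$" mechanism (exploiting that all $\log p_i$ have the same sign), while yours is shorter once basic Fourier facts are granted and is more robust --- it only needs that some $\log p_i$ with $p_i>0$ is nonzero, not that the shifts share a sign, and it is essentially the Choquet--Deny argument for convolution-invariant probability measures on $\RR$.
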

\begin{proof}
    For $\eps>0$, let $n>0$ be such that $P([-n,n])\ge 1-\eps$.
    Let $N>0$ be such that $N \cdot \min |\log p_i| > 2n$.
    For a multi-index $\alpha=(\alpha_1,...,\alpha_{N}) \in [d]^N$, set $p^\alpha = \prod_{i=1}^{N} p_{\alpha_i}$.
    For the sake of contradiction, assume that \eqref{eq:no-periods} holds with equality. Then we can iterate the assumed equality $N$ times to get
    \begin{equation*}
        P = \sum_{\alpha_1,\ldots\alpha_N=1}^d p^\alpha  \, \theta_{\log p^\alpha}(P).
    \end{equation*}
    By assumption, $\theta_{-\log p^{\alpha}}([-n,n])$ is disjoint from $[-n,n]$.
    Hence $P(\theta_{-\log p^\alpha}([-n,n]))\le\eps$ and we get
    \begin{align*}
        1-\eps \le P([-n,n]) &= \sum_{\alpha_1,\ldots\alpha_N=1}^d p^\alpha P(\theta_{-\log p^\alpha}([-n,n])) 
        \le \sum_{\alpha_1,\ldots\alpha_N=1}^d p^\alpha \eps =\eps,
    \end{align*}
    which is a contradiction for sufficiently small $\eps>0$.
\end{proof}
\begin{corollary}\label{cor:typeIII}
    Let $\M$ be  a von Neumann algebra and let $\omega$ be a state on $\M$ with central support projection $z$.
    If a single mixed state can be embezzled from $\omega$, then $z\M$ is a type III von Neumann algebra.
\end{corollary}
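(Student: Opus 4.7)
The plan is to argue by contradiction: assuming that $z\M$ is semifinite, the invariance of the trace under unitary conjugation, combined with the embezzling hypothesis, will force a measure-theoretic identity directly ruled out by \cref{lem:no-periods}. First, replace $\M$ by $z\M$ so that $\omega$ has central support $1$, and suppose for contradiction that $\M=z\M$ is semifinite; fix a normal faithful semifinite trace $\tau$ on $\M$. Write $\omega = \tau(\rho\,\placeholder)$ with $\rho\geq0$ affiliated with $\M$ and $\tau(\rho)=1$, and introduce the Borel probability measure $P$ on $\RR$ by
\begin{equation*}
    P(B) := \tau\bigl(\rho\cdot\mathbf 1_B(\log\rho)\bigr), \qquad B\subseteq\RR \text{ Borel},
\end{equation*}
where $\log\rho$ is interpreted on the support of $\rho$ (so $P(\RR)=\tau(\rho)=1$). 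This is the log-spectral distribution of $\omega$ relative to $\tau$.

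Now let $\psi=\sum_{i=1}^d p_i \kettbra i$ be a mixed state on $M_d(\CC)$ that can be embezzled from $\omega$, with $p_1\geq\cdots\geq p_d\geq 0$ and $\sum_i p_i=1$; mixedness yields $p_1<1$. Set $\sigma:=\mathrm{diag}(p_1,\ldots,p_d)$ and $e_1:=\kettbra 1$. For each $\eps>0$, choose a unitary $u\in M_d(\M)$ with $\norm{u(\omega\ox\bra 1\placeholder\ket 1)u^* - \omega\ox\psi}<\eps$; on the level of densities with respect to $\tau\ox\tr$ this reads $\norm{u(\rho\ox e_1)u^*-\rho\ox\sigma}_1<\eps$ in $L^1(\M\ox M_d(\CC),\tau\ox\tr)$. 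Since unitary conjugation preserves the trace, the spectral distribution of $\rho\ox e_1$ weighted by the operator itself coincides with that of $u(\rho\ox e_1)u^*$. Letting $\eps\to 0$ and invoking $L^1$-continuity of the self-weighted spectral distribution (e.g., via Powers-St\o rmer's inequality $\norm{a_n^{1/2}-a^{1/2}}_2^2\leq\norm{a_n-a}_1$ combined with continuity of the functional calculus), one obtains, for every $f\in C_b(\RR)$,
\begin{equation*}
   \tau\bigl(\rho f(\rho)\bigr) = \tau\ox\tr\bigl((\rho\ox\sigma)f(\rho\ox\sigma)\bigr) = \sum_{i=1}^d p_i\,\tau\bigl(\rho f(p_i\rho)\bigr).
\end{equation*}

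The change of variables $t=\log x$ converts this identity into $\int g\,dP = \sum_i p_i\int g\,d\theta_{\log p_i}(P)$ for every $g\in C_b(\RR)$, equivalently $P = \sum_{i=1}^d p_i\,\theta_{\log p_i}(P)$ as Borel probability measures on $\RR$. Since $p_1<1$, this directly contradicts \cref{lem:no-periods}, so $\M=z\M$ cannot be semifinite and must be of type $\III$. The main technical obstacle is justifying the $L^1$-continuity step when $\rho$ is unbounded (which can occur in the semifinite-but-infinite case); this can be managed by truncating $\rho$ to the bounded cut-offs $\rho\,\mathbf 1_{[0,K]}(\rho)$, establishing the identity on each truncation, and passing $K\to\infty$ by dominated convergence. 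A minor additional subtlety is that $\omega$ need not be faithful on $z\M$, but the argument above remains intact since $\rho$ is allowed a nontrivial kernel and enters the measure $P$ only as a weight.
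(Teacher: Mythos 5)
Your core computation is a genuinely different and more self-contained route than the paper's: instead of invoking the flow-of-weights identity from \cite{long_paper} (the paper cites \cite[Cor.~47]{long_paper} to get $\sum_i p_i\,\hat\omega\circ\theta_{\log p_i}=\hat\omega$ for the spectral state on the flow of weights), you re-derive exactly the needed measure identity in the semifinite case by hand, using the trace density $\rho$ and unitary invariance of the self-weighted spectral distribution. Since the goal is only to \emph{rule out} semifiniteness, this is legitimate, and the reduction to \cref{lem:no-periods} is the same as in the paper. The $L^1$-continuity step you flag is indeed the technical crux, but it is fixable along the lines you indicate (e.g., via the Lipschitz bound $|\tau(g(a))-\tau(g(b))|\le \norm{g'}_\infty\norm{a-b}_1$ for $g(x)=x\,h(\log x)$ with $h\in C^1_b(\RR)$, together with your truncation), so I would not count it as a gap.

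There is, however, a genuine logical gap at the very end: you conclude ``$z\M$ cannot be semifinite and must be of type $\III$.'' For a factor these are equivalent, but for a general von Neumann algebra the negation of ``type $\III$'' is not ``semifinite''; it is ``has a nonzero semifinite central summand.'' Your contradiction only excludes the case where \emph{all} of $z\M$ is semifinite, so as written you have proved a strictly weaker statement than the corollary. This is why the paper's proof first treats factors and then passes to the general case via a disintegration $\M=\int_X^\oplus\M\up{x}\,d\mu(x)$, showing the identity holds $\mu$-almost everywhere. The fix in your framework is short but must be stated: if $z\M=\M_0\oplus\M_1$ with $\M_0\neq0$ semifinite, write $\omega=\lambda\,\omega_0\oplus(1-\lambda)\,\omega_1$; since $\omega$ has central support $z$, one has $\lambda>0$ and $\omega_0$ has central support $1$ in $\M_0$. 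Every unitary $u\in M_d(z\M)$ decomposes as $u_0\oplus u_1$ with $u_0\in M_d(\M_0)$ unitary, and restricting functionals to $M_d(\M_0)$ shows $\norm{u_0(\omega_0\ox\bra1\placeholder\ket1)u_0^*-\omega_0\ox\psi}\le\lambda^{-1}\norm{u(\omega\ox\bra1\placeholder\ket1)u^*-\omega\ox\psi}$, so $\psi$ can be embezzled from $\omega_0$ on the semifinite algebra $\M_0$. Running your trace argument on $\M_0$ then yields the contradiction and completes the proof.
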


\begin{proof}
    We may assume that $\omega$ has central support $z=1$.
    Following \cite{long_paper}, we consider the flow of weights $(\A,\theta)$ of $\M$, which is an abelian von Neumann algebra $\A$ together with a point-ultraweakly continuous flow $\theta :\RR\curvearrowright\A$.
    By \cite[Cor.~47]{long_paper}, the assumption that $\omega$ is $\psi$-embezzling implies that the spectral state $\hat \omega$, which is a normal state on $\A$, satisfies the identity
    \begin{equation}\label{eq:convolution}
        \sum p_i \,\hat\omega\circ\theta_{\log p_i} = \hat\omega,
    \end{equation}
    where $1\ge p_1\ge p_2\ldots \ge p_d\ge 0$ are the eigenvalues of $\psi$ (repeated according to their multiplicity).
    We start by showing the claim under the assumption that $\M$ is a factor.
    If $\M$ were semifinite, the flow of weights would be isomorphic to translations on $\RR$ and, by \cref{lem:no-periods}, this gives $p_1=1$ and $p_2,..., p_d =0$. However, since $\Psi$ is entangled, this means that $\M$ cannot be semifinite and, hence, has type III.

    We now come to the case where $\M$ is a general von Neumann algebra. 
    Consider the disintegration $\M=\int_X^\oplus \M\up x\,d\mu(x)$ into a direct integral of factors $\M\up x$. 
    In this case, the flow of weights is given by a direct integral $(\A,\theta)=\int^\oplus_X (\A\up x, \theta\up x)\,d\mu(x)$, where $(\A\up x,\theta\up x)$ is the flow of weights of $\M\up x$ \cite[Prop.~8.1]{haagerup1990equivalence}.
    If $\omega = \int^\oplus_X p_x\omega_x \,d\mu(x)$ is the disintegration of $\omega$, where $\omega_x$ is a measurable field of states and $p_x\in L^1(X,\mu)$ is a faithful probability density, then the spectral state $\hat \omega$ decomposes as $\hat \omega = \int^\oplus_X p_x \,\hat\omega_x\circ \theta_{\pm \log p_x}\up x \,d\mu(x)$.
    Therefore, \eqref{eq:convolution} implies that
    \begin{equation}
        \sum p_i\, \hat\omega_x \circ\theta_{\log p_i}\up x = \hat\omega_x
    \end{equation}
    holds $\mu$-almost everywhere.
    As before, this is impossible unless $\mu$-almost all $\M\up x$ are type $\III$.
\end{proof}
\begin{proof}[Proof of \cref{prop:typeIII}]
    If $\omega$ is $\Psi$-embezzling, then $\omega_j$ is (monopartitely) $\psi$-embezzling where $\psi$ is the marginal state induced by $\Psi\in \CC^d\ox \CC^d$ on $M_d(\CC)$.
    Since $\psi$ is mixed if and only if $\Psi$ is entangled, the result follows from \cref{cor:typeIII}.
\end{proof}

\begin{proof}[Proof of \cref{thm:no-go}]
    We may assume $\M_1\vee\M_2=\M_1\ox\M_2$.
    Let $(\H_j,\P_j,J_j)$ be the standard form of $\M_j$, $j=1,2$ \cite{haagerup_standard_1975}. 
    Then the induced representation $\M_1\ox \M_2\subset\B(\H_1\ox\H_2)$ is also standard with $J = J_1\ox J_2$ \cite[Prop.~8.1]{stratila2020modular}.
    Let $\Omega_\omega\in \H_1\ox\H_2$ be the implementing vector of the state $\omega$ on $\M_1\ox\M_2$.
    The standard representation of $\M_1\ox \M_2 \ox M_d(\CC)\ox M_d(\CC)$ is given on the Hilbert space $\H_1\ox \H_2 \ox (\CC^d)^{\ox 4}$ with modular conjugation $\tilde j =  j_1\up d\ox j_2\up d$ (under appropriate re-ordering of tensor factors).
    The implementing vector of the state $\omega\ox \bra\Psi\placeholder\ket\Psi$ on $\M_1\ox \M_2\ox M_d(\CC)^{\ox 2}$ is
    \begin{equation}
        \Omega_\omega \ox \Psi\ox \bar\Psi \in \H_1\ox \H_2\ox (\CC^d)^{\ox4},
    \end{equation}
    where $\bar\Psi = \sum \bar \Psi_{ij} \ket{ij}$ if $\Psi=\sum \Psi_{ij}\ket{ij}$.
    Note that the vector $\Psi\ox\bar\Psi \in (\CC^d)^{\ox4}$ is entangled for the bipartite system corresponding to the parties $1$ and $2$.

    Next, we will argue that the assumptions imply that the vector $\Omega_\omega$ is a $(\Psi\ox\bar\Psi)$-embezzling bipartite pure state on the bipartite system $(\B(\H_1)\ox1,1\ox\B(\H_2))$.
    If unitaries $(u_1,u_2)\in M_d(\M_1)\times M_d(\M_2)$ are chosen such that $(u_1\ox u_2)(\omega\ox \bra{11}\placeholder\ket{11})(u_1^*\ox u_2^*)\approx \omega\ox \bra\Psi\placeholder\ket\Psi$, then \eqref{eq:F-vdG} gives
    \begin{equation}
        (u_1 u_1'\ox u_2 u_2') (\Omega_\omega\ox \ket{1111}) \approx \Omega_\omega\ox\Psi \ox \bar\Psi,
    \end{equation}
    where  $u_i' = j\up d(u_i) \in M_d(\M_i)'$.
    Setting $v_A= u_1u_1' \in \B(\H_1)\ox M_d(\CC)^{\ox 2}$ and $v_B = u_2u_2'\in \B(\H_2)\ox M_d(\CC)^{\ox 2}$, we obtain local unitaries that embezzle the entangled state $\Psi \ox \bar \Psi$ from the bipartite state $\Omega_\omega$ on the bipartite system $(\B(\H_1)\ox1,1\ox\B(\H_2))$.
    This is a contradiction with \cref{prop:typeIII}.
\end{proof}

\phantomsection
\addcontentsline{toc}{section}{References}

\fussy
\emergencystretch=1.4em
\printbibliography

\end{document}